\def\baselinestretch{1.1}
\newtheorem{thm}{Theorem}[section]
\newtheorem{dfn}[thm]{Definition}
\newtheorem{prop}[thm]{Proposition}
\newtheorem{exmpl}[thm]{Example}
\newtheorem{obs}[thm]{Remark}
\def\beq{\begin{equation}}
\def\eeq{\end{equation}}
\def\bea{\begin{eqnarray}}
\def\eea{\end{eqnarray}}
\def\beann{\begin{eqnarray*}}
\def\eeann{\end{eqnarray*}}
\def\ben{\begin{enumerate}}
\def\een{\end{enumerate}}
\def\bit{\begin{itemize}}
\def\eit{\end{itemize}}
\newcommand\restr[2]{{
  \left.\kern-\nulldelimiterspace 
  #1 
  \right|_{#2} 
}}
\newcommand*{\transp}[2][-3mu]{\ensuremath{\mskip1mu\prescript{\smash{\mathrm t\mkern#1}}{}{\mathstrut#2}}}%
\newcommand{\R}{\mathbb{R}}
\newcommand{\C}{\mathcal{C}}
\renewcommand{\d}{\mathrm{d}}
\renewcommand{\L}{\mathcal{L}}
\newcommand{\F}{\mathcal{F}}
\renewcommand{\H}{\mathcal{H}}
\newcommand{\df}{\Omega}
\newcommand{\Cinfty}{\mathscr{C}^\infty}
\newcommand{\T}{\mathrm{T}}
\newcommand{\Lie}{\mathscr{L}}
\newcommand{\W}{\mathcal{W}}
\newcommand{\X}{\mathfrak{X}}
\newcommand{\Reeb}{\mathcal{R}}
\newcommand{\parder}[2]{\frac{\partial #1}{\partial #2}}
\newcommand{\dparder}[2]{\dfrac{\partial #1}{\partial #2}}
\DeclareMathOperator{\rk}{rank}
\def\d{\mathrm{d}}
\let\ds\displaystyle
\title{\huge\sffamily
\vskip -10mm
Skinner--Rusk formalism for \texorpdfstring{$k$}--contact systems}
\author{\Large\sffamily 
Xavier Gr\`acia%
\thanks{{\bf e}-{\it mail}:
xavier.gracia@upc.edu. (ORCID: 0000-0003-1006-4086).}\:,
Xavier Rivas%
\thanks{{\bf e}-{\it mail}:
xavier.rivas@upc.edu. (ORCID: 0000-0002-4175-5157).}\;  
and   
Narciso Rom\'an-Roy%
\thanks{{\bf e}-{\it mail}:
narciso.roman@upc.edu. (ORCID: 0000-0003-3663-9861).} 
\\[1ex]
\normalsize\itshape\sffamily 
Department of Mathematics,
Universitat Polit\`ecnica de Catalunya,
Barcelona, Catalonia, Spain
}
\date{\sffamily 1 February 2022}
\begin{document}

\maketitle

\leavevmode
\vadjust{\kern -15mm}
\begin{abstract}
\noindent
In previous papers, a geometric framework has been developed 
to describe non-conservative field theories 
as a kind of modified Lagrangian and Hamiltonian field theories.
This approach is that of $k$-contact Hamiltonian systems,
which is based on the $k$-symplectic formulation of field theories
as well as on contact geometry.
In this work we present the Skinner--Rusk unified setting
for these kinds of theories, which encompasses both 
the Lagrangian and Hamiltonian formalisms into a single picture.
This unified framework is specially useful when dealing with singular systems,
since: (i) it incorporates in a natural way the second-order condition 
for the solutions of field equations,
(ii) it allows to implement the Lagrangian and Hamiltonian constraint algorithms
in a unique simple way, and (iii) it gives the Legendre transformation,
so that the Lagrangian and the Hamiltonian formalisms are obtained straightforwardly.
We apply this description to several interesting physical examples:
the damped vibrating string,
the telegrapher's equations,
and Maxwell's equations with dissipation terms.
\end{abstract}

\noindent\textbf{Keywords:}
classical field theory, dissipation, Lagrangian formalism, Hamiltonian formalism, Skinner--Rusk formalism, contact manifold, $k$-symplectic structure, $k$-contact structure.

\noindent\textbf{MSC\,2020:}
70S05;
35Q61, 35R01, 53C15, 53D10, 53Z05, 58A10, 70G45, 70H45.




\medskip
\setcounter{tocdepth}{2}
{
\def\baselinestretch{1}
\small
\def\addvspace#1{\vskip 1pt}
\parskip 0pt plus 0.1mm
\tableofcontents
}

\newpage

\section{Introduction}

The study of non-conservative or dissipative systems in physics
and other branches of knowledge has been a subject of renewed interest in recent years 
with the integration of contact geometry to this end
\cite{Banyaga2016,Bravetti2017,Bravetti2018,BCT-2017,CG-2019,CCM-2018,DeLeon2019,
DeLeon2016b,GGMRR-2019b,Geiges2008,Goto-2016,Kholodenko2013,Lainz2018,LIU2018,RMS-2017}.
In particular, a geometric framework which is based on the contact geometry
and the $k$-symplectic setting of field theories has been recently introduced 
to describe classical field theories with dissipation.
The notion of $k$-contact Hamiltonian system was introduced in
\cite{GGMRR-2019}
and was used to describe several PDE's of interest. 
This was later applied to Lagrangian field theory \cite{GGMRR-2020}.

Nevertheless, in order to deal with non-regular systems more efficiently, 
a mixed formalism combining in a single description the 
Lagrangian and the Hamiltonian formalisms was developed,
with a phase space described by velocity as well momentum coordinates.
This is the so-called {\sl Skinner--Rusk} or {\sl unified formalism} developed in \cite{SR-83}
(although a previous description in coordinates had been made in \cite{Ka-82}).
Over the years, this formalism has been generalized so that it can be applied 
to very different types of systems (time-dependent, vakonomic and nonholonomic,
 control, and higher-order mechanics and field theories)
\cite{BEMMR-2007,BEMMR-2008,art:Campos_DeLeon_Martin_Vankerschaver09,CMC-2002,
art:Colombo_Martin_Zuccalli10,CLMM-2002,LMM-2003,ELMMR-04,GR-2018,GM-05,art:Gracia_Pons_Roman91,
art:Prieto_Roman11,art:Prieto_Roman12_1,PR-2015,Rey2004,RRSV-2011,art:Vitagliano10}.

As we have pointed out, the Skinner--Rusk formalism is particularly 
interesting when dealing with singular systems,
because of its special features.
First, regardless of the regularity of the Lagrangian,
in the Skinner--Rusk formalism the theory is always singular
and the field equations are not consistent.
Nevertheless, the formalism incorporates in a natural way the second-order or holonomy condition 
for the solutions of the field equations, even in the case
of singular Lagrangians
(remember that, for singular Lagrangians, in the Lagrangian formalism, this property is not necessarily satisfied and must be imposed ``ad hoc'').
As the field equations are not consistent,
we must implement the constraint algorithm
which allows us to find the maximal constraint submanifold (if it exists)
where there are solutions to the field equations fulfilling 
the holonomy condition.
However, the constraint algorithm is implemented only once,
since the Lagrangian and Hamiltonian versions of the constraint algorithm,
as well as the corresponding solutions to the Euler--Lagrange and the  Hamiltonian equations (the {\sl Hamilton--De Donder--Weyl equations\/}),
are recovered straightforwardly from the Skinner--Rusk formalism,
using the Legendre map.
Furthermore, the Legendre map, itself, is obtained as a consequence
of the consistency conditions.

Recently, the Skinner--Rusk setting has been applied
to mechanical contact systems \cite{LGLMR-2021,LGMMR-2020}.
The aim of the present work is to describe the Skinner--Rusk formalism
for classical field theories with dissipation.
We start from the Lagrangian and Hamiltonian $k$-contact description for these kinds of systems introduced in \cite{GGMRR-2019,GGMRR-2020}, 
generalizing the unified formalisms 
previously developed for contact mechanics in \cite{LGMMR-2020}
and for the $k$-symplectic formulation of classical field theories in \cite{Rey2004}.

We use these results to explore several interesting physical applications. 
A first example is the vibrating string equation with damping.
The second example consists in adding a
damping term to the Lagrangian that describes the massive scalar field equation
(the Klein--Gordon equation), which allows us to obtain
an equation which is closely related to the
telegrapher's equation.
Finally, we consider the Lagrangian of electromagnetism with a dissipation term,
which leads to the equation of damped electromagnetic waves.

The organization of the paper is the following:
First, Section \ref{2} is a review on the foundations of the
$k$-contact formulation of field theories with dissipation,
in which we also include the guidelines of the constraint algorithm for the singular cases.
Section \ref{3} contains the main results of the work: the 
unified $k$-contact formalism is presented and
developed in detail.
Finally, in Section \ref{4}, we analyze the above mentioned examples.

All the manifolds are real, second countable and of class $\Cinfty$.
Manifolds and mappings are assumed to be smooth.
Sum over crossed repeated indices is understood.

\section{Hamiltonian and Lagrangian formalisms of \texorpdfstring{$k$}--contact systems}
\label{2}

In this section we review the Hamiltonian and Lagrangian formalisms for $k$-contact field theories (see \cite{GGMRR-2019,GGMRR-2020} for the details).
We also discuss the singular case, which is interesting for the development of the Skinner--Rusk formalism.

\subsection{\texorpdfstring{$k$}--contact structures}


A regular distribution on $M$ is a subset $D\subset \T M$ such that 
$D_x\subset \T_xM$ is a vector subspace, for every $x\in M$, 
that can be spanned by a family of vector fields and has locally constant rank. 
We denote by $D^\circ$ the annihilator of a distribution $D$, 
which is a regular codistribution, i.e., 
a subset $C\subset\T^\ast M$ such that 
$C_x\subset \T^\ast_xM$ is a vector subspace, for every $x\in M$.

Every nonvanishing 1-form $\eta\in\Omega^1(M)$ defines a codistribution of rank 1 denoted by $\langle \eta\rangle\subset\T^\ast M$. Notice that the annihilator $\langle \eta\rangle^\circ\subset \T M$ of $\langle \eta\rangle$ has corank 1 
and is the kernel of the vector bundle morphism
\begin{equation*}
	\begin{matrix}
		\hat\eta\colon & \T M & \to & M\times \R\\
		& v_p & \mapsto & (p, \eta_p(v_p))
	\end{matrix} \ .
\end{equation*}
With all this in mind, for every set of $k$ 1-forms $\eta^1,\dotsc,\eta^k\in\Omega^1(M)$, we define
\begin{align*}
	\mathcal{C}^{\rm C} &= \langle \eta^1,\dotsc,\eta^k \rangle\subset \T^\ast M\,,\\
	\mathcal{D}^{\rm C} &= \left(\mathcal{C}^{\rm C}\right)^\circ = \ker\widehat{\eta^1}\cap\dotsb\cap\ker\widehat{\eta^k}\subset \T M\,\\
	\mathcal{D}^{\rm R} &= \ker\widehat{\d\eta^1}\cap\dotsb\cap\ker\widehat{\d\eta^k}\subset\T M\,,\\
	\mathcal{C}^{\rm R} &= \left(\mathcal{D}^{\rm R}\right)^\circ\subset\T^\ast M\,.
\end{align*}

\begin{dfn}
\label{dfn-kcontact-manifold}
Let $M$ be a manifold with $\dim M=m$.
A \textbf{$k$-contact structure} on $M$ is a family of $k$ smooth 1-forms $\eta^1,\dotsc,\eta^k\in\Omega^1(M)$, $k<m$, such that
\begin{enumerate}[(i)]
	\item The distribution $\mathcal{D}^{\rm C}$ is regular and has corank $k$ ($\eta^1\wedge\dotsb\wedge\eta^k\neq 0$).
	\item The distribution $\mathcal{D}^{\rm R}$ is regular and has rank $k$.
	\item $\mathcal{D}^{\rm C}\cap\mathcal{D}^{\rm R} = \{0\}$, or equivalently, $\bigcap_{\alpha=1}^k\big( \ker\widehat{\eta^\alpha}\cap\ker\widehat{\d\eta^\alpha} \big) = \{0\}$.
\end{enumerate}
	We say that $\mathcal{C}^{\rm C}$ is the \textbf{contact codistribution}, $\mathcal{D}^{\rm C}$ is the \textbf{contact distribution}, $\mathcal{D}^{\rm R}$ is the \textbf{Reeb distribution} and $\mathcal{C}^{\rm R}$ is the \textbf{Reeb codistribution}. A manifold $M$ equipped with a $k$-contact structure is a \textbf{$k$-contact manifold}.
\end{dfn}

\begin{obs}\rm
Given conditions ({\it i }) and ({\it ii }), condition ({\it iii }) is equivalent to
	$$ (iii\:') \qquad TM = \mathcal{D}^{\rm C}\oplus\mathcal{D}^{\rm R}\,. $$
For $k=1$ we recover the notion of contact manifold.
\end{obs}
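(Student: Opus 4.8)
The plan is to reduce the asserted equivalence to pointwise linear algebra and then lift it back to distributions using regularity. The implication $(iii')\Rightarrow(iii)$ is immediate: in any direct sum $E\oplus F$ of linear subspaces one has $E\cap F=\{0\}$, so applying this fibrewise to $\T M=\mathcal{D}^{\rm C}\oplus\mathcal{D}^{\rm R}$ gives $\mathcal{D}^{\rm C}\cap\mathcal{D}^{\rm R}=\{0\}$, which is $(iii)$.

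For $(iii)\Rightarrow(iii')$ I would fix a point $x\in M$ and count dimensions. By condition (i) the fibre $\mathcal{D}^{\rm C}_x$ has dimension $m-k$, by condition (ii) the fibre $\mathcal{D}^{\rm R}_x$ has dimension $k$, and condition (iii) says $\mathcal{D}^{\rm C}_x\cap\mathcal{D}^{\rm R}_x=\{0\}$. The Grassmann formula then gives
\[
\dim\!\big(\mathcal{D}^{\rm C}_x+\mathcal{D}^{\rm R}_x\big)=(m-k)+k-0=m=\dim\T_xM ,
\]
so $\T_xM=\mathcal{D}^{\rm C}_x\oplus\mathcal{D}^{\rm R}_x$ for every $x$. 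Because $\mathcal{D}^{\rm C}$ and $\mathcal{D}^{\rm R}$ are regular distributions — each locally spanned by smooth vector fields and of locally constant rank, hence a smooth subbundle of $\T M$ — the inclusion-sum map $\mathcal{D}^{\rm C}\oplus\mathcal{D}^{\rm R}\to\T M$ is a smooth vector bundle morphism that is a fibrewise isomorphism, hence a bundle isomorphism, yielding the Whitney-sum decomposition $\T M=\mathcal{D}^{\rm C}\oplus\mathcal{D}^{\rm R}$, i.e.\ $(iii')$. (The auxiliary equivalence written inside $(iii)$ itself is just $\mathcal{D}^{\rm C}\cap\mathcal{D}^{\rm R}=\big(\bigcap_\alpha\ker\widehat{\eta^\alpha}\big)\cap\big(\bigcap_\alpha\ker\widehat{\d\eta^\alpha}\big)=\bigcap_\alpha(\ker\widehat{\eta^\alpha}\cap\ker\widehat{\d\eta^\alpha})$, immediate from the defining formulas for $\mathcal{D}^{\rm C}$ and $\mathcal{D}^{\rm R}$.)

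There is essentially no real obstacle here; the statement is a short exercise. The only point requiring mild care is that the dimension counts $m-k$ and $k$ must hold \emph{at every point} of $M$, which is exactly what the regularity-plus-prescribed-(co)rank hypotheses in (i) and (ii) guarantee — without them a fibrewise-trivial intersection would not force a global splitting. Finally, to recover the case $k=1$: here $m=2n+1$, $\mathcal{D}^{\rm C}=\ker\widehat{\eta}$ has corank $1$ automatically once $\eta\neq0$, and conditions (ii)–(iii) say precisely that $\ker\widehat{\d\eta}$ is a line transverse to $\ker\widehat{\eta}$, equivalently that $\d\eta$ restricts to a nondegenerate $2$-form on $\ker\widehat{\eta}$; this is the classical maximal-rank condition $\eta\wedge(\d\eta)^{n}\neq0$, so Definition \ref{dfn-kcontact-manifold} specializes to the usual notion of a contact manifold.
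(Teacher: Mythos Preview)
The paper states this remark without proof, treating it as an immediate consequence of the definitions. Your argument is correct and is exactly the expected one: the Grassmann dimension count is the natural way to obtain $(iii)\Leftrightarrow(iii')$ once (i) and (ii) fix the ranks, and invoking regularity to promote the fibrewise splitting to a Whitney-sum decomposition is appropriate. Your treatment of the $k=1$ case is also sound; one minor clarification worth adding is that the odd dimension $m=2n+1$ is not an extra hypothesis but is forced by condition (ii), since a $2$-form with one-dimensional kernel necessarily has even rank $m-1$.
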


\begin{thm}
	Let $(M,\eta^\alpha)$ be a $k$-contact manifold. Then:
	\begin{enumerate}[(1)]
		\item There exists a family of $k$ vector fields $R_\alpha\in\mathfrak{X}(M)$, called \textbf{Reeb vector fields}, uniquely defined by the equations
		\begin{equation*}
			\begin{cases}
				i(R_\beta)\eta^\alpha = \delta_\beta^\alpha\,,\\
				i(R_\beta)\d\eta^\alpha = 0\,.
			\end{cases}
		\end{equation*}
		\item The Reeb distribution $\mathcal{D}^{\rm R}$ is involutive and, therefore, integrable. It is generated by the Reeb vector fields.
	\end{enumerate}
\end{thm}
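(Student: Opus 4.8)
The plan is to argue pointwise first. Fix $x\in M$ and consider the linear map
$\Phi_x\colon\mathcal{D}^{\rm R}_x\to\R^k$ given by $\Phi_x(v)=(\eta^1_x(v),\dotsc,\eta^k_x(v))$.
If $v\in\mathcal{D}^{\rm R}_x$ satisfies $\eta^\alpha_x(v)=0$ for all $\alpha$, then $v\in\mathcal{D}^{\rm C}_x\cap\mathcal{D}^{\rm R}_x=\{0\}$ by condition~(iii) of Definition~\ref{dfn-kcontact-manifold}, so $\Phi_x$ is injective; since $\dim\mathcal{D}^{\rm R}_x=k$ by condition~(ii), it is an isomorphism. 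Hence for each $\beta$ there is a unique $R_\beta(x)\in\mathcal{D}^{\rm R}_x$ with $\eta^\alpha_x(R_\beta(x))=\delta^\alpha_\beta$, and, since $R_\beta(x)\in\mathcal{D}^{\rm R}_x=\bigcap_\alpha\ker\widehat{\d\eta^\alpha}$, the equations $i(R_\beta)\d\eta^\alpha=0$ hold automatically at $x$. The same argument yields uniqueness within $\mathfrak{X}(M)$, because any vector field satisfying $i(R_\beta)\d\eta^\alpha=0$ for all $\alpha$ is a section of $\mathcal{D}^{\rm R}$.

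Next I would verify smoothness. As $\mathcal{D}^{\rm R}$ is a regular distribution, every point has a neighbourhood $U$ on which it admits a frame $X_1,\dotsc,X_k\in\mathfrak{X}(U)$. Writing $R_\beta=f_\beta^\gamma X_\gamma$ on $U$, the defining conditions become the linear system $f_\beta^\gamma\,\eta^\alpha(X_\gamma)=\delta^\alpha_\beta$; by the isomorphism property of $\Phi$, the matrix $\big(\eta^\alpha(X_\gamma)\big)$ is invertible at each point of $U$ and has smooth entries, so Cramer's rule produces smooth coefficients $f_\beta^\gamma$, and hence smooth vector fields $R_\beta$. This completes the proof of~(1).

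For~(2) I would first observe that $R_1,\dotsc,R_k$ are pointwise linearly independent (pairing a relation $c^\alpha R_\alpha=0$ with $\eta^\beta$ gives $c^\beta=0$) and take values in the rank-$k$ distribution $\mathcal{D}^{\rm R}$, so they form a global frame of it. To prove involutivity it suffices, since $\mathcal{D}^{\rm R}=\bigcap_\gamma\ker\widehat{\d\eta^\gamma}$, to check that $i([R_\alpha,R_\beta])\d\eta^\gamma=0$ for all $\alpha,\beta,\gamma$. Cartan's formula and the defining equations give $\Lie_{R_\alpha}\eta^\gamma=i(R_\alpha)\d\eta^\gamma+\d\big(i(R_\alpha)\eta^\gamma\big)=\d(\delta^\gamma_\alpha)=0$, whence $\Lie_{R_\alpha}\d\eta^\gamma=\d\,\Lie_{R_\alpha}\eta^\gamma=0$; applying the identity $i([R_\alpha,R_\beta])=\Lie_{R_\alpha}\circ i(R_\beta)-i(R_\beta)\circ\Lie_{R_\alpha}$ to $\d\eta^\gamma$ then yields $i([R_\alpha,R_\beta])\d\eta^\gamma=\Lie_{R_\alpha}\big(i(R_\beta)\d\eta^\gamma\big)-i(R_\beta)\big(\Lie_{R_\alpha}\d\eta^\gamma\big)=0$. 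Hence $[R_\alpha,R_\beta]$ is a section of $\mathcal{D}^{\rm R}$, so this distribution is involutive, and the Frobenius theorem gives its integrability.

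The only genuinely delicate point is the passage from the pointwise construction of the $R_\beta$ to smooth vector fields, and this is precisely where the regularity of $\mathcal{D}^{\rm R}$ (condition~(ii)) is needed; the remaining steps are pointwise linear algebra together with the standard identities relating the exterior differential, the interior product, the Lie derivative and the Lie bracket.
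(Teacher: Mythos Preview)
Your argument is correct. The paper, however, does not include a proof of this theorem: it appears in the review Section~\ref{2}, where the results on $k$-contact structures are stated without proof and the reader is referred to \cite{GGMRR-2019,GGMRR-2020} for details. So there is no proof in the paper to compare against.

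On its own merits your proof is complete and well organised. The pointwise isomorphism $\Phi_x\colon\mathcal{D}^{\rm R}_x\to\R^k$ is exactly the right object: injectivity uses condition~(iii) and bijectivity uses condition~(ii), which is how the three axioms of Definition~\ref{dfn-kcontact-manifold} are meant to interact. Your treatment of smoothness via a local frame of $\mathcal{D}^{\rm R}$ and inversion of the matrix $\big(\eta^\alpha(X_\gamma)\big)$ is the standard and correct way to upgrade the pointwise construction. For part~(2), checking involutivity on the frame $R_1,\dotsc,R_k$ suffices (since brackets of general sections of $\mathcal{D}^{\rm R}$ then lie in $\mathcal{D}^{\rm R}$ by the Leibniz rule), and your computation $i([R_\alpha,R_\beta])\d\eta^\gamma=0$ via $\Lie_{R_\alpha}\eta^\gamma=0$ is clean. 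One could alternatively note that, once $\Lie_{R_\alpha}\eta^\gamma=0$ is known, the relation $i([R_\alpha,R_\beta])\eta^\gamma=R_\alpha\big(i(R_\beta)\eta^\gamma\big)-\Lie_{R_\alpha}\eta^\gamma(R_\beta)=0$ shows $[R_\alpha,R_\beta]\in\mathcal{D}^{\rm C}\cap\mathcal{D}^{\rm R}=\{0\}$, giving the slightly stronger conclusion that the Reeb vector fields actually commute; but this is not needed for the statement as written.
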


On every $k$-contact manifold $(M,\eta^\alpha)$ there are coordinates $(x^I,s^\alpha)$, 
called {\it \textbf{adapted coordinates}}, such that
$$ 
R_\alpha = \parder{}{s^\alpha}\ ,\qquad \eta^\alpha = \d s^\alpha - f_I^\alpha(x)\d x^I\, , 
$$
where the functions $f_I^\alpha$ depend only on the coordinates $x^I$.

\begin{exmpl}\rm
	Given $k\geq1$, the manifold $M = (\oplus^k\T^\ast Q)\times \R^k$ equipped with natural coordinates $(q^i,p_i,s^\alpha)$ has a canonical $k$-contact structure defined by the differential 1-forms
	$$ \eta^\alpha = \d s^\alpha - \theta^\alpha\,, $$
	where $\theta^\alpha$ is the pull-back of the canonical 1-form of $\T^\ast Q$ with respect to the projection $(\oplus^k\T^\ast Q)\times \R^k\to\T^\ast Q$ to the $\alpha$-th direct summand. Their local expressions in the natural coordinates $(q^i,p_i,s^\alpha)$ are
	$$ \eta^\alpha = \d s^\alpha - p_i^\alpha\d q^i\,. $$
	Hence, $\d\eta^\alpha = \d q^i\wedge\d p_i^\alpha$ and the Reeb vector fields are
	$$R_\alpha = \parder{}{s^\alpha}\,. $$
\end{exmpl}

\begin{thm}[$k$-contact Darboux theorem]
	Consider a $k$-contact manifold $(M,\eta^\alpha)$ of dimension $n+kn+k$ endowed with an integrable subdistribution $\mathcal{V}\subset\mathcal{D}^{\rm C}$ with $\rk\mathcal{V} = nk$. 
	Around every point of $M$, there exists a local chart $(\mathcal{U};q^i,p_i^\alpha,s^\alpha)$, $1\leq\alpha\leq k$, $1\leq i\leq n$, such that
	$$ \restr{\eta^\alpha}{\mathcal{U}} = \d s^\alpha - p_i^\alpha\d q^i\,. $$
	In these coordinates,
$$ 
\restr{\mathcal{D}^{\rm R}}{\mathcal{U}} = 
\left\langle R_\alpha = 
\parder{}{s^\alpha}\right\rangle,
\quad 
\restr{\mathcal{V}}{\mathcal{U}} = \left\langle\parder{}{p_i^\alpha}\right\rangle\,. 
$$
	These coordinates are called \textbf{Darboux coordinates}.
\end{thm}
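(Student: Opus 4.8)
The plan is to refine the adapted coordinates recalled above. That result gives, around the chosen point, coordinates $(x^I,s^\alpha)$ with $1\le I\le n+nk$ in which $R_\alpha=\partial/\partial s^\alpha$ and $\eta^\alpha=\d s^\alpha-f^\alpha_I(x)\,\d x^I$, where the $f^\alpha_I$ depend on the $x$'s only; hence $\d\eta^\alpha=-\d f^\alpha_I\wedge\d x^I$ involves only the differentials $\d x^I$. Since in these coordinates $\mathcal{D}^{\rm R}=\langle\partial/\partial s^\alpha\rangle$, the local leaf space of the Reeb foliation is simply the $x$-space $N$, of dimension $n+nk$, with projection $\pi(x,s)=x$. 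Each $\d\eta^\alpha$ is then $\pi$-basic, so it descends to a closed $2$-form $\omega^\alpha$ on $N$ with $\pi^{*}\omega^\alpha=\d\eta^\alpha$, and conditions (ii)--(iii) of Definition~\ref{dfn-kcontact-manifold} translate into $\bigcap_\alpha\ker\widehat{\omega^\alpha}=\{0\}$ on $N$.

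The core of the argument is to carry $\mathcal{V}$ down to $N$. Isotropy comes for free: on a leaf $L$ of the integrable distribution $\mathcal{V}$ one has $\eta^\alpha|_L=0$ because $\mathcal{V}\subset\mathcal{D}^{\rm C}$, so $\d\eta^\alpha|_L=0$, i.e. $\d\eta^\alpha(\mathcal{V},\mathcal{V})=0$. The point is then to check that $\mathcal{V}$ is projectable along $\pi$ --- equivalently, that $\mathcal{V}$ is invariant under the Reeb vector fields, $[R_\alpha,\mathcal{V}]\subset\mathcal{V}$, equivalently that $\mathcal{V}\oplus\mathcal{D}^{\rm R}$ is involutive --- so that $\overline{\mathcal{V}}:=\d\pi(\mathcal{V})$ is a well-defined integrable distribution on $N$ of rank $nk$, still isotropic for every $\omega^\alpha$. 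Then $(N,\omega^\alpha,\overline{\mathcal{V}})$ is a $k$-symplectic manifold in the standard sense, and one may invoke the classical $k$-symplectic Darboux theorem to obtain coordinates $(q^i,p^\alpha_i)$ on $N$ with $\omega^\alpha=\sum_i\d q^i\wedge\d p^\alpha_i$ and $\overline{\mathcal{V}}=\langle\partial/\partial p^\alpha_i\rangle$.

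It remains to lift back. Pulling $q^i$ and $p^\alpha_i$ to $\mathcal{U}$ through $\pi$, we get $\d\eta^\alpha=\d q^i\wedge\d p^\alpha_i$, so $\eta^\alpha+p^\alpha_i\,\d q^i$ is closed and, after one more shrinking, equals $\d\sigma^\alpha$ for some functions $\sigma^\alpha$; since $\sigma^\alpha-s^\alpha$ is a function of the $x$'s alone, $(q^i,p^\alpha_i,\sigma^\alpha)$ is again a coordinate system and $\partial/\partial\sigma^\alpha=\partial/\partial s^\alpha$. Writing $s^\alpha$ for $\sigma^\alpha$ gives $\eta^\alpha=\d s^\alpha-p^\alpha_i\,\d q^i$. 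In these coordinates $\mathcal{D}^{\rm R}$ has rank $k$ and is annihilated by every $\d\eta^\beta=\d q^i\wedge\d p^\beta_i$, which forces $\mathcal{D}^{\rm R}=\langle\partial/\partial s^\alpha\rangle$; and from the elementary identity $\mathcal{V}=(\mathcal{V}\oplus\mathcal{D}^{\rm R})\cap\mathcal{D}^{\rm C}=(\d\pi)^{-1}(\overline{\mathcal{V}})\cap\mathcal{D}^{\rm C}$, together with $\mathcal{D}^{\rm C}=\langle\partial/\partial p^\alpha_i,\ \partial/\partial q^i+p^\alpha_i\,\partial/\partial s^\alpha\rangle$, one reads off $\mathcal{V}=\langle\partial/\partial p^\alpha_i\rangle$.

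The real obstacle, I expect, is the second paragraph: making the passage to the leaf space precise and, above all, establishing that $\mathcal{V}$ descends to $N$, i.e. that it is invariant along the Reeb directions --- this is exactly the structural fact underlying the normal form $\mathcal{V}=\langle\partial/\partial p^\alpha_i\rangle$, and the only ingredient not already provided by the adapted-coordinates result together with routine Poincar\'e-lemma and Jacobian arguments. Once the quotient $k$-symplectic data are in hand, the rest is bookkeeping.
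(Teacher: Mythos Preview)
The paper does not actually prove this theorem; it is quoted without argument in the review section, the proof being deferred to the cited references. So there is no ``paper's proof'' to compare against, and I assess your attempt on its own terms.

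Your strategy --- pass to the local leaf space of the Reeb foliation, obtain a $k$-symplectic structure there, apply the $k$-symplectic Darboux theorem, and lift back --- is the natural one, and your first and third paragraphs are fine. But the obstacle you flag in the second paragraph is not merely a gap to be filled: the projectability condition $[R_\alpha,\mathcal{V}]\subset\mathcal{V}$ \emph{fails} in general under the stated hypotheses. Take $k=n=1$, $M=\R^3$ with $\eta=\d s-p\,\d q$, Reeb field $R=\partial/\partial s$, and $\mathcal{V}=\langle V\rangle$ with
\[
V \;=\; \parder{}{p}+s\left(\parder{}{q}+p\,\parder{}{s}\right).
\]
Then $\mathcal V$ is a rank-$1$ (hence integrable) subdistribution of $\mathcal D^{\rm C}=\ker\eta$, yet $[R,V]=\partial/\partial q+p\,\partial/\partial s$ is everywhere linearly independent of $V$ (compare the $\partial/\partial p$-components), so $\mathcal{V}$ is not Reeb-invariant and does not descend to~$N$.

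Worse, this same example shows the theorem \emph{as stated} is missing a hypothesis. Your own observation applies in reverse: if Darboux coordinates with $\mathcal{V}=\langle\partial/\partial p^\alpha_i\rangle$ existed, then --- since the Reeb fields are uniquely $\partial/\partial s^\alpha$ in those coordinates --- one would get $[R_\beta,\mathcal V]\subset\mathcal V$ for free; the example above therefore admits no such chart. The correct statement needs the additional assumption that $\mathcal{V}\oplus\mathcal{D}^{\rm R}$ be involutive (equivalently, that $\mathcal{V}$ be Reeb-invariant), which holds automatically in the motivating case of the vertical distribution of $\oplus^k\T^\ast Q\times\R^k\to Q\times\R^k$. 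With that hypothesis added, your three-paragraph outline goes through as written.
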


\begin{obs}{\rm
When some of the conditions stated in Definition 
\ref{dfn-kcontact-manifold} do not hold
we say that $\eta^1,\dotsc,\eta^k\in\Omega^1(M)$ is a 
{\sl \textbf{$k$-precontact structure}} and that $(M;\eta^1,\dotsc,\eta^k)$
is a {\sl \textbf{$k$-precontact manifold}}.
For this kind of manifolds,  Reeb vector fields are not uniquely determined.
(The case $k=1$ has been analyzed in \cite{DeLeon2019}, 
where the properties of these so-called
{\sl precontact structures} and {\sl precontact manifolds} are studied in deep).
}\end{obs}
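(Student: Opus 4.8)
The content to justify in this remark is the assertion that, for a $k$-precontact manifold, the Reeb vector fields fail to be uniquely determined. The plan is to isolate exactly the step in the preceding theorem where uniqueness was used, and to show that it is precisely condition (iii) of Definition \ref{dfn-kcontact-manifold} that guarantees it. Recall that a family of Reeb vector fields is any solution $R_\alpha\in\mathfrak{X}(M)$ of the linear system $i(R_\beta)\eta^\alpha=\delta_\beta^\alpha$, $i(R_\beta)\d\eta^\alpha=0$. First I would pass to the associated homogeneous system: if $R_\beta$ and $R_\beta'$ are two solutions and we set $Z=R_\beta-R_\beta'$, then by linearity of the interior product $Z$ satisfies $i(Z)\eta^\alpha=0$ and $i(Z)\d\eta^\alpha=0$ for every $\alpha$, that is, $Z\in\ker\widehat{\eta^\alpha}$ and $Z\in\ker\widehat{\d\eta^\alpha}$ for all $\alpha$, so $Z\in\mathcal{D}^{\rm C}\cap\mathcal{D}^{\rm R}$. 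Conversely, adding any section of $\mathcal{D}^{\rm C}\cap\mathcal{D}^{\rm R}$ to a solution produces another solution. Hence, whenever a solution exists, it is unique if and only if $\mathcal{D}^{\rm C}\cap\mathcal{D}^{\rm R}=\{0\}$, which is exactly condition (iii) (equivalently $(iii\,')$).

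It then follows immediately that, once condition (iii) is dropped while a solution still exists, the Reeb vector fields cannot be unique: the intersection $\mathcal{D}^{\rm C}\cap\mathcal{D}^{\rm R}$ contains a nonzero local vector field $Z$, and $R_\beta+Z$ is a distinct family of Reeb vector fields, so in fact there is a whole affine space of solutions modeled on the sections of $\mathcal{D}^{\rm C}\cap\mathcal{D}^{\rm R}$. To make the failure concrete I would exhibit a minimal example in which one of the forms $\eta^\alpha$ has a degenerate $\d\eta^\alpha$, so that $\mathcal{D}^{\rm R}$ becomes larger than rank $k$ and meets $\mathcal{D}^{\rm C}$ nontrivially; any vector field tangent to this common kernel then gives an explicit ambiguity in the Reeb directions.

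The main obstacle is that the phrase ``some of the conditions do not hold'' covers several inequivalent degeneracies, and these must be treated with some care. If condition (i) fails ($\eta^1\wedge\dotsb\wedge\eta^k=0$ on part of $M$), the inhomogeneous equations $i(R_\beta)\eta^\alpha=\delta_\beta^\alpha$ may fail to be solvable, so the honest statement concerns non-uniqueness rather than existence; I would therefore phrase the conclusion as: whenever Reeb vector fields exist, the breakdown of (iii) leaves them underdetermined along $\mathcal{D}^{\rm C}\cap\mathcal{D}^{\rm R}$, while in the cases where (i) or (ii) fail one checks directly that the resulting rank/corank mismatch likewise renders the Reeb directions ambiguous. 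The analysis of the case $k=1$ in \cite{DeLeon2019} serves as the model against which to confirm all of these conclusions.
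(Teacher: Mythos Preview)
Your argument is correct: analyzing the associated homogeneous system shows that any two families of Reeb vector fields differ by a section of $\mathcal{D}^{\rm C}\cap\mathcal{D}^{\rm R}$, so uniqueness is equivalent to condition~(iii), and hence fails in the $k$-precontact case. Note, however, that the paper does not supply a proof of this remark at all; it is simply stated as a definitional observation, with a reference to \cite{DeLeon2019} for the detailed analysis when $k=1$. Your proposal therefore does more than the paper itself does here, furnishing the linear-algebraic justification (and the careful case distinction on which of conditions (i)--(iii) fails) that the paper leaves implicit.
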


\subsection{Hamiltonian formalism}

Let $M$ be a manifold with $\dim M=m$. 
A {\it \textbf{$k$-vector field}} in $M$ is a section of the projection
$\tau_M\colon\oplus^k\T M\to M$; that is, a map ${\bf Y}\colon M\to\oplus^k\T M$
such that $\tau_M\circ{\bf Y}={\rm Id}_M$.
A $k$-vector field is specified by means of a set of $k$ vector fields $\mathbf{Y} = (Y_1,\dotsc,Y_k)$,
where $Y_\alpha=\tau_M^\alpha\circ{\bf Y}$, where $\tau_M^\alpha\colon\oplus^k\T M\to\T M$ is the canonical projection on the $\alpha$ factor.
An {\it\textbf{integral section}} of a $k$-vector field ${\bf Y}=(Y_{1},\dots, Y_{k})$
is a map $\psi\colon D\subset\R^k \rightarrow M$, such that
$$
\T\psi\circ\frac{\partial}{\partial t^\alpha}=Y_\alpha \circ \psi \ ,
$$
where $t=(t^1,\ldots,t^k)$ are the canonical coordinates of~$\R^k$.
Equivalently, an integral section satisfies the equation
$$
\psi'={\bf Y} \circ \psi ,
$$
where $\psi'\colon D\subset\R^k\to\oplus^k\T M$ is the first prolongation of $\psi$ to $\oplus^k\T M$.
defined by
$$
\psi'(t) = \left( \psi(t),\T\psi\left( \parder{}{t^1}\bigg\vert_t\right),\dotsc,\T\psi\left( \parder{}{t^k}\bigg\vert_t \right) \right) = (\psi(t),\psi'_\alpha(t))\,.
$$
A $k$-vector field ${\bf Y}$ is {\it\textbf{integrable}} if
every point of~$M$ belongs to the image of an integral section
of~${\bf Y}$.
If $(x^i)$ are local coordinates in $M$ and $\ds Y_\alpha= Y_\alpha^i \frac{\partial}{\partial x^i}$,
then $\psi$ is an integral section of $\mathbf{Y}$ if, and only if,
$$
\frac{\partial \psi^i}{\partial t^\alpha} = Y_\alpha^i(\phi) \ .
$$


\begin{dfn}
	A \textbf{$k$-contact Hamiltonian system} is a family $(M,\eta^\alpha,H)$, where $(M,\eta^\alpha)$ is a $k$-contact manifold and $H\in\Cinfty(M)$ is called a \textbf{Hamiltonian function}.
\end{dfn}

\begin{obs}{\rm
If $(M,\eta^\alpha)$ is a $k$-precontact manifold, then
$(M,\eta^\alpha,\H)$ is said to be a {\sl \textbf{$k$-precontact Hamiltonian system}}.
}\end{obs}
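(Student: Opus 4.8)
The final statement is a \emph{definition} presented as a Remark, not a proposition, so there is no mathematical assertion to establish: it merely fixes terminology. Accordingly, my ``proof proposal'' is to check that the definition is well posed and to point out where its genuine content actually lies.

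First I would verify that every ingredient already carries a meaning. The notion of \emph{$k$-precontact manifold} was introduced in the immediately preceding Remark, as a family $\eta^1,\dotsc,\eta^k\in\Omega^1(M)$ failing at least one of the three conditions of Definition~\ref{dfn-kcontact-manifold}. A Hamiltonian function $\H\in\Cinfty(M)$ requires only that $M$ be a smooth manifold, which it is by hypothesis. Hence the triple $(M,\eta^\alpha,\H)$ is a meaningful object, and the statement simply relaxes, in the definition of a $k$-contact Hamiltonian system, the word ``$k$-contact'' to ``$k$-precontact.'' There is nothing further to prove: the name \emph{$k$-precontact Hamiltonian system} is assigned by fiat, and the whole ``proof'' consists in observing that no ill-defined symbol appears.

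What does deserve comment — and what motivates recording the definition separately rather than folding it into the regular case — is that the passage from $k$-contact to $k$-precontact is not innocuous. In the regular case the Reeb vector fields are uniquely determined by the uniqueness theorem for Reeb vector fields stated above; the preceding Remark already warns that this uniqueness fails in the precontact setting, so a $k$-precontact Hamiltonian system carries no canonical choice of the $R_\alpha$. As a consequence, the dynamical equations one attaches to $(M,\eta^\alpha,\H)$ need not be consistent throughout $M$, and their solutions must be sought on a submanifold produced by a constraint algorithm, exactly as anticipated in the Introduction. These assertions, however, are genuine theorems \emph{about} the dynamics of such systems and are to be proved later; the Remark itself asserts only the terminology. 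I therefore expect no obstacle whatsoever in stating it — precisely because it asserts nothing that could be obstructed.
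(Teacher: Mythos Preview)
Your reading is correct: the Remark is purely terminological and the paper provides no proof because none is needed. Your observation that the ingredients are all well defined, together with your remarks on non-uniqueness of Reeb vector fields and the need for a constraint algorithm, matches exactly how the paper uses this definition downstream.
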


Given a $k$-contact Hamiltonian system $(M,\eta^\alpha,H)$, the {\it\textbf{$k$-contact Hamilton--De Donder--Weyl equations}} for a map $\psi\colon D\subset\R^k\to M$ are
\begin{equation}\label{k-contact-sections}
	\begin{cases}
		i(\psi_\alpha')\d\eta^\alpha = \left(\d H - (\Lie_{R_\alpha}H)\eta^\alpha\right)\circ\psi\,,\\
		i(\psi_\alpha')\eta^\alpha = -H\circ\psi\,,
	\end{cases}
\end{equation}
The {\it \textbf{$k$-contact Hamilton--De Donder--Weyl equations}} for a $k$-vector field $\mathbf{Y} = (Y_1,\dotsc,Y_k)$ in $M$ are
\begin{equation}\label{k-contact-fields}
	\begin{cases}
		i(Y_\alpha)\d\eta^\alpha = \d H - (\Lie_{R_\alpha}H)\eta^\alpha\,,\\
		i(Y_\alpha)\eta^\alpha = -H\,.
	\end{cases}
\end{equation}
The solutions to these equations are called \textbf{Hamiltonian $k$-vector fields}. These equations are equivalent to
$$ \begin{cases}
	\Lie_{Y_\alpha}\eta^\alpha = -(\Lie_{R_\alpha}H)\eta^\alpha\,,\\
	i(Y_\alpha)\eta^\alpha = -H\,.
\end{cases} $$
Notice that these equations are always consistent. However, their solutions are neither unique, nor necessarily integrable.

Given an integrable $k$-vector field $\mathbf{Y} = (Y_1,\dotsc,Y_k)$ in $M$, every integral section $\psi\colon D\subset\R^k\to M$ of $\mathbf{Y}$ satisfies the $k$-contact equations for sections \eqref{k-contact-sections} if and only if the $k$-vector field $\mathbf{Y}$ satisfies the $k$-contact equations for fields \eqref{k-contact-fields}. It is important to point out that equations \eqref{k-contact-sections} and \eqref{k-contact-fields} are not totally equivalent, since a solution to \eqref{k-contact-sections} may not be an integral section of some integrable $k$-vector field in $M$ solution to \eqref{k-contact-fields}.

In Darboux coordinates, if $\psi = (q^i(t^\beta), p_i^\alpha(t^\beta), s^\alpha(t^\beta))$, then $\displaystyle\psi_\alpha' = \left( q^i, p_i^\alpha, s^\alpha, \parder{q^i}{t^\beta}, \parder{p_i^\alpha}{t^\beta}, \parder{s^\alpha}{t^\beta} \right)$, and equations \eqref{k-contact-sections} read
\begin{equation}\label{Hamilton-sections}
	\begin{dcases}
		\parder{q^i}{t^\alpha} = \parder{H}{p_i^\alpha}\circ\psi\,,\\
		\parder{p_i^\alpha}{t^\alpha} = -\left( \parder{H}{q^i} + p_i^\alpha\parder{H}{s^\alpha} \right)\circ\psi\,,\\
		\parder{s^\alpha}{t^\alpha} = \left( p_i^\alpha\parder{H}{p_i^\alpha} - H \right)\circ\psi\,.
	\end{dcases}
\end{equation}

Let $\mathbf{Y}=(Y_1,\dotsc,Y_k)$ be a $k$-vector field solution to \eqref{k-contact-fields} written in Darboux coordinates as
$$ Y_\alpha = (Y_\alpha)^\beta\parder{}{s^\beta} + (Y_\alpha)^i\parder{}{q^i} + (Y_\alpha)_i^\beta\parder{}{p_i^\beta}\,, $$
then,
\begin{equation*}
	\begin{dcases}
		(Y_\alpha)^i = \parder{H}{p_i^\alpha}\,,\\
		(Y_\alpha)_i^\alpha = -\left( \parder{H}{q^i} + p_i^\alpha\parder{H}{s^\alpha} \right)\,,\\
		(Y_\alpha)^\alpha = p_i^\alpha\parder{H}{p_i^\alpha} - H\,.
	\end{dcases}
\end{equation*}

\subsection{Lagrangian formalism}

Consider the bundle $\oplus^k\T Q\times\R^k$ with natural coordinates $(q^i,v^i_\alpha,s^\alpha)$. We have the canonical projections
$$ \tau_1\colon\oplus^k\T Q\times\R^k\to\oplus^k\T Q\quad,\quad \tau^k\colon\oplus^k\T Q\times\R^k\to \T Q\quad,\quad s^\alpha\colon\oplus^k\T Q\times\R^k\to\R\,. $$
We can extend the canonical estructures (the Liouville vector field and the canonical $k$-tangent structure) in $\oplus^k\T Q$ to $\oplus^k\T Q\times\R^k$, which have the local expressions
$$ \Delta = v^i_\alpha\parder{}{v^i_\alpha}\quad,\quad J^\alpha = \parder{}{v^i_\alpha}\otimes\d q^i\,. $$
\begin{dfn}
	Let $\mathbf{X} = (X_\alpha)$ be a $k$-vector field in $\oplus^k\T Q\times\R^k\to\R$. We say that $\mathbf{X}$ is a \textbf{second order partial differential equation} ({\sc sopde}) if $J^\alpha(X_\alpha) = \Delta$.
\end{dfn}
In coordinates, a \textsc{sopde} has the expression
$$ X_\alpha = v_\alpha^i\parder{}{q^i} + (X_\alpha)_\beta^i\parder{}{v_\beta^i} + (X_\alpha)^\beta\parder{}{s^\beta}\,. $$
\begin{dfn}
	Consider a section $\psi\colon\R^k\to Q\times\R^k$ of the projection $Q\times\R^k\to \R^k$, where $\psi = (\phi,s^\alpha)$ with $\phi\colon \R^k\to Q$. The \textbf{first prolongation} of $\psi$ to $\oplus^k\T Q\times\R^k$ is the map $\psi^{[1]}\colon \R^k\to\oplus^k\T Q\times\R^k$ given by $\psi^{[1]} = (\phi',s^\alpha)$. We say that the map $\psi^{[1]}$ is \textbf{holonomic}.
\end{dfn}


\begin{prop}
	A $k$-vector field $\mathbf{X}$ in $\oplus^k\T Q\times\R^k\to\R$ is a {\sc sopde} if and only if its integral sections are holonomic.
\end{prop}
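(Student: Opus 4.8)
The plan is to reduce the equivalence to a short computation in the natural coordinates $(q^i,v^i_\alpha,s^\alpha)$ of $\oplus^k\T Q\times\R^k$. First I would expand an arbitrary $k$-vector field as $X_\alpha = (X_\alpha)^i\,\partial_{q^i} + (X_\alpha)^i_\beta\,\partial_{v^i_\beta} + (X_\alpha)^\beta\,\partial_{s^\beta}$ and evaluate the canonical $k$-tangent structure on it, obtaining $J^\alpha(X_\alpha) = \big(\d q^i(X_\alpha)\big)\,\partial_{v^i_\alpha} = (X_\alpha)^i\,\partial_{v^i_\alpha}$. Comparison with $\Delta = v^i_\alpha\,\partial_{v^i_\alpha}$ then shows that $\mathbf X$ is a {\sc sopde} if and only if $(X_\alpha)^i = v^i_\alpha$ for all $i$ and $\alpha$. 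Next I would record the coordinate form of the other two notions: a map $\psi$ with components $(q^i(t),v^i_\alpha(t),s^\alpha(t))$ is an integral section of $\mathbf X$ exactly when $\partial q^i/\partial t^\alpha = (X_\alpha)^i\circ\psi$ (together with the analogous equations for the $v^i_\alpha$- and $s^\alpha$-components), and $\psi$ is holonomic exactly when $v^i_\alpha(t) = \partial q^i/\partial t^\alpha(t)$, since in that case $\psi$ coincides with the first prolongation of $(q^i(t),s^\alpha(t))$.

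For the direct implication I would argue as follows. If $\mathbf X$ is a {\sc sopde} and $\psi$ is any integral section, substituting $(X_\alpha)^i = v^i_\alpha$ into the $q^i$-components of the integral-section equations gives $\partial q^i/\partial t^\alpha = v^i_\alpha(t)$, which is precisely the holonomy condition; hence $\psi$ is holonomic.

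For the converse I would assume every integral section of $\mathbf X$ is holonomic and fix a point $p$. Choosing an integral section $\psi$ with $\psi(t_0)=p$, holonomy gives $v^i_\alpha(t) = \partial q^i/\partial t^\alpha(t)$ while the integral-section equations give $\partial q^i/\partial t^\alpha(t) = (X_\alpha)^i(\psi(t))$, so that evaluation at $t_0$ yields $(X_\alpha)^i(p) = v^i_\alpha(p)$. As $p$ ranges over all points lying on some integral section, which cover $\oplus^k\T Q\times\R^k$, this forces $(X_\alpha)^i = v^i_\alpha$ identically, i.e. $\mathbf X$ is a {\sc sopde}.

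The step I expect to need the most care is this last one. For $k\ge 2$ a $k$-vector field need not admit any integral section through a prescribed point, since the equations $\partial\psi/\partial t^\alpha = X_\alpha\circ\psi$ form an overdetermined system, so the hypothesis that every integral section is holonomic may be vacuously satisfied. The statement should therefore be read for $k$-vector fields whose integral sections do sweep out the whole manifold --- in particular for integrable ones, which is the relevant case --- or, equivalently, one only claims the pointwise identity $(X_\alpha)^i = v^i_\alpha$ along the available integral sections. Apart from that, the proof is a routine translation between the three coordinate conditions; an intrinsic version can also be given using only $J^\alpha$, $\Delta$ and the prolongation operation, but it is less transparent than the computation above.
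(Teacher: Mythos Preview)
The paper states this proposition without proof, so there is no reference argument to compare against. Your coordinate computation is the standard one and is correct: equating $J^\alpha(X_\alpha)=(X_\alpha)^i\,\partial_{v^i_\alpha}$ with $\Delta=v^i_\alpha\,\partial_{v^i_\alpha}$ characterises {\sc sopde}s by $(X_\alpha)^i=v^i_\alpha$, and the forward implication then follows immediately from the $q^i$-components of the integral-section equations.

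You are also right to flag the converse. As written, the ``if'' direction needs integral sections to pass through every point (precisely the paper's definition of an \emph{integrable} $k$-vector field); without this, the hypothesis can be vacuous for $k\ge 2$. Your handling of this point is appropriate: the intended and useful reading is for integrable $k$-vector fields, and your argument establishes the result under that hypothesis. In short, your proof is correct, and your caveat about the converse identifies a genuine imprecision in the stated proposition rather than a flaw in your reasoning.
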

\begin{dfn}
	\begin{enumerate}[(1)]
		\item A \textbf{Lagrangian function} is a function $L\in\Cinfty(\oplus^k\T Q\times\R^k)$.
		\item The \textbf{Lagrangian energy} associated to $L$ is the function $E_L = \Delta(L)-L\in\Cinfty(\oplus^k\T Q\times\R^k\to\R)$.
		\item The \textbf{Cartan forms} associated to $L$ are
		$$ \theta_L^\alpha = \transp{(J^\alpha)}\circ\d L\in\Omega^1(\oplus^k\T Q\times\R^k)\quad,\quad\omega_L^\alpha = -\d\theta_L^\alpha\in\Omega^2(\oplus^k\T Q\times\R^k)\,. $$
		\item The \textbf{contact forms} associated to $L$ are
		$$ \eta_L^\alpha = \d s^\alpha-\theta_L^\alpha\in\Omega^1(\oplus^k\T Q\times\R^k)\quad,\quad\d\eta_L^\alpha = \omega_L^\alpha\in\Omega^2(\oplus^k\T Q\times\R^k)\,. $$
		\item The couple $(\oplus^k\T Q\times\R^k,L)$ is a \textbf{$k$-contact Lagrangian system}.
	\end{enumerate}
\end{dfn}
The local expression of these elements in natural coordinates of $\oplus^k\T Q\times\R^k$ are
$$ E_L = v_\alpha^i\parder{L}{v_\alpha^i} - L\quad,\quad \eta_L^\alpha = \d s^\alpha - \parder{L}{v_\alpha^i}\d q^i\,. $$

\begin{dfn}
	Given a Lagrangian function $L\in\Cinfty(\oplus^k\T Q\times\R^k)$ we define its \textbf{Legendre map} as the fibre derivative of $L$, considered as a function on the vector bundle $\oplus^k\T Q\times\R^k\to Q\times\R^k$, that is, the map $\F L\colon \oplus^k\T Q\times\R^k\to\oplus^k\T^\ast Q\times\R^k$ given by
	$$ \F L(v_{1q},\dotsc,v_{kq},s^\alpha) = (\F L(\cdot,s^\alpha)(v_{1q},\dotsc,v_{kq}),s^\alpha)\,. $$
\end{dfn}
In natural coordinates, the Legendre map is locally given by
\begin{equation}\label{legendre-coords}
	\F L(q^i,v_\alpha^i,s^\alpha) = \left( q^i,\parder{L}{v_\alpha^i},s^\alpha \right)\,.
\end{equation}
Notice that the Cartan forms can also be defined using the Legendre map as
$$ \theta_\L^\alpha = \F L^\ast\theta^\alpha\quad,\quad \omega_\L^\alpha = \F L^\ast\omega^\alpha\,, $$
where $\omega^\alpha = -\d\theta^\alpha$ and $\theta^\alpha$ is the pull-back of the canonical 1-form of $\T^\ast Q$ with respect to the projection $(\oplus^k\T^\ast Q)\times \R^k\to\T^\ast Q$ to the $\alpha$-th direct summand.

\begin{prop}\label{prop-legendre}
	Consider a Lagrangian function $L$. Then, the following conditions are equivalent:
	\begin{enumerate}[(1)]
		\item The Legendre map $\F L$ is a local diffeomorphism.
		\item The fibre Hessian $\F^2L\colon\oplus^k\T Q\times\R^k\to(\oplus^k\T^\ast Q\times\R^k)\otimes(\oplus^k\T^\ast Q\times\R^k)$ of $L$ is everywhere nondegenerate, where the tensor product is of vector bundles over $Q\times\R^k$.
		\item The couple $(\oplus^k\T Q\times\R^k,\eta_L^\alpha)$ is a $k$-contact manifold.
	\end{enumerate}
\end{prop}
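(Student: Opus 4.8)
The plan is to prove the chain of equivalences $(1)\Leftrightarrow(2)$ and $(1)\Leftrightarrow(3)$, working throughout in natural coordinates $(q^i,v^i_\alpha,s^\alpha)$ where the relevant objects have explicit local expressions. The equivalence $(1)\Leftrightarrow(2)$ is the easiest and essentially coordinate-bookkeeping: by the local formula \eqref{legendre-coords}, the Jacobian of $\F L$ (as a fibred map over $Q\times\R^k$) has a block-triangular form whose only nontrivial block is $\partial^2 L/\partial v^i_\alpha\partial v^j_\beta$; this block is exactly the matrix of the fibre Hessian $\F^2 L$. Hence $\F L$ is a local diffeomorphism at a point if and only if this Hessian matrix is nondegenerate there, which is the content of (2). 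I would just remark that $\F^2 L$ is the fibre derivative of $\F L$ and invoke the inverse function theorem fibrewise, noting the $q^i$ and $s^\alpha$ components of $\F L$ are the identity.

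For $(1)\Leftrightarrow(3)$, I would compute the contact forms $\eta_L^\alpha = \d s^\alpha - (\partial L/\partial v^i_\alpha)\,\d q^i$ and their differentials, and check when the family $(\eta_L^\alpha)$ satisfies the three conditions of Definition \ref{dfn-kcontact-manifold} on $M = \oplus^k\T Q\times\R^k$, which has dimension $n+nk+k$. Write $\theta_L^\alpha=(\partial L/\partial v^i_\alpha)\,\d q^i$, so $\d\eta_L^\alpha=\omega_L^\alpha=-\d\theta_L^\alpha$ expands in terms of the Hessian matrix $W_{ij}^{\alpha\beta}=\partial^2 L/\partial v^i_\alpha\partial v^j_\beta$ together with the mixed derivatives $\partial^2 L/\partial v^i_\alpha\partial q^j$ and $\partial^2 L/\partial v^i_\alpha\partial s^\beta$. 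Condition (i), that $\eta_L^1\wedge\dotsb\wedge\eta_L^k\neq 0$, holds automatically because of the $\d s^\alpha$ terms — the $\d s^\alpha$ are linearly independent and do not appear in the $\theta_L^\alpha$ — so $\mathcal{D}^{\rm C}$ always has corank $k$ regardless of regularity. The work is in conditions (ii) and (iii): one checks that, when $\F^2 L$ is nondegenerate, $\ker\widehat{\eta_L^\alpha}\cap\ker\widehat{\d\eta_L^\alpha}$ for all $\alpha$ forces a vector to have vanishing $\partial/\partial q^i$, $\partial/\partial v^i_\alpha$ and $\partial/\partial s^\alpha$ components, giving $\mathcal{D}^{\rm C}\cap\mathcal{D}^{\rm R}=\{0\}$, and that $\mathcal{D}^{\rm R}$ has the right rank $k$, with the Reeb vector fields turning out to be $R_\alpha=\partial/\partial s^\alpha$. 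Conversely, if $\F^2 L$ degenerates at a point, I would exhibit a nonzero vertical vector $\lambda^i_\alpha\,\partial/\partial v^i_\alpha$ with $W_{ij}^{\alpha\beta}\lambda^j_\beta=0$; such a vector lies in $\ker\widehat{\eta_L^\alpha}\cap\ker\widehat{\d\eta_L^\alpha}$ for every $\alpha$ (since $\theta_L^\alpha$ and $\eta_L^\alpha$ have no $\d v$ terms, and the $\d v\wedge\d q$ part of $\d\eta_L^\alpha$ annihilates it), violating (iii); hence (3) fails.

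The main obstacle I anticipate is the bookkeeping in condition (iii) of the regular direction: showing that nondegeneracy of the $nk\times nk$ Hessian $W_{ij}^{\alpha\beta}$ is enough to kill the intersection $\bigcap_\alpha(\ker\widehat{\eta_L^\alpha}\cap\ker\widehat{\d\eta_L^\alpha})$, given that each individual $\d\eta_L^\alpha$ is quite degenerate (it is a single 2-form on a $(n+nk+k)$-dimensional manifold, so it has a large kernel) and one must use all $k$ of them together. Concretely, writing a general tangent vector $Z = a^i\,\partial/\partial q^i + b^i_\alpha\,\partial/\partial v^i_\alpha + c^\alpha\,\partial/\partial s^\alpha$, the conditions $i(Z)\eta_L^\alpha=0$ give $c^\alpha = (\partial L/\partial v^i_\alpha)a^i$, and the conditions $i(Z)\d\eta_L^\alpha=0$ split into a $\d q$-part and a $\d v$-part; the $\d v$-part reads $W_{ij}^{\alpha\beta}a^j = 0$ summed appropriately, and the $\d q$-part then forces $b$ in terms of $a$. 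One has to be careful about which indices are summed — it is the structure of the equations $i(Z)\d\eta_L^\alpha=0$ for each fixed $\alpha$ versus summed over $\alpha$ — but once organized correctly, nondegeneracy of $W$ gives $a=0$, hence $b=0$ and $c=0$. Finally I would record that under any of the equivalent conditions the Reeb vector fields are $R_\alpha=\partial/\partial s^\alpha$, since $i(\partial/\partial s^\beta)\eta_L^\alpha=\delta^\alpha_\beta$ and $i(\partial/\partial s^\beta)\d\eta_L^\alpha=0$ are immediate from the local expressions, which also shows $\mathcal{D}^{\rm R}$ is spanned by the $\partial/\partial s^\alpha$ and has rank $k$, closing condition (ii).
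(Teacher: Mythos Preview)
The paper does not actually prove this proposition: Section~\ref{2} is a review, and the result is stated with only the accompanying remark that the three conditions are equivalent to nondegeneracy of the Hessian $W_{ij}^{\alpha\beta}=\partial^2 L/\partial v^i_\alpha\partial v^j_\beta$. So there is no argument in the paper to compare against, and your outline is a reasonable way to supply one. Your treatment of $(1)\Leftrightarrow(2)$ and of the converse direction in $(1)\Leftrightarrow(3)$ is fine.

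There is, however, a genuine error in your verification of condition~(ii). You assert that $i(\partial/\partial s^\beta)\d\eta_L^\alpha=0$ is ``immediate from the local expressions'' and conclude that the Reeb vector fields are $R_\alpha=\partial/\partial s^\alpha$ and that $\mathcal{D}^{\rm R}=\langle\partial/\partial s^\alpha\rangle$. This is false whenever $L$ depends on $s^\beta$, which is precisely the situation of interest. Since $\theta_L^\alpha=(\partial L/\partial v^i_\alpha)\,\d q^i$, the differential $\d\eta_L^\alpha=-\d\theta_L^\alpha$ contains the term $-\dfrac{\partial^2 L}{\partial s^\beta\partial v^i_\alpha}\,\d s^\beta\wedge\d q^i$, so
\[
i\!\left(\parder{}{s^\gamma}\right)\d\eta_L^\alpha=-\frac{\partial^2 L}{\partial s^\gamma\partial v^i_\alpha}\,\d q^i,
\]
which is not zero in general. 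In fact the paper records, immediately after the proposition, that the Reeb vector fields are
$(R_L)_\alpha=\partial/\partial s^\alpha - W^{ji}_{\gamma\beta}\,(\partial^2 L/\partial s^\alpha\partial v^j_\gamma)\,\partial/\partial v^i_\beta$, the correction term being exactly what cancels the obstruction you overlooked.

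The repair is straightforward and fits your framework. For $Z=a^i\partial_{q^i}+b^i_\beta\partial_{v^i_\beta}+c^\beta\partial_{s^\beta}$, the $\d v^j_\beta$-component of $i(Z)\d\eta_L^\alpha=0$ gives $W_{ij}^{\alpha\beta}a^i=0$ for every $j,\beta,\alpha$; nondegeneracy of $W$ forces $a=0$. With $a=0$, the $\d s$-components vanish automatically and the $\d q^i$-component becomes $W_{ij}^{\alpha\beta}b^j_\beta+\dfrac{\partial^2 L}{\partial s^\gamma\partial v^i_\alpha}c^\gamma=0$, which by nondegeneracy determines $b$ uniquely in terms of $c$. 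Thus $\mathcal{D}^{\rm R}$ is parametrised by $c\in\R^k$ and has rank~$k$, giving~(ii). Imposing also $Z\in\mathcal{D}^{\rm C}$ with $a=0$ yields $c=0$, hence $b=0$, giving~(iii).
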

The conditions in the above proposition are equivalent to require the matrix $W = W_{ij}^{\alpha\beta} = \bigg( \dfrac{\partial^2L}{\partial v_\alpha^i\partial v_\beta^j} \bigg)$ to be everywhere nonsingular.

\begin{dfn}
	A Lagrangian function $L$ is said to be \textbf{regular} if it satisfies the equivalent conditions in Proposition \ref{prop-legendre}. Otherwise, it is said to be a \textbf{singular} Lagrangian. If the Legendre map $\F L$ is a global diffeomorphism, $L$ is said to be \textbf{hyperregular}.
\end{dfn}

\begin{prop}
	Consider a regular $k$-contact Lagrangian system $(\oplus^k\T Q\times\R^k,L)$. There exists a unique family $(R_L)_\alpha\in\X(\oplus^k\T Q\times\R^k)$ of $k$ vector fields, called \textbf{Reeb vector fields}, satisfying the system of equations
	$$ \begin{cases}
		i( (R_L)_\alpha )\d\eta^\beta_L = 0\,,\\
		i( (R_L)_\alpha )\eta^\beta_L = \delta_\alpha^\beta\,.
	\end{cases} $$
\end{prop}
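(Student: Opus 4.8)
The plan is to reduce the statement to two results already established above. First I would invoke Proposition~\ref{prop-legendre}: since $L$ is regular, condition~(3) there says that the family $\eta_L^\alpha$ is a genuine $k$-contact structure on $\oplus^k\T Q\times\R^k$, so that $(\oplus^k\T Q\times\R^k,\eta_L^\alpha)$ is a $k$-contact manifold in the sense of Definition~\ref{dfn-kcontact-manifold}. But the defining equations for $(R_L)_\alpha$ in the statement,
$$ i((R_L)_\alpha)\d\eta_L^\beta = 0\,,\qquad i((R_L)_\alpha)\eta_L^\beta = \delta_\alpha^\beta\,, $$
are precisely the equations characterising the Reeb vector fields of this $k$-contact manifold. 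Hence existence and uniqueness follow at once from the theorem on Reeb vector fields of a $k$-contact manifold proved above. That is the whole argument in a nutshell: the only role of the regularity hypothesis is to certify that the axioms of Definition~\ref{dfn-kcontact-manifold} hold for $\eta_L^\alpha$, which is exactly what Proposition~\ref{prop-legendre} provides.

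For completeness I would also record the coordinate computation, which makes the mechanism transparent and exhibits the Reeb fields explicitly. Writing a candidate field in natural coordinates as $(R_L)_\alpha = A_\alpha^\gamma\,\partial/\partial s^\gamma + B_\alpha^i\,\partial/\partial q^i + C_{\alpha i}^\gamma\,\partial/\partial v_\gamma^i$ and using $\eta_L^\beta = \d s^\beta - (\partial L/\partial v_\beta^i)\,\d q^i$, the algebraic equation $i((R_L)_\alpha)\eta_L^\beta = \delta_\alpha^\beta$ reads $A_\alpha^\beta - (\partial L/\partial v_\beta^i)\,B_\alpha^i = \delta_\alpha^\beta$. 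Expanding $\d\eta_L^\beta = -\d\theta_L^\beta = \d q^i\wedge\d(\partial L/\partial v_\beta^i)$ and contracting with $(R_L)_\alpha$, the coefficient of $\d v_\gamma^j$ in $i((R_L)_\alpha)\d\eta_L^\beta$ is $W^{\beta\gamma}_{ij}B_\alpha^i$ with $W^{\beta\gamma}_{ij} = \partial^2 L/\partial v_\beta^i\partial v_\gamma^j$; since $W$ is nonsingular by regularity, this forces $B_\alpha^i = 0$, whence $A_\alpha^\beta = \delta_\alpha^\beta$. With $B_\alpha^i = 0$, the remaining nontrivial part of $i((R_L)_\alpha)\d\eta_L^\beta = 0$ is the coefficient of $\d q^i$, which is a linear system with matrix $W$ for the $C_{\alpha j}^\gamma$, again uniquely solvable. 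This both confirms the abstract argument and gives the components of $(R_L)_\alpha$ in closed form.

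The main (and essentially only) obstacle is bookkeeping: one must check that the defining equations in the statement match, index for index, the Reeb equations in the earlier theorem, and that Proposition~\ref{prop-legendre} really does verify conditions (i)--(iii) of Definition~\ref{dfn-kcontact-manifold} in the present setting. Both checks are immediate, so no serious difficulty is expected; the one quantitative input is the nonsingularity of the Hessian matrix $W$, which is exactly the content of the regularity assumption.
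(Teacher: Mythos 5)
Your proposal is correct and matches the paper's (implicit) argument: the paper states this proposition without proof in its review section, and the intended justification is exactly your reduction via Proposition~\ref{prop-legendre}(3) to the general existence--uniqueness theorem for Reeb vector fields on a $k$-contact manifold. Your coordinate computation is also sound and reproduces the explicit formula $(R_L)_\alpha = \partial/\partial s^\alpha - W_{\gamma\beta}^{ji}\,(\partial^2L/\partial s^\alpha\partial v_\gamma^j)\,\partial/\partial v_\beta^i$ that the paper records immediately after the statement.
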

In natural coordinates, the Reeb vector fields have the local expressions
$$ (R_L)_\alpha = \parder{}{s^\alpha} - W_{\gamma\beta}^{ji}\frac{\partial^2L}{\partial s^\alpha\partial v_\gamma^j}\parder{}{v_\beta^i}\,, $$
where $W_{\alpha\beta}^{ij}$ is the inverse of the Hessian matrix $W_{ij}^{\alpha\beta}$, namely
$$ W_{\alpha\beta}^{ij}\frac{\partial^2L}{\partial v_\beta^j\partial v_\gamma^k} = \delta_k^i\delta_\alpha^\gamma\,. $$

Taking into account the previous results, it is clear that every regular
(resp., singular) Lagrangian $L\colon\oplus^k\T Q\times\R^k\to\R$ 
has associated the $k$-contact Hamiltonian system
(resp., $k$-precontact Hamiltonian system) $(\oplus^k\T Q\times\R^k, \eta^\alpha_L,E_L)$.
\begin{dfn}
	Consider a $k$-contact Lagrangian system $(\oplus^k\T Q\times\R^k, L)$.\\
	The \textbf{$k$-contact Euler--Lagrange equations} for a holonomic map $\psi\colon\R^k\to\oplus^k\T Q\times\R^k$ are
	\begin{equation}\label{k-contact-sections-lag}
		\begin{cases}
			i(\psi_\alpha')\d\eta_L^\alpha = \left( \d E_L - (\Lie_{(R_L)_\alpha}E_L)\eta_L^\alpha \right)\circ\psi\,,\\
			i(\psi_\alpha')\eta_L^\alpha = -E_L\circ\psi\,.
		\end{cases}
	\end{equation}
	The \textbf{$k$-contact Lagrangian equations} for a $k$-vector field $\mathbf{X} = (X_\alpha)$ in $\oplus^k\T Q\times\R^k$ are
	\begin{equation}\label{k-contact-fields-lag}
		\begin{cases}
			i(X_\alpha)\d\eta_L^\alpha = \d E_L - (\Lie_{(R_L)_\alpha}E_L)\eta_L^\alpha\,,\\
			i(X_\alpha)\eta_L^\alpha = -E_L\,.
		\end{cases}
	\end{equation}
	A $k$-vector fields solution to equations \eqref{k-contact-fields-lag} is called a \textbf{Lagrangian $k$-vector field}.
\end{dfn}
\begin{prop}
	Given a $k$-contact regular Lagrangian system $(\oplus^k\T Q\times\R^k, L)$, the system of equations \eqref{k-contact-fields-lag} is consistent. For $k>1$, the solutions are not unique.
\end{prop}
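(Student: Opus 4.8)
The plan is to reduce the proposition to the corresponding fact for $k$-contact Hamiltonian systems and then to extract both claims from the shape of the field equations. Since $L$ is regular, Proposition~\ref{prop-legendre} tells us that $(\oplus^k\T Q\times\R^k,\eta_L^\alpha)$ is a bona fide $k$-contact manifold, hence it possesses well-defined Reeb vector fields $(R_L)_\alpha$, and $(\oplus^k\T Q\times\R^k,\eta_L^\alpha,E_L)$ is a $k$-contact Hamiltonian system whose $k$-contact Hamilton--De Donder--Weyl equations~\eqref{k-contact-fields} coincide with the Lagrangian equations~\eqref{k-contact-fields-lag}. Thus it is enough to show, for an arbitrary $k$-contact Hamiltonian system, that~\eqref{k-contact-fields} always has a solution and that the solution is not unique when $k>1$; this is in fact already recorded in Section~\ref{2}, so the argument mostly consists in justifying that remark in the present setting.

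For consistency I would work in Darboux coordinates and glue. The vertical distribution of $\oplus^k\T Q\times\R^k\to Q\times\R^k$ is integrable, has rank $nk$ and lies in $\mathcal{D}^{\rm C}$, so the $k$-contact Darboux theorem provides local coordinates $(q^i,p_i^\alpha,s^\alpha)$ with $\eta_L^\alpha=\d s^\alpha-p_i^\alpha\,\d q^i$ and $(R_L)_\alpha=\partial/\partial s^\alpha$. Expanding~\eqref{k-contact-fields} for a general $k$-vector field $\mathbf{X}=(X_\alpha)$ in these coordinates, exactly as in the computation preceding~\eqref{Hamilton-sections}, one sees that the equations determine only the combinations $(X_\alpha)^i$, $\sum_\alpha(X_\alpha)_i^\alpha$ and $\sum_\alpha(X_\alpha)^\alpha$ in terms of $E_L$ and its derivatives, and impose no condition whatsoever on the remaining components $(X_\alpha)_i^\beta$ and $(X_\alpha)^\beta$ with $\alpha\neq\beta$. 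Setting those free components to zero yields an explicit local solution; since~\eqref{k-contact-fields} is affine in $\mathbf{X}$, an affine (partition-of-unity) combination of local solutions is a global solution, which proves consistency.

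For the non-uniqueness when $k>1$, note that the set of solutions of~\eqref{k-contact-fields-lag}, now known to be nonempty, is an affine subspace modelled on the linear space of $k$-vector fields $\mathbf{Z}=(Z_\alpha)$ with $i(Z_\alpha)\d\eta_L^\alpha=0$ and $i(Z_\alpha)\eta_L^\alpha=0$. When $k\geq2$ this space is nonzero: choosing $Z_1=(R_L)_2$, $Z_2=-(R_L)_1$ and $Z_\alpha=0$ for $\alpha\geq3$, the defining relations $i((R_L)_\alpha)\d\eta_L^\beta=0$ and $i((R_L)_\alpha)\eta_L^\beta=\delta_\alpha^\beta$ give $i(Z_\alpha)\d\eta_L^\alpha=0$ and $i(Z_\alpha)\eta_L^\alpha=\delta_2^1-\delta_1^2=0$, while $\mathbf{Z}\neq0$ because the Reeb vector fields are linearly independent (they span the rank-$k$ Reeb distribution). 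Adding such a $\mathbf{Z}$ to any solution gives a different solution. The only step I expect to require some care is the passage from local to global solutions in the consistency part; it is routine precisely because the equations are affine, and the rest is bookkeeping with the coordinate expressions and with the defining equations of the Reeb vector fields.
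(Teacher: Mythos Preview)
Your argument is correct. Note, however, that the paper does not actually supply a proof of this proposition: Section~\ref{2} is a review, and the statement is simply recorded, with the coordinate computations \eqref{lag-1}--\eqref{lag-4} that follow serving as an implicit justification. There the reasoning stays in the natural Lagrangian coordinates $(q^i,v_\alpha^i,s^\alpha)$: regularity of the Hessian $W_{ij}^{\alpha\beta}$ forces the {\sc sopde} condition from \eqref{lag-2}, after which \eqref{lag-3}--\eqref{lag-4} become the system \eqref{Lagrange-1}--\eqref{Lagrange-2}, solvable for $(X_\alpha)^\alpha$ and $(X_\alpha)_\beta^j$ precisely because $W$ is invertible. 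Your route is different in flavour: you pass through the abstract $k$-contact Darboux chart and treat the problem as a generic $k$-contact Hamiltonian system, which makes the structure of the underdetermined components transparent and yields a clean, coordinate-free witness of non-uniqueness via the Reeb fields. Both approaches are straightforward; yours is slightly more intrinsic, while the paper's implicit one avoids invoking the Darboux theorem and the partition-of-unity gluing by exploiting the global natural chart on $\oplus^k\T Q\times\R^k$ directly.
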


In canonical coordinates, equations \eqref{k-contact-sections-lag} read
\begin{equation}\label{euler-lagrange-integral-maps}
	\parder{}{t^\alpha}\parder{L}{v_\alpha^i}\circ\psi = \left( \parder{L}{q^i} + \parder{L}{s^\alpha}\parder{L}{v_\alpha^i} \right)\circ\psi\quad,\quad \parder{(s^\alpha\circ\psi)}{t^\alpha} = L\circ\psi\,.
\end{equation}
For a $k$-vector field $\mathbf{X} = (X_\alpha)$ with local expression
$$ X_\alpha = (X_\alpha)^i\parder{}{q^i} + (X_\alpha)_\beta^i\parder{}{v_\beta^i} + (X_\alpha)^\beta\parder{}{s^\beta}\,, $$
the $k$-contact Lagrangian equations \eqref{k-contact-fields-lag} read
\begin{align}
	0 &= \left( (X_\alpha)^j - v_\alpha^j \right) \frac{\partial^2L}{\partial v_\alpha^j\partial s^\beta}\,\label{lag-1}\\
	0 &= \left( (X_\alpha)^j - v_\alpha^j \right) \frac{\partial^2L}{\partial v_\beta^i\partial v_\alpha^j}\,\label{lag-2}\\
	0 &= \left( (X_\alpha)^j - v_\alpha^j \right) \frac{\partial^2L}{\partial q^i\partial v_\alpha^j} + \parder{L}{q^i} - \frac{\partial^2L}{\partial s^\beta\partial v_\alpha^i}(X_\alpha)^\beta - \frac{\partial^2L}{\partial q^j\partial v_\alpha^i}(X_\alpha)^j - \frac{\partial^2L}{\partial v^j_\beta\partial v_\alpha^i}(X_\alpha)^j_\beta + \parder{L}{s^\alpha}\parder{L}{v_\alpha^i}\,,\label{lag-3}\\
	0 &= L + \parder{L}{v_\alpha^i}\left( (X_\alpha)^j - v_\alpha^j \right) - (X_\alpha)^\alpha\,.\label{lag-4}
\end{align}
If the Lagrangian $L$ is regular, equations \eqref{lag-2} lead to the condition $v_\alpha^i = (X_\alpha)^i$, which are the \textsc{sopde} conditions for $\mathbf{X}$.
In this case, \eqref{lag-1} holds identically and, equations \eqref{lag-3} and \eqref{lag-4} give
\begin{align}
	(X_\alpha)^\alpha &= L\,,\label{Lagrange-1}\\
	-\parder{L}{q^i} + \frac{\partial^2L}{\partial s^\beta\partial v_\alpha^i}(X_\alpha)^\beta + \frac{\partial^2L}{\partial q^j\partial v_\alpha^i}v^j_\alpha + \frac{\partial^2L}{\partial v^j_\beta\partial v_\alpha^i}(X_\alpha)^j_\beta &= \parder{L}{s^\alpha}\parder{L}{v_\alpha^i}\label{Lagrange-2}\,.
\end{align}
It is important to point out that if the \textsc{sopde} $\mathbf{X}$ is integrable, these last equations are the Euler--Lagrange equations \eqref{euler-lagrange-integral-maps} for its integral maps.
\begin{prop}
	Given a regular Lagrangian $L$, the corresponding Lagrangian $k$-vector fields $\mathbf{X}$ are {\sc sopde}s. If, in addition, $\mathbf{X}$ is integrable, its integral sections are solutions to the $k$-contact Euler--Lagrange field equations \eqref{k-contact-sections-lag}.

	This {\sc sopde} $\mathbf{X}$ is called the \textbf{Euler--Lagrange $k$-vector field} associated to the Lagrangian~$L$.
\end{prop}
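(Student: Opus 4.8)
The statement is essentially a structured restatement of the coordinate analysis carried out in the paragraphs just above, so my plan is to assemble that analysis into a clean three‑step argument.

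\emph{Step 1: every regular Lagrangian $k$-vector field is a {\sc sopde}.} I would begin with a Lagrangian $k$-vector field $\mathbf{X}=(X_\alpha)$, that is, a solution of \eqref{k-contact-fields-lag}, and use that in natural coordinates this system is equivalent to \eqref{lag-1}--\eqref{lag-4}. Equation \eqref{lag-2} says $\big((X_\alpha)^j-v_\alpha^j\big)\,\dfrac{\partial^2L}{\partial v_\beta^i\partial v_\alpha^j}=0$ for every pair $(\beta,i)$. Regarding $(X_\alpha)^j-v_\alpha^j$ as an unknown vector indexed by the composite index $(\alpha,j)$, this is a homogeneous linear system whose coefficient matrix is precisely the fibre Hessian $W=\big(\partial^2L/\partial v_\alpha^i\partial v_\beta^j\big)$. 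Since $L$ is regular, Proposition \ref{prop-legendre} (and the remark following it) guarantees that $W$ is everywhere nonsingular, hence $(X_\alpha)^j=v_\alpha^j$ for all $\alpha,j$. Comparing with the coordinate form of a {\sc sopde} recalled above, this is exactly the condition $J^\alpha(X_\alpha)=\Delta$, so $\mathbf{X}$ is a {\sc sopde}.

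\emph{Step 2: reduction of the remaining field equations.} Substituting $(X_\alpha)^j=v_\alpha^j$ into \eqref{lag-1}--\eqref{lag-4}, equation \eqref{lag-1} becomes an identity, while \eqref{lag-3} and \eqref{lag-4} collapse to \eqref{Lagrange-1} and \eqref{Lagrange-2}. Thus a regular Lagrangian $k$-vector field is completely characterized as the {\sc sopde} whose remaining components obey \eqref{Lagrange-1}--\eqref{Lagrange-2}; in particular such an $\mathbf{X}$ exists and is the Euler--Lagrange $k$-vector field.

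\emph{Step 3: integral sections solve the $k$-contact Euler--Lagrange field equations.} Suppose now that $\mathbf{X}$ is moreover integrable. Being a {\sc sopde}, by the Proposition characterizing {\sc sopde}s through holonomy of their integral sections, every integral section $\psi$ of $\mathbf{X}$ is holonomic, i.e. $\psi=(\phi',s^\alpha)$ for some $\phi\colon\R^k\to Q$. I would then evaluate the identities \eqref{Lagrange-1}--\eqref{Lagrange-2} along $\psi$ and use that $\psi$ is an integral section, so that $\partial(v_\beta^i\circ\psi)/\partial t^\alpha=(X_\alpha)_\beta^i\circ\psi$ and $\partial(s^\beta\circ\psi)/\partial t^\alpha=(X_\alpha)^\beta\circ\psi$, together with holonomy $v_\alpha^i\circ\psi=\partial\phi^i/\partial t^\alpha$. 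A chain‑rule computation of $\partial\big((\partial L/\partial v_\alpha^i)\circ\psi\big)/\partial t^\alpha$ turns \eqref{Lagrange-2} into the first group of equations in \eqref{euler-lagrange-integral-maps}, while \eqref{Lagrange-1} gives directly $\partial(s^\alpha\circ\psi)/\partial t^\alpha=L\circ\psi$. Since \eqref{euler-lagrange-integral-maps} is exactly the coordinate form of the $k$-contact Euler--Lagrange field equations \eqref{k-contact-sections-lag}, this proves the claim.

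There is no deep obstacle here; the content is bookkeeping, and the two points that deserve care are: in Step 1, recognizing that \eqref{lag-2} is a linear system governed by the \emph{full} Hessian indexed by the composite index $(\alpha,i)$ — not by $i$ alone for each fixed $\alpha$ — so that nonsingularity of $W$ forces the {\sc sopde} condition for all $\alpha$ simultaneously; and in Step 3, checking that the chain rule applied along a holonomic section correctly reproduces the total $t^\alpha$-derivative appearing in the Euler--Lagrange equations, including the contributions coming from the $s^\beta$-dependence of $L$.
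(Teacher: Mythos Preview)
Your proposal is correct and follows exactly the paper's approach: the paper does not give a separate proof environment for this proposition, but rather states it as a summary of the coordinate analysis in the paragraphs immediately preceding it (using the regularity of the Hessian to deduce the {\sc sopde} condition from \eqref{lag-2}, reducing to \eqref{Lagrange-1}--\eqref{Lagrange-2}, and remarking that for integrable $\mathbf{X}$ these become the Euler--Lagrange equations \eqref{euler-lagrange-integral-maps} for its integral maps). Your three steps simply make this argument explicit, including the chain-rule verification in Step~3 that the paper leaves to the reader.
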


Notice that in the case $k=1$ we recover the Lagrangian formalism for contact systems \cite{GGMRR-2019b}.

\subsection{The singular case: 
\texorpdfstring{$k$}--precontact Lagrangian and Hamiltonian systems}
\label{singular}

For singular Lagrangians most of the results and properties stated in the above sections do not hold.

In this case, for the Lagrangian formalism, $(\oplus^k\T Q\times\R^k, \eta^\alpha_L)$
is not a $k$-contact manifold, but a $k$-precontact one, and hence the Reeb vector fields
are not uniquely defined.
Nevertheless, the Euler--Lagrange and the Lagrangian equations \eqref{k-contact-sections-lag}
and \eqref{k-contact-fields-lag} for the system $(\oplus^k\T Q\times\R^k,\eta^\alpha_L,E_L)$
are independent on the Reeb vector fields used 
(as it is proved in \cite{DeLeon2019} for the case $k=1$).
In any case, solutions to the Lagrangian equations
are not necessarily {\sc sopde} and this is a condition that 
must be added to the Lagrangian equation \eqref{k-contact-fields-lag}.
In addition, the field equations are not necessarily consistent everywhere on $\oplus^k\T Q\times\R^k$ 
and we must implement a {\sl constraint algorithm} to find 
a submanifold $S_f\hookrightarrow\oplus^k\T Q\times\R^k$
(if it exists) where there are {\sc sopde} $k$-vector fields in $\oplus^k\T Q\times\R^k$,
tangent to $S_f$, which are solutions to the equations \eqref{k-contact-fields-lag} on $S_f$.

In order to state the Hamiltonian formalism for the singular case, 
we need to assume some minimal regularity conditions. So, we define:

\begin{dfn}
	A singular Lagrangian $L$ is said to be \textbf{almost-regular} if
	\begin{enumerate}[(i)]
		\item The image of the Legendre map $\mathcal{P} = \F L(\oplus^k\T Q\times\R^k)\subseteq \oplus^k\T^\ast Q\times\R^k$ is a closed submanifold.
		\item $\F L$ is a submersion on ${\cal P}$.
		\item For every $p\in\mathcal{P}$, the fibre $\F L^{-1}(p)\subseteq\oplus^k\T Q\times\R^k$ is a connected submanifold. 
	\end{enumerate}
\end{dfn}

Then, if $j_{\cal P}\colon{\cal P}\hookrightarrow\oplus^k\T Q\times\R^k$
is the natural embedding and $\eta_{\cal P}=j_{\cal P}^*\eta^\alpha\in\df^1({\cal P})$,
we have that $({\cal P},\eta^\alpha_{\cal P})$
is, in general, a $k$-precontact manifold.
Furthermore, the function $E_L$ is  ${\cal F}L$-projectable and
there is a unique $H_{\cal P}\in\Cinfty({\cal P})$
such that $E_L={\cal F}L_o^*\,H_{\cal P}$,
where ${\cal F}L_o\colon\oplus^k\T Q\times\R^k\to{\cal P}$
is defined by ${\cal F}L=j_{\cal P}\circ{\cal F}L_o$.
Therefore, on the submanifold ${\cal P}$, 
there is a Hamiltonian formalism associated with the Lagrangian system,
and the $k$-contact Hamilton–De Donder–Weyl equations for a $k$-vector field
${\bf Y}=(Y_\alpha)$ in ${\cal P}$ are
\begin{equation}\label{HeqsP}
	\begin{cases}
i(Y_\alpha)\d\eta^\alpha_{\cal P} = \d H_{\cal P}-(\Lie_{R_\alpha}H_{\cal P})\eta^\alpha\ ,
\\
i(Y_\alpha)\eta^\alpha_{\cal P} = -H_{\cal P}\ .
	\end{cases}
\end{equation}
As in the Lagrangian formalism, these equations are not necessarily consistent everywhere on ${\cal P}$ 
and the constraint algorithm should also be implemented to find 
a submanifold $P_f\hookrightarrow{\cal P}$
(if it exists) where there are $k$-vector fields
tangent to $P_f$, which are solutions to the above equations \eqref{HeqsP} on $P_f$.

As a final remark, next we explain the guidelines of the constraint algorithm.
Consider a generic $k$-precontact Hamiltonian system $(M,\eta^\alpha,H)$
and its $k$-contact Hamiltonian field equations \eqref{k-contact-fields}

\begin{itemize}
\item
First we find the {\sl consistency conditions}:
Let $M_1$ be the subset of $M$ made of the points of $M$ where a solution to \eqref{k-contact-fields} exist, that is,
$$ 
M_1 = \{{\rm p}\in M\ \vert\ \exists(Y_1,\dotsc,Y_k)\in\oplus^k\T_qM \mbox{\rm\ solution to \eqref{k-contact-fields} at every ${\rm p}$}\}\ . 
$$
Assuming that $M_1\hookrightarrow M$ is a submanifold, there exists a section of the canonical projection $\tau_M\colon \oplus^k\T M\to M$ defined on $M_1$ which is a solution to \eqref{k-contact-fields}, but which may not be a $k$-vector field on $M_1$.
\item
Then we apply the {\sl tangency conditions}:
we define a new subset $M_2\subset M_1$ as
$$ 
M_2 = \{{\rm p}\in M_1\ \vert\ \exists(Y_1,\dotsc,Y_k)\in\oplus^k\T_qM_1\mbox{\ solution to \eqref{k-contact-fields} at every ${\rm p}$}\}\,. $$
Assuming that $M_2\hookrightarrow M_1$, then there is a section of the projection 
$\tau_{M_1}\colon \oplus^k\T M_1\to M_1$ defined on $M_2$ solution to equations 
\eqref{k-contact-fields} which does not define in general a $k$-vector field on $M_2$.

Taking a basis of independent constraint functions $\{\zeta^I\}$ locally defining $M_1$,
the constraints defining $M_2$ are obtained from
$$
(\Lie_{Y_{\alpha}}\zeta^I)\vert_{M_1}=0 \ .
$$
\item
Iterating this procedure we can obtain a sequence of constraint submanifolds
$$
\dotsb\hookrightarrow M_i\hookrightarrow\dotsb\hookrightarrow M_2\hookrightarrow M_1\hookrightarrow M\,.
$$
If this procedure stabilizes, that is, there exists a natural number $f\in\mathbb{N}$ such that 
$M_{f+1} = M_f$ and $\dim M_f>0$, we say that $M_f$ is the \textsl{final constraint submanifold}, 
where we can find solutions to equations \eqref{k-contact-fields}. 
Notice that the $k$-vector field solution may not be unique and, in general, they are not integrable.
\end{itemize}

\section{Skinner--Rusk unified formalism}
\label{3}

\subsection{Extended Pontryagin bundle: \texorpdfstring{$k$}--precontact canonical structure}

Consider a $k$-contact field theory with configuration space $Q\times\R^k$, where $\dim Q=n$, with coordinates $(q^i,s^\alpha)$. Now consider the bundles $\oplus^k\T Q\times\R^k$ and $\oplus^k\T^\ast Q\times\R^k$ equipped with natural coordinates $(q^i,v^i_\alpha, s^\alpha)$ and $(q^i, p_i^\alpha, s^\alpha)$ respectively.
We have the canonical projections
\begin{align*}
	\tau_1 &\colon \oplus^k\T Q\times\R^k \to \oplus^k\T Q & \tau_0&\colon \oplus^k\T Q\times\R^k \to Q\times\R^k\\
	\pi_1 &\colon \oplus^k\T^\ast Q\times\R^k \to \oplus^k\T^\ast Q & \pi_0&\colon \oplus^k\T^\ast Q\times\R^k \to Q\times\R^k \ .
\end{align*}
 We denote by $\d s^\alpha$ the volume form of $\R$ and its pull-backs to all the manifolds by the corresponding canonical projections. Consider the canonical forms $\theta_0\in\Omega^1(\T^\ast Q)$ and $\omega_0\in\Omega^2(\T^\ast Q)$ with local expressions $\theta_0 = p_i\d q^i$, $\omega_0 = \d q^i\wedge\d p_i$ in $\T^\ast Q$. We denote by $\theta^\alpha$ and $\omega^\alpha$ their pull-backs to $\oplus^k\T^\ast Q$ and $\oplus^k\T^\ast Q\times\R^k$, which have local expressions
$$ \theta^\alpha = p_i^\alpha\d q^i\quad,\quad\omega^\alpha = \d q^i\wedge\d p_i^\alpha\,. $$

\begin{dfn}
The \textbf{extended unified bundle} or \textbf{extended Pontryagin bundle} is
	$$ \W = \oplus^k\T Q\times_Q\oplus^k\T^\ast Q\times\R^k\,, $$
and it is endowed with the canonical projections
\begin{align*}
	\rho_1&\colon \W\to\oplus^k\T Q\times\R^k & 
	\rho_2&\colon \W\to\oplus^k\T^\ast Q\times\R^k \\
	\rho_0&\colon \W\to Q\times\R^k & 
	s^\alpha&\colon\W\to\R \ .
\end{align*}
\end{dfn}

The extended unified bundle has natural coordinates $(q^i,v^i_\alpha,p_i^\alpha,s^\alpha)$. 
We summarize all these manifolds and projections in the following diagram:
\begin{equation*}
\xymatrix{
	& \W = \oplus^k\T Q\times_Q\oplus^k\T^\ast Q\times\R^k \ar[ddr]^{\rho_2} \ar[ddl]_{\rho_1} \ar[dddd]_(0.6){\rho_0}|(0.415)\hole \ar@/^2.5pc/[dddddd]^(.4){s^\alpha} & \\ \\
    \oplus^k\T Q\times\R^k \ar[ddr]_{\tau_0} \ar[ddd]_{\tau_1} \ar@/^1pc/[rr]|(0.59)\hole^(0.35){\F\L} & & \oplus^k\T^\ast Q\times\R^k \ar[ddl]^{\pi_0}|(0.77)\hole \ar[ddd]^{\pi_1} \ar@/^2.5pc/[ddddd]^{\pi_1^\alpha} \\ \\
     & Q\times\R^k \ar[dd]_{\pi_2^\alpha} & \\
    \oplus^k\T Q &  & \oplus^k\T^\ast Q \ar[dd]^{\pi^\alpha} \\
    & \R & \\
    & & \T^\ast Q
}
\end{equation*}

\begin{dfn}
	Let $\psi\colon \R^k\to \W$ be a smooth map. We say that $\psi$ is \textbf{holonomic} if $\rho_1\circ\psi\colon\R^k\to\oplus^k\T Q\times\R^k$ is holonomic. A $k$-vector field $\mathbf{Z}\in\X^k(\W)$ is a \textbf{second order partial differential equation} ({\sc sopde} for short) if its integral sections are holonomic in $\W$.
\end{dfn}

In coordinates, a holonomic map $\psi\colon \R^k\to \W$ is expressed as
$$ \psi = \left( q^i(t), \parder{q^i}{t^\alpha}(t), p_{i}^\alpha(t),s^\alpha(t) \right)\,. $$ 
A $k$-vector field $\mathbf{Z} = (Z_1,\dotsc,Z_k)$ in $\W$ is a \textsc{sopde} if it has the following expression in natural coordinates:
$$ Z_\alpha = v^i_\alpha\parder{}{q^i} + (Z_\alpha)^i_\beta\parder{}{v^i_\beta} + (Z_\alpha)^\beta_i\parder{}{p_i^\beta} + (Z_\alpha)^\beta\parder{}{s^\beta}\,. $$

The extended unified bundle $\W$ is endowed with the following canonical structures:
\begin{dfn}
	\begin{enumerate}[(1)]
		\item The \textbf{coupling function} in $\W$ is the map $\mathcal{C}\colon\W\to\R$ defined as
		$$ \mathcal{C}(v_{1q},\dotsc,v_{kq},\vartheta_q^1,\dotsc,\vartheta_q^1, s^\alpha) = \vartheta_q^\alpha(v_{\alpha q})\,. $$
		\item The \textbf{canonical 1-forms} $\Theta^\alpha = \rho_2^\ast\:\theta^\alpha\in\Omega^1(\W)$.
		\item The \textbf{canonical 2-forms} $\Omega^\alpha = \rho_2^\ast\:\omega^\alpha = -\d\Theta^\alpha\in\Omega^2(\W)$.
		\item The \textbf{contact 1-forms} $\eta^\alpha = \d s^\alpha - \Theta^\alpha\in\Omega^1(\W)$. Notice that $\d\eta^\alpha = \Omega^\alpha$.
	\end{enumerate}
\end{dfn}

In natural coordinates of $\W$, these natural structures are written as
\begin{equation*}
	\Theta^\alpha = p_i^\alpha\d q^i\ ,\quad\Omega^\alpha = \d q^i\wedge\d p_i^\alpha\ ,\quad\eta^\alpha = \d s^\alpha - p_i^\alpha\d q^i\,.
\end{equation*}

The contact 1-forms $\eta^1,\dotsc,\eta^k$ define a $k$-precontact structure in the manifold $\W$. Notice that this is not a $k$-contact structure because conditions $(ii)$ and $(iii)$ on Definition \ref{dfn-kcontact-manifold} do not hold.

\begin{prop} There exists a family of Reeb vector fields $\Reeb_1,\dotsc,\Reeb_k\in\X(\W)$ such that
$$
	\begin{cases}
		i(\Reeb_\alpha)\d\eta^\beta = 0\,,\\
		i(\Reeb_\alpha)\eta^\beta = \delta_\alpha^\beta\,.
	\end{cases}
$$
\end{prop}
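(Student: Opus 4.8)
The plan is to work in the natural coordinates $(q^i,v^i_\alpha,p^\alpha_i,s^\alpha)$ of $\W$ and exhibit the Reeb vector fields explicitly, since the $k$-precontact structure here is about as degenerate as it can get: $\d\eta^\alpha = \Omega^\alpha = \d q^i\wedge\d p^\alpha_i$ involves only the $q$'s and $p$'s, and $\eta^\alpha = \d s^\alpha - p^\alpha_i\,\d q^i$. First I would make the natural guess $\Reeb_\alpha = \partial/\partial s^\alpha$, mimicking the situation in the Lagrangian/Hamiltonian $k$-contact models and in the canonical example $(\oplus^k\T^\ast Q)\times\R^k$. Then I would simply verify the two defining equations: $i(\partial/\partial s^\alpha)\d\eta^\beta = i(\partial/\partial s^\alpha)(\d q^i\wedge\d p^\beta_i) = 0$ because neither $\d q^i$ nor $\d p^\beta_i$ pairs with $\partial/\partial s^\alpha$; and $i(\partial/\partial s^\alpha)\eta^\beta = i(\partial/\partial s^\alpha)(\d s^\beta - p^\beta_i\,\d q^i) = \delta^\beta_\alpha$. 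This shows existence.

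Since the statement only claims \emph{existence} of such a family (not uniqueness, which indeed fails here because $\W$ carries genuinely a $k$-precontact and not a $k$-contact structure — the fibre directions $\partial/\partial v^i_\alpha$ lie in the kernel of all the $\eta^\alpha$ and all the $\d\eta^\alpha$), the verification above essentially completes the argument. For completeness I would add a coordinate-free remark: $\W = \oplus^k\T Q\times_Q\oplus^k\T^\ast Q\times\R^k$ projects onto $(\oplus^k\T^\ast Q)\times\R^k$ via $\rho_2$, the contact forms $\eta^\alpha$ are $\rho_2$-related to the canonical $k$-contact forms there, and any lift of the canonical Reeb fields $\partial/\partial s^\alpha$ on $(\oplus^k\T^\ast Q)\times\R^k$ to $\W$ that is $\rho_2$-projectable and annihilates the extra $v$-directions works; the coordinate choice $\Reeb_\alpha = \partial/\partial s^\alpha$ is the simplest such lift.

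The only mild subtlety — not really an obstacle — is that, unlike in the regular $k$-contact case, one must acknowledge that this family is not canonically determined: adding to each $\Reeb_\alpha$ any combination of the vector fields $\partial/\partial v^i_\beta$ (which are annihilated by every $\eta^\gamma$ and every $\d\eta^\gamma$) yields another valid family. I would therefore phrase the proof as producing a distinguished representative rather than \emph{the} Reeb vector fields, consistent with the earlier remark on $k$-precontact manifolds. No constraint-algorithm machinery or Darboux-type argument is needed; the whole proof is the two one-line interior-product computations plus this remark on non-uniqueness.
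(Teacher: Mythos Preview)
Your proof is correct and matches the paper's approach: the paper does not give a separate proof but simply records, immediately after the statement, the general coordinate expression $\Reeb_\alpha = \dparder{}{s^\alpha} + (\Reeb_\alpha)^i_\beta\dparder{}{v^i_\beta}$ with arbitrary coefficients $(\Reeb_\alpha)^i_\beta$, which is exactly your distinguished representative $\partial/\partial s^\alpha$ together with your remark on non-uniqueness. Your additional coordinate-free comment via $\rho_2$ is a nice touch that the paper does not include.
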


Notice that, since the manifold $\W$ is $k$-precontact, 
the family $(\Reeb_\alpha)$ of Reeb vector fields is not unique. 
In coordinates, $\Reeb_\alpha$ can be written as
\begin{equation}\label{eq:Reeb-local-expr}
	\Reeb_\alpha = \parder{}{s^\alpha} + (\Reeb_\alpha)^i_\beta\parder{}{v^i_\beta}\,,
\end{equation}
where $(\Reeb_\alpha)^i_\beta$ are arbitrary functions in $\W$.

\begin{dfn}
	Let $L\in\Cinfty(\oplus^k\T Q\times\R^k)$ be a Lagrangian function and let $\L=\rho_1^\ast L\colon\W\to\R$. We define the \textbf{Hamiltonian function} associated to $L$ by
	\begin{equation}\label{Hamiltonian-function}
		\H = \mathcal{C} - \L = p_i^\alpha v^i_\alpha - L(q^j,v^j_\alpha,s^\alpha)\in\Cinfty(\W)\,.
	\end{equation}
\end{dfn}

\begin{obs}{\rm
Notice that, 
since the manifold $\W$ along with the contact 1-forms $\eta^\alpha$ 
is a $k$-precontact manifold, 
$(\W,\eta^\alpha,\H)$ is a $k$-precontact Hamiltonian system.
}\end{obs}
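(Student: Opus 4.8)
The plan is to check the two ingredients in the definition of a $k$-precontact Hamiltonian system: that $(\W,\eta^\alpha)$ is a $k$-precontact manifold and that $\H\in\Cinfty(\W)$. The second is immediate: since $\H = \mathcal{C} - \L$, with the coupling function $\mathcal{C}\colon\W\to\R$ smooth by construction and $\L=\rho_1^\ast L$ the pullback of the smooth Lagrangian $L$ along the projection $\rho_1$, we have $\H\in\Cinfty(\W)$, with local form $\H = p_i^\alpha v^i_\alpha - L(q^j,v^j_\alpha,s^\alpha)$. The substance thus lies in the first ingredient, namely exhibiting $\eta^1,\dotsc,\eta^k$ as $k$ pointwise-independent smooth $1$-forms that fail at least one of conditions (i)--(iii) of Definition \ref{dfn-kcontact-manifold}.

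To this end I would run the distribution computation in the natural chart $(q^i,v^i_\alpha,p_i^\alpha,s^\alpha)$, using $\eta^\alpha = \d s^\alpha - p_i^\alpha\d q^i$ and $\d\eta^\alpha = \d q^i\wedge\d p_i^\alpha$. Condition (i) holds, because the independent $\d s^\alpha$ force $\eta^1\wedge\dotsb\wedge\eta^k\neq 0$. Contracting a generic tangent vector against each $\d\eta^\alpha$ shows that $\mathcal{D}^{\rm R} = \bigcap_\alpha\ker\widehat{\d\eta^\alpha}$ is spanned by the $\partial/\partial s^\alpha$ together with all the $\partial/\partial v^i_\beta$, so it has rank $k+nk$ rather than $k$ and condition (ii) fails. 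Intersecting with $\mathcal{D}^{\rm C} = \bigcap_\alpha\ker\widehat{\eta^\alpha}$ then leaves exactly $\langle\partial/\partial v^i_\beta\rangle$, of rank $nk\neq 0$, so condition (iii) fails as well. This is precisely the assertion made for $\W$ in the text preceding the existence of the Reeb vector fields, and it certifies $\eta^1,\dotsc,\eta^k$ as a $k$-precontact (but not $k$-contact) structure.

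I do not expect a genuine obstacle here: the claim is a bookkeeping consequence of the definitions once the ranks of $\mathcal{D}^{\rm R}$ and $\mathcal{D}^{\rm C}\cap\mathcal{D}^{\rm R}$ are recorded, and that local calculation is the very one underlying the existence of the (non-unique) Reeb vector fields $\Reeb_\alpha$ in \eqref{eq:Reeb-local-expr}. Combining the smoothness of $\H$ with the $k$-precontact structure on $\W$, the triple $(\W,\eta^\alpha,\H)$ is a $k$-precontact Hamiltonian system by definition.
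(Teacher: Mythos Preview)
Your proposal is correct and follows the same approach as the paper: the paper has already asserted, just before introducing the Reeb vector fields, that conditions (ii) and (iii) of Definition~\ref{dfn-kcontact-manifold} fail for $(\W,\eta^\alpha)$, and the remark itself is then just the definitional observation that adding a smooth $\H$ yields a $k$-precontact Hamiltonian system. You have simply made explicit the rank computation for $\mathcal{D}^{\rm R}$ and $\mathcal{D}^{\rm C}\cap\mathcal{D}^{\rm R}$ that the paper leaves implicit.
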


\subsection{\texorpdfstring{$k$}--contact dynamical equations in the unified formulation}

\begin{dfn}
	The \textbf{Lagrangian-Hamiltonian problem} associated with the $k$-precontact system $(\W,\eta,\H)$ consists in finding the integral sections $\psi\colon\R^k\to\W$ of a $k$-vector field $\mathbf{Z} = (Z_1,\dotsc,Z_k)$ in $\W$ satisfying
	\begin{equation}\label{k-contact-fields-SR}
		\begin{cases}
			i(Z_\alpha)\d\eta^\alpha = \d\mathcal{H} - (\Lie_{\Reeb_\alpha}\mathcal{H})\eta^\alpha\,,\\
			i(Z_\alpha)\eta^\alpha = -\mathcal{H}\,,
		\end{cases}
	\end{equation}
	or, what is equivalent,
	$$ \begin{cases}
		\Lie_{Z_\alpha}\eta^\alpha = -(\Lie_{\Reeb_\alpha}\mathcal{H})\eta^\alpha\,,\\
		i(Z_\alpha)\eta^\alpha = -\mathcal{H}\,.
	\end{cases} $$
\end{dfn}

Given that $(\W,\eta^\alpha,\H)$ is a $k$-precontact Hamiltonian system, equations \eqref{k-contact-fields-SR} are not consistent everywhere in $\W$. Hence, we need to use the constraint algorithm described in Section \ref{singular} 
in order to find (if it exists) a final constraint submanifold of $\W$ 
where the existence of consistent solutions to equations \eqref{k-contact-fields-SR} is assured.

In natural coordinates $(q^i,v^i_\alpha,p_i^\alpha,s^\alpha)$ of $\W$, the local expression of a $k$-vector field $\mathbf{Z} = (Z_1,\dotsc,Z_k)$ in $\W$ is
$$ Z_\alpha = (Z_\alpha)^i\parder{}{q^i} + (Z_\alpha)^i_\beta\parder{}{v^i_\beta} + (Z_\alpha)^\beta_i\parder{}{p_i^\beta} + (Z_\alpha)^\beta\parder{}{s^\beta}\,. $$
Therefore, we have
\begin{align*}
	i(Z_\alpha)\d\eta^\alpha &= (Z_\alpha)^i\d p_i^\alpha - (Z_\alpha)_i^\alpha\d q^i\,,\\
	i(Z_\alpha)\eta^\alpha &= (Z_\alpha)^\alpha - p_i^\alpha (Z_\alpha)^i\,.
\end{align*}
Furthermore,
\begin{align*}
	\d\H &= v_\alpha^i\d p_i^\alpha + \left( p_i^\alpha - \parder{\L}{v_\alpha^i} \right)\d v_\alpha^i - \parder{\L}{q^i}\d q^i - \parder{\L}{s^\alpha}\d s^\alpha\,,\\
	(\Reeb_\alpha(\H))\eta^\alpha &= -\parder{\L}{s^\alpha}(\d s^\alpha - p_i^\alpha\d q^i)\,.
\end{align*}
Taking all this into account, the second equation \eqref{k-contact-fields-SR} gives
\begin{equation}\label{second}
	(Z_\alpha)^\alpha = \left((Z_\alpha)^i - v_\alpha^i\right)p_i^\alpha + L\circ\rho_1\,,
\end{equation}
and the first equation \eqref{k-contact-fields-SR} leads to the conditions
\begin{align}
	(Z_\alpha)^i &= v_\alpha^i & &(\mbox{coefficients in }\d p_i^\alpha)\,,\label{one}\\
	p_i^\alpha &= \parder{\L}{v_\alpha^i} = \parder{L}{v_\alpha^i}\circ\rho_1 & &(\mbox{coefficients in }\d v_\alpha^i)\,,\label{two}\\
	(Z_\alpha)_i^\alpha &= \parder{L}{q^i}\circ\rho_1 + p_i^\alpha\left(\parder{L}{s^\alpha}\circ\rho_1\right) & &(\mbox{coefficients in }\d q^i)\,.\label{three}
\end{align}
From these equations we have that:
\begin{itemize}
	\item Conditions \eqref{second} and \eqref{one} imply that $(Z_\alpha)^\alpha = L\circ\rho_1\,.$
	\item Equations \eqref{one} are the holonomy conditions. This means that the $k$-vector field $\mathbf{Z}$ is a \textsc{sopde}. Hence, as usual, we obtain straightforwardly the \textsc{sopde} condition from the Skinner--Rusk formalism. This is an important difference with the Lagrangian formalism, where we need to impose the second order condition in the case of singular Lagrangians.
	\item The algebraic equations \eqref{two} are consistency conditions which define a {\sl first constraint submanifold} $\W_1\hookrightarrow\W$. In fact, $\W_1$ is essentially the graph of $\F\L$:
	$$ \W_1 = \left\{ (v_q,\F\L(v_q))\in\W\ \vert\  v_q\in\oplus^k\T Q\times\R^k \right\}\,.$$
	This means that the Skinner--Rusk formalism includes the definition of the Legendre map as a consequence of the constraint algorithm.
\end{itemize}
Taking all this into account, a $k$-vector field $\mathbf{Z} = (Z_1,\dotsc,Z_k)$ is a solution to \eqref{k-contact-fields-SR} if $Z_\alpha$ has the form
$$ 
Z_\alpha = v^i_\alpha\parder{}{q^i} + (Z_\alpha)^i_\beta\parder{}{v^i_\beta} + (Z_\alpha)^\beta_i\parder{}{p_i^\beta} + (Z_\alpha)^\beta\parder{}{s^\beta}\qquad\mbox{(on $\W_1$)}\,, 
$$
with the restrictions
$$
	\begin{dcases}
		(Z_\alpha)^\alpha = \L\,,\\
		(Z_\alpha)_i^\alpha = \parder{\L}{q^i} + p_i^\alpha\parder{\L}{s^\alpha}\,.
	\end{dcases}
$$
Notice that the $k$-vector field $\mathbf{Z}$ 
does not depend on the arbitrary functions $(\Reeb_\alpha)^i_\beta$
chosen to define the Reeb vector fields 
according to \eqref{eq:Reeb-local-expr}.

At this point the constraint algorithm continues by demanding the tangency of $\mathbf{Z}$ 
to the first constraint submanifold $\W_1$. 
We denote by $\xi_j^\beta$ the constraint functions defining $\W_1$,
$$ \xi_j^\beta = p_j^\beta - \parder{\L}{v_\beta^j}\,. $$
Imposing the tangency conditions $X_\alpha(\xi_j^\beta)=0$ we obtain
\begin{equation}\label{tangency-condition}
	0 = X_\alpha(\xi_j^\beta) = X_\alpha\bigg(p_j^\beta - \parder{\L}{v_\beta^j}\bigg) = (Z_\alpha)_j^\beta - \parder{^2\L}{q^i\partial v_\beta^j}v^i_\alpha - \parder{^2\L}{v_\gamma^i\partial v_\beta^j}(Z_\alpha)_\gamma^i - \parder{^2\L}{s^\gamma\partial v_\beta^j}(Z_\alpha)^\gamma \qquad\mbox{(on $\W_1$)}\,,
\end{equation}
which partially determine the coefficients of the $k$-vector field~$\mathbf{Z}$.


It is interesting to point out that, in general, equations \eqref{k-contact-fields-SR} do not have a unique solution. Solutions to \eqref{k-contact-fields-SR} are given by
$$ 
(Z_1,\dotsc,Z_k) + (\ker\Omega^\sharp\cap\ker\eta^\sharp)\ , 
$$
where $\mathbf{Z}=(Z_1,\dotsc,Z_k)$ is a particular solution, $\Omega^\sharp$ is the morphism defined by
\begin{align*}
	\Omega^\sharp \colon \oplus^k\T\W &\to \T^\ast\W\\
	(Z_1,\dotsc,Z_k) &\mapsto \Omega^\sharp(Z_1,\dotsc,Z_k) = i(Z_\alpha)\d\eta^\alpha.
\end{align*}
and $\eta^\sharp$ is given by $\eta^\sharp(Z_1,\dotsc,Z_k) = \eta^\alpha(Z_\alpha)$.

Now we distinguish two cases:
\begin{itemize}
	\item If the Lagrangian function $\L$ is regular, equations \eqref{tangency-condition} allow us to compute the functions $(Z_\alpha)_\gamma^i$. Notice that, however, we do not have uniqueness of solutions to equations \eqref{k-contact-fields-SR}.
	\item If $\L$ is a singular Lagrangian, these equations establish some relations among the functions $(Z_\alpha)_\gamma^i$. In addition, some new constraints may appear defining a new constraint submanifold $\W_2\hookrightarrow\W_1\hookrightarrow\W$. We must now implement the constraint algorithm described in Section \ref{singular} in order to obtain a constraint submanifold (if it exists) where we can ensure the existence of solutions. 
\end{itemize}

\subsection{Recovering the Lagrangian and the Hamiltonian formalisms}

Consider the restriction of the projections $\rho_1\colon\W\to\oplus^k\T Q\times\R^k$, $\rho_2\colon\W\to\oplus^k\T^\ast Q\times\R^k$
restricted to $\W_1\subset\W$,
\begin{equation*}
	\rho_1^0\colon\W_1\to\oplus^k\T Q\times\R^k\ ,\qquad \rho_2^0\colon\W_1\to\oplus^k\T^\ast Q\times\R^k\,.
\end{equation*}
Since $\W_1$ is the graph of the Legendre transformation $\F\L$, it is clear that the projection $\rho_1^0$ is really a diffeomorphism.

Consider an integrable $k$-vector field $\mathbf{Z}=(Z_1,\dotsc,Z_k)$ solution to equations \eqref{k-contact-fields-SR}. Every integral section $\psi\colon\R^k\to\W$, given by
$\psi(t) = (\psi^i(t),\psi^i_\alpha(t), \psi_i^\alpha(t),\psi^\alpha(t))$,
is of the form
$$ \psi = (\psi_L,\psi_H)\,, $$
with $\psi_L = \rho_1\circ\psi\colon\R^k\to\oplus^k\T Q\times\R^k$, and if $\psi$ takes values in $\W_1$, we also have that $\psi_H = \F\L\circ\psi_L$:
$$
	\psi_H(t) = (\rho_2\circ\psi)(t)= (\psi^i(t),\psi_i^\alpha(t),\psi^\alpha(t))=\left( \psi^i(t), \parder{\L}{v_\alpha^i}(\psi_L(t)), \psi^\alpha(t) \right)=(\F\L\circ\psi_L)(t)\,,
$$
where we have used \eqref{two}. 
Notice that, in this way, 
we can always project from the Skinner--Rusk formalism onto the Lagrangian or the Hamiltonian formalisms 
by restricting to the first or second factor of the Pontryagin bundle~$\W$. 
In particular, relations \eqref{two} define the image of the Legendre transformation 
$\F\L(\oplus^k\T Q\times\R^k)\subset\oplus^k\T^\ast Q\times\R^k$. 
These relations are called \textsl{primary Hamiltonian constraints}.

The following theorem establishes how we can recover 
the Euler--Lagrange equations \eqref{k-contact-sections-lag} 
from the Skinner--Rusk formalism.

\begin{thm}
\label{T0}
	Consider an integrable $k$-vector field $\mathbf{Z}=(Z_1,\dotsc,Z_k)$ in $\W$, solution to equations \eqref{k-contact-fields-SR}. Let $\psi\colon\R^k\to\W_1\subset\W$ be an integral section of $\mathbf{Z}$ given by $\psi = (\psi_L,\psi_H)$, with $\psi_H = \F\L\circ\psi_L$. Then, $\psi_L$ is the first prolongation of the projected section $\phi = \tau_0\circ\rho_1^0\circ\psi\colon\R^k\to Q\times\R^k$, and $\phi$ is a solution to the Euler--Lagrange equations \eqref{k-contact-sections-lag}.
\end{thm}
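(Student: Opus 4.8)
The plan is to carry everything out in the natural coordinates $(q^i,v^i_\alpha,p_i^\alpha,s^\alpha)$ of $\W$, using the coordinate description of the solutions of \eqref{k-contact-fields-SR} already obtained above, and then to recognize the equations one gets for $\phi$ as the coordinate form \eqref{euler-lagrange-integral-maps} of the $k$-contact Euler--Lagrange equations \eqref{k-contact-sections-lag}. First I would prove that $\psi_L$ is holonomic. Writing the integral section as $\psi=(\psi^i,\psi^i_\alpha,\psi_i^\alpha,\psi^\alpha)$, the equation $\psi'=\mathbf{Z}\circ\psi$ reads componentwise $\partial\psi^i/\partial t^\alpha=(Z_\alpha)^i\circ\psi$, $\partial\psi^i_\beta/\partial t^\alpha=(Z_\alpha)^i_\beta\circ\psi$, $\partial\psi_i^\beta/\partial t^\alpha=(Z_\alpha)^\beta_i\circ\psi$ and $\partial\psi^\beta/\partial t^\alpha=(Z_\alpha)^\beta\circ\psi$. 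By the consistency conditions \eqref{one} every solution $\mathbf{Z}$ satisfies $(Z_\alpha)^i=v^i_\alpha$, so $\mathbf{Z}$ has the coordinate form of a \textsc{sopde} on $\W$; in particular the first group of these equations gives $\partial\psi^i/\partial t^\alpha=v^i_\alpha\circ\psi=\psi^i_\alpha$. Hence, setting $\phi=\tau_0\circ\rho_1^0\circ\psi=(\psi^i,\psi^\alpha)\colon\R^k\to Q\times\R^k$, the projection $\psi_L=\rho_1^0\circ\psi=(\psi^i,\psi^i_\alpha,\psi^\alpha)$ is exactly the first prolongation $\phi^{[1]}$ of $\phi$.

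Next I would recover the Euler--Lagrange equations. On $\W_1$ the solution $\mathbf{Z}$ of \eqref{k-contact-fields-SR} satisfies $(Z_\alpha)^\alpha=\L$ and $(Z_\alpha)_i^\alpha=\parder{\L}{q^i}+p_i^\alpha\parder{\L}{s^\alpha}$, as established above from \eqref{second}, \eqref{one} and \eqref{three}. Evaluating the remaining components of $\psi'=\mathbf{Z}\circ\psi$ along $\psi$ therefore gives
$$ \parder{\psi^\alpha}{t^\alpha}=\L\circ\psi=L\circ\psi_L\,,\qquad \parder{\psi_i^\alpha}{t^\alpha}=\left(\parder{\L}{q^i}+p_i^\alpha\parder{\L}{s^\alpha}\right)\circ\psi\,. $$
Since $\psi$ takes values in $\W_1$, the primary constraints \eqref{two} hold along $\psi$, i.e. $p_i^\alpha\circ\psi=\psi_i^\alpha=\parder{L}{v_\alpha^i}\circ\psi_L$ (which is precisely the statement $\psi_H=\F\L\circ\psi_L$). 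Substituting $p_i^\alpha\circ\psi=\parder{L}{v_\alpha^i}\circ\psi_L$ into the coefficient on the right-hand side, $\psi_i^\alpha=\parder{L}{v_\alpha^i}\circ\psi_L$ into the left-hand side, and using $\psi_L=\phi^{[1]}$, the two equations become
$$ \parder{(s^\alpha\circ\phi)}{t^\alpha}=L\circ\phi^{[1]}\,,\qquad \parder{}{t^\alpha}\!\left(\parder{L}{v_\alpha^i}\circ\phi^{[1]}\right)=\left(\parder{L}{q^i}+\parder{L}{s^\alpha}\parder{L}{v_\alpha^i}\right)\circ\phi^{[1]}\,, $$
which are exactly equations \eqref{euler-lagrange-integral-maps}, that is, the coordinate expression of the $k$-contact Euler--Lagrange equations \eqref{k-contact-sections-lag} for $\phi$.

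I do not expect a serious obstacle: the substantive work — deriving \eqref{one}--\eqref{three} and the restricted form of $\mathbf{Z}$ on $\W_1$ — has already been done before the statement, and what remains is essentially bookkeeping. The two points requiring care are (i) using the hypothesis that $\psi$ actually takes values in $\W_1$, which is exactly what makes \eqref{two} available along $\psi$ and identifies $\psi_H=\F\L\circ\psi_L$; and (ii) noting that the argument does not depend on the (non-unique) choice of Reeb vector fields $\Reeb_\alpha$, since the components of $\mathbf{Z}$ used above were already shown to be independent of the arbitrary functions $(\Reeb_\alpha)^i_\beta$. If a coordinate-free rendering is wanted, one observes that $(\rho_1^0)^\ast\eta_L^\alpha$ and $(\rho_1^0)^\ast E_L$ coincide with the pullbacks to $\W_1$ of $\eta^\alpha$ and $\H$, and that $\T\rho_1^0$ sends $\psi'_\alpha$ to $(\psi_L)'_\alpha$, so that the $k$-contact Euler--Lagrange equations for sections \eqref{k-contact-sections-lag} along $\psi_L$ follow by pulling \eqref{k-contact-fields-SR} back along $\psi$; the computation above is just the explicit version of this.
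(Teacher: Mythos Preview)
Your proposal is correct and follows essentially the same approach as the paper: both work in natural coordinates, use the integral-section relations $\psi'=\mathbf{Z}\circ\psi$ together with the consistency conditions \eqref{one}--\eqref{three} and \eqref{second} already derived, and then read off the coordinate Euler--Lagrange equations \eqref{euler-lagrange-integral-maps}. Your write-up is in fact a bit more streamlined (first establishing holonomy of $\psi_L$, then deducing the field equations), and your closing remark on the coordinate-free version via $(\rho_1^0)^\ast\eta_L^\alpha$ and $(\rho_1^0)^\ast E_L$ is a nice addition not present in the paper.
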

\begin{proof}
	Consider an integral section $\psi(t) = \left( \psi^i(t), \psi_\alpha^i(t), \psi_i^\alpha(t), \psi^\alpha(t) \right)$ of the $k$-vector field $\mathbf{Z}$. Then, we have that
	\begin{equation}\label{sections-field}
		Z_\alpha(\psi(t)) = \parder{\psi^i}{t^\alpha}(t)\left.\parder{}{q^i}\right\vert_{\psi(t)} + \parder{\psi^i_\beta}{t^\alpha}(t)\left.\parder{}{v_\beta^i}\right\vert_{\psi(t)} + \parder{\psi_i^\beta}{t^\alpha}(t)\left.\parder{}{p_i^\beta}\right\vert_{\psi(t)} + \parder{\psi^\beta}{t^\alpha}(t)\left.\parder{}{s^\beta}\right\vert_{\psi(t)}\,.
	\end{equation}
	Now, from \eqref{second}, \eqref{one}, \eqref{two} and \eqref{sections-field} we get
	\begin{align}
		\parder{\psi^\alpha}{t^\alpha}(t) &= (L\circ\rho_1)(\psi(t)) = L(\psi_L(t))\,,\label{one-sections-L}\\
		\psi_i^\alpha(t) &= p_i^\alpha(\psi(t)) = \left( \parder{L}{v_\alpha^i}\circ\rho_1 \right)(\psi(t)) = \parder{L}{v_\alpha^i}(\psi_L(t))\,,\\
		\psi_\alpha^i(t) &= v_\alpha^i(\psi(t)) = (Z_\alpha^i)(\psi(t)) = \parder{\psi^i}{t^\alpha}(t)\,,\label{three-sections-L}\\
		\parder{\psi_i^\beta}{t^\alpha}(t) &= (Z_\alpha)^\beta_i(\psi(t))\,.\label{four-sections-L}
	\end{align}
	Using the conditions above and equation \eqref{three}, we obtain
	$$ \parder{\psi^\alpha_i}{t^\alpha}(t) = \left( \parder{L}{q^i}\circ\rho_1 \right)(\psi(t)) + p_i^\alpha(\psi(t))\left( \parder{L}{s^\alpha}\circ\rho_1 \right)(\psi(t))\,, $$
	and hence,
	$$ \parder{}{t^\alpha}\parder{L}{v^i_\alpha}(\psi_L(t)) = \parder{L}{q^i}(\psi_L(t)) + \parder{L}{v_i^\alpha}(\psi_L(t))\parder{L}{s^\alpha}(\psi_L(t))\,. $$
	$$ \psi_L = \left(\psi^i,\parder{\psi^i}{t^\alpha},\psi^\alpha\right)\,, $$
	It is clear that $\psi_L$ is the first prolongation of the map $\phi = \tau_0\circ\rho_1\circ\psi\colon\R^k\to Q\times\R^k$ given by $\phi = (\psi^i,\psi^\alpha)$, which is a solution to the Euler--Lagrange field equations \eqref{euler-lagrange-integral-maps}.
\end{proof}

Now we see how to recover the Hamilton field equations \eqref{Hamilton-sections} from the Skinner--Rusk formalism.

\begin{thm}
\label{T1}
	Let $\mathbf{Z}=(Z_1,\dotsc,Z_k)$ be an integrable $k$-vector field in $\W$ solution to equations \eqref{k-contact-fields-SR} and $\psi\colon\R^k\to\W_1\subset\W$ be an integral section of $\mathbf{Z}$ given by $\psi = (\psi_L,\psi_H)$, with $\psi_H = \F\L\circ\psi_L$. If the Lagrangian $L$ is regular, $\psi_H$ is a solution to the Hamilton field equations \eqref{Hamilton-sections}, where the Hamiltonian function $H$ is given by $E_L = H\circ\F L$.
\end{thm}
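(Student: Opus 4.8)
The plan is to deduce the statement from Theorem \ref{T0} together with the classical Legendre identities relating the partial derivatives of $H$ and of $L$, and to check the three Hamilton--De Donder--Weyl equations \eqref{Hamilton-sections} directly in Darboux coordinates. Since the claim is local and $L$ is regular, $\F L$ is a local diffeomorphism, so the Hamiltonian $H$ determined by $E_L = H\circ\F L$ is well defined near every point of the image of $\F L$, and a section $\psi_H = \F L\circ\psi_L$ takes values there. Writing $\psi_L = \rho_1\circ\psi = (\psi^i,\psi^i_\alpha,\psi^\alpha)$ and $\psi_H = \rho_2\circ\psi = (\psi^i,\psi^\alpha_i,\psi^\alpha)$, Theorem \ref{T0} already supplies everything on the Lagrangian side: $\psi_L$ is the first prolongation of $\phi = (\psi^i,\psi^\alpha)$, one has $\psi^\alpha_i = (\partial L/\partial v^i_\alpha)(\psi_L)$ by \eqref{two} (i.e.\ $\psi_H = \F L\circ\psi_L$, as assumed), and $\phi$ solves the Euler--Lagrange equations \eqref{euler-lagrange-integral-maps}.

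First I would record the Legendre identities. Differentiating the defining relation $E_L = v^i_\alpha\,\partial L/\partial v^i_\alpha - L = H\circ\F L$ with respect to $v^i_\alpha$, $q^i$ and $s^\alpha$, using the coordinate form \eqref{legendre-coords} of $\F L$ and the nondegeneracy of the Hessian $W_{ij}^{\alpha\beta}$ (this is the only place where regularity enters), one obtains
\begin{align*}
\left(\parder{H}{p^\alpha_i}\right)\circ\F L = v^i_\alpha\,,\qquad
\left(\parder{H}{q^i}\right)\circ\F L = -\parder{L}{q^i}\,,\qquad
\left(\parder{H}{s^\alpha}\right)\circ\F L = -\parder{L}{s^\alpha}\,.
\end{align*}

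With these in hand, the three equations \eqref{Hamilton-sections} for $\psi_H$ follow by substitution. For the first, holonomy of $\psi_L$ gives $\partial\psi^i/\partial t^\alpha = \psi^i_\alpha = v^i_\alpha(\psi_L) = (\partial H/\partial p^\alpha_i)(\psi_H)$ by the first identity. For the third, $\partial\psi^\alpha/\partial t^\alpha = L(\psi_L)$ by \eqref{one-sections-L}, whereas $(p^\alpha_i\,\partial H/\partial p^\alpha_i - H)(\psi_H) = (\partial L/\partial v^i_\alpha)(\psi_L)\,v^i_\alpha(\psi_L) - E_L(\psi_L) = L(\psi_L)$, so the two sides coincide. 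For the second, I would differentiate $\psi^\alpha_i = (\partial L/\partial v^i_\alpha)(\psi_L)$ along $t^\alpha$, insert the Euler--Lagrange equations \eqref{euler-lagrange-integral-maps}, and then rewrite $\partial L/\partial q^i$ and $\partial L/\partial s^\alpha$ by means of the second and third Legendre identities together with $p^\alpha_i(\psi_H) = (\partial L/\partial v^i_\alpha)(\psi_L)$; the signs cancel to give $\partial\psi^\alpha_i/\partial t^\alpha = -\big(\partial H/\partial q^i + p^\alpha_i\,\partial H/\partial s^\alpha\big)(\psi_H)$. This is precisely \eqref{Hamilton-sections}, so $\psi_H$ is a solution of the Hamilton--De Donder--Weyl equations.

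The only delicate point is the derivation of the Legendre identities: the chain rule must be carried through the implicit dependence $p^\alpha_i = \partial L/\partial v^i_\alpha$ and the nondegeneracy of $W_{ij}^{\alpha\beta}$ invoked exactly where it is needed; everything afterwards is a routine substitution. An alternative, coordinate-free route would be to observe that on $\W_1$ one has $\rho_2^0 = \F L\circ\rho_1^0$ with $\rho_1^0$ a diffeomorphism (hence $\rho_2^0$ a local diffeomorphism), that $\eta^\alpha = (\rho_2^0)^\ast(\d s^\alpha - \theta^\alpha)$, and that $\H|_{\W_1} = (\rho_2^0)^\ast H$; pushing the unified equations \eqref{k-contact-fields-SR} forward along $\rho_2^0$ then yields the $k$-contact Hamilton--De Donder--Weyl equations \eqref{k-contact-sections} for $\psi_H$. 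However, because the Reeb vector fields on the $k$-precontact manifold $\W$ are not unique, verifying that the Reeb terms $(\Lie_{\Reeb_\alpha}\H)\eta^\alpha$ project onto $(\Lie_{R_\alpha}H)\eta^\alpha$ requires essentially the same bookkeeping, so I would present the coordinate computation.
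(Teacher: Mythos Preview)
Your proof is correct and follows essentially the same strategy as the paper's: work in Darboux coordinates, establish the Legendre identities from $E_L = H\circ\F L$ using regularity of~$L$, and substitute to verify each of the three Hamilton--De Donder--Weyl equations \eqref{Hamilton-sections}. The only organizational difference is that you route the momentum equation through Theorem~\ref{T0} and the Euler--Lagrange equations \eqref{euler-lagrange-integral-maps}, whereas the paper uses directly the Skinner--Rusk condition \eqref{three} together with \eqref{four-sections-L} (that is, $\partial\psi_i^\alpha/\partial t^\alpha = (Z_\alpha)_i^\alpha(\psi)$) without the detour; the computations are equivalent. Your inclusion of the third Legendre identity $(\partial H/\partial s^\alpha)\circ\F L = -\partial L/\partial s^\alpha$ makes the bookkeeping slightly cleaner than in the paper, where this relation is used implicitly via $\partial E_L/\partial s^\alpha = -\partial L/\partial s^\alpha$.
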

\begin{proof}
	We have that $L$ is a regular Lagrangian and hence, $\F\L$ is a local diffeomorphism. Then, for every point $p\in\oplus^k\T Q\times\R^k$, there exists an open subset $U\subset\oplus^k\T Q\times\R^k$ containing the point $p$ such that the restriction $\restr{\F\L}{U}\colon U\to\F\L(U)$ is a diffeomorphism. Using this, we can define a function $\tilde H = \restr{E_\L}{U}\circ(\restr{\F\L}{U})^{-1}$. From now on, we will consider that the maps $E_\L$ and $\F\L$ are restricted to the open set $U$.
	Now, using that $E_\L = \tilde H\circ\F\L$, it is clear that
	\begin{equation}\label{Hamilton-sections-U}
		\begin{dcases}
			\parder{\tilde H}{p^\alpha_i}\circ\F\L = v_\alpha^i\,,\\
			\parder{\tilde H}{q^i} \circ\F\L = -\parder{L}{q^i}\,.
		\end{dcases}
	\end{equation}
	We consider now the subset $V = \psi_L^{-1}(U)\subset\R^k$ and restrict $\psi$ to $V$, so we have
	\begin{equation*}
		\begin{array}{rcl}
			\restr{\psi}{V}\colon V\subset\R^k & \to & U \oplus_{\R^k}\F\L(U)\\
			t & \mapsto & (\psi_L(t),\psi_H(t)) = (\psi_L(t),(\F\L\circ\psi_L)(t))
		\end{array}
	\end{equation*}
	Taking into account \eqref{three}, \eqref{three-sections-L}, \eqref{four-sections-L} and \eqref{Hamilton-sections-U},
	\begin{align*}
		\parder{\tilde H}{p_i^\alpha}(\psi_H(t)) &= \left( \parder{\tilde H}{p_i^\alpha}\circ\F\L \right)(\psi_L(t)) = v_\alpha^i(\psi_L(t)) = \parder{\psi^i}{t^\alpha}(t)\,,\\
		\parder{\tilde H}{q^i}(\psi_H(t)) &= \left( \parder{\tilde H}{q^i}\circ\F\L \right)(\psi_L(t)) = -\parder{L}{q^i}(\psi_L(t)) = -\left( \parder{L}{q^i}\circ\rho_1 \right)(\psi(t))\\
		&= \left( p_i^\alpha\left( \parder{L}{s^\alpha}\circ\rho_1 \right) - (Z_\alpha)^\alpha_i \right)(\psi(t)) = p_i^\alpha\left( \parder{L}{s^\alpha}\circ\rho_1 \right)(\psi(t)) - (Z_\alpha)^\alpha_i(\psi(t))\\
		&= p_i^\alpha \parder{L}{s^\alpha}(\psi_L(t)) - \parder{\psi^\alpha_i}{t^\alpha}(t) = -p_i^\alpha\parder{E_\L}{s^\alpha}(\psi_L(t)) - \parder{\psi^\alpha_i}{t^\alpha}(t)\\
		&= -p_i^\alpha\parder{(\tilde H\circ\F\L)}{s^\alpha}(\psi_L(t)) - \parder{\psi^\alpha_i}{t^\alpha}(t) = -p_i^\alpha\parder{\tilde H}{s^\alpha}(\psi_H(t)) - \parder{\psi^\alpha_i}{t^\alpha}(t)\,,
	\end{align*}
	and then
	$$
		\begin{dcases}
			\parder{\psi^i}{t^\alpha}(t) = \parder{\tilde H}{p_i^\alpha}(\psi_H(t))\,,\\
			\parder{\psi^\alpha_i}{t^\alpha}(t) = -\left( \parder{\tilde H}{q^i} + p_i^\alpha\parder{\tilde H}{s^\alpha} \right)(\psi_H(t))\,.
		\end{dcases}
	$$
Finally, considering equation \eqref{one-sections-L}, we deduce
	$$ \parder{\psi^\alpha}{t^\alpha}(t) = L\circ\psi_L(t) = p_i^\alpha\parder{\psi^i}{t^\alpha}(t) - \tilde H\circ\psi_H = \left( p_i^\alpha\parder{\tilde H}{p_i^\alpha} - \tilde H \right)(\psi_H(t))\,. $$
Hence, we have that $\psi_H$ is a solution of the Hamilton field equations \eqref{Hamilton-sections} on $V$.
\end{proof}

We have seen that we can recover the Euler--Lagrange field equations and Hamilton field equations from the Skinner--Rusk formalism. 
Conversely, we have the following result:

\begin{thm}
\label{T2}
	Let $L\in\Cinfty(\oplus^k\T Q\times\R^k)$ be a regular Lagrangian function and consider a $k$-vector field $\mathbf{X}=(X_1,\dotsc,X_k)$
	in $\oplus^k\T Q\times\R^k$, solution to the $k$-contact Lagrangian equations \eqref{k-contact-fields-lag}. Then, the $k$-vector field $\mathbf{Z}=(Z_\alpha)$ in $\W$ defined as $Z_\alpha = (\mathrm{Id}_{\oplus^k\T Q\times\R^k}\times\F\L)_\ast(X_\alpha)$ is a solution to equations \eqref{k-contact-fields-SR}. Moreover, if $\psi_L\colon\R^k\to\oplus^k\T Q\times\R^k$ is an integral section of $\mathbf{X}$, $\psi=(\psi_L,\F\L\circ\psi_L)\colon\R^k\to\W$ is an integral section of $\mathbf{Z}$.
\end{thm}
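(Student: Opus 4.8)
The plan is to reduce both assertions to the coordinate analysis of the equations \eqref{k-contact-fields-SR} carried out above (equations \eqref{second}--\eqref{three}) together with the known form of the solutions of the Lagrangian equations \eqref{k-contact-fields-lag} when $L$ is regular. Write $\jmath=\mathrm{Id}_{\oplus^k\T Q\times\R^k}\times\F\L\colon\oplus^k\T Q\times\R^k\to\W$. Since the first component of $\jmath$ is the identity, $\jmath$ is an injective immersion and in fact a diffeomorphism onto $\W_1$ whose inverse is $\rho_1^0$; consequently $Z_\alpha:=\jmath_\ast X_\alpha$ is a well-defined $k$-vector field on $\W_1$ which is $\jmath$-related to $X_\alpha$, i.e. $Z_\alpha\circ\jmath=\T\jmath\circ X_\alpha$. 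Since $\jmath$ takes values in $\W_1$, by \eqref{second}--\eqref{three} it suffices to check that $\mathbf{Z}$ is a \textsc{sopde} and that $(Z_\alpha)^\alpha=\L$ and $(Z_\alpha)_i^\alpha=\parder{\L}{q^i}+p_i^\alpha\parder{\L}{s^\alpha}$ hold on $\W_1$; here the (non-unique) Reeb vector fields play no role, since the term $(\Lie_{\Reeb_\alpha}\H)\eta^\alpha$ reduces on $\W_1$ to $-(\partial\L/\partial s^\alpha)\eta^\alpha$, because $\partial\H/\partial v_\beta^i=p_i^\beta-\partial\L/\partial v_\beta^i$ vanishes there.

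Writing $X_\alpha=(X_\alpha)^i\parder{}{q^i}+(X_\alpha)^i_\beta\parder{}{v^i_\beta}+(X_\alpha)^\beta\parder{}{s^\beta}$ and using that $\jmath$ fixes the coordinates $q^i,v^i_\alpha,s^\alpha$ and sets $p_i^\alpha=\partial L/\partial v^i_\alpha$, the push-forward $Z_\alpha=\jmath_\ast X_\alpha$ has the same $\partial/\partial q^i$-, $\partial/\partial v^i_\beta$- and $\partial/\partial s^\beta$-components as $X_\alpha$, while, on $\W_1$,
$$
(Z_\alpha)_i^\beta=(X_\alpha)^j\,\frac{\partial^2\L}{\partial q^j\partial v^i_\beta}+(X_\alpha)^j_\gamma\,\frac{\partial^2\L}{\partial v^j_\gamma\partial v^i_\beta}+(X_\alpha)^\gamma\,\frac{\partial^2\L}{\partial s^\gamma\partial v^i_\beta}\,,
$$
which is just the derivative of the constraint $p_i^\beta=\partial L/\partial v^i_\beta$ along $X_\alpha$. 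Now, $L$ being regular and $\mathbf{X}$ a solution of \eqref{k-contact-fields-lag}, $\mathbf{X}$ is a \textsc{sopde}, so $(X_\alpha)^i=v_\alpha^i$; it satisfies $(X_\alpha)^\alpha=L$ by \eqref{Lagrange-1}; and \eqref{Lagrange-2} holds. The first fact gives at once $(Z_\alpha)^i=(X_\alpha)^i=v_\alpha^i$ (so $\mathbf{Z}$ is a \textsc{sopde}) and $(Z_\alpha)^\alpha=(X_\alpha)^\alpha=\L$; substituting $(X_\alpha)^j=v_\alpha^j$ into the displayed formula (with $\beta=\alpha$) and comparing with \eqref{Lagrange-2} yields $(Z_\alpha)_i^\alpha=\parder{L}{q^i}+\parder{L}{v^i_\alpha}\parder{L}{s^\alpha}=\parder{\L}{q^i}+p_i^\alpha\parder{\L}{s^\alpha}$ on $\W_1$. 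Hence $\mathbf{Z}$ is a solution of \eqref{k-contact-fields-SR} on $\W_1$.

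For the second assertion, let $\psi_L$ be an integral section of $\mathbf{X}$ and put $\psi:=\jmath\circ\psi_L$, which equals $(\psi_L,\F\L\circ\psi_L)$ and takes values in $\W_1$. Using the chain rule and the $\jmath$-relatedness of $Z_\alpha$ and $X_\alpha$,
$$
\T\psi\circ\parder{}{t^\alpha}=\T\jmath\circ\T\psi_L\circ\parder{}{t^\alpha}=\T\jmath\circ(X_\alpha\circ\psi_L)=(Z_\alpha\circ\jmath)\circ\psi_L=Z_\alpha\circ\psi\,,
$$
so $\psi$ is an integral section of $\mathbf{Z}$.

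The only point requiring some care is the index bookkeeping in the $\partial/\partial p_i^\beta$-component: one must track which indices are contracted so that, after imposing the \textsc{sopde} condition $(X_\alpha)^j=v_\alpha^j$, the three terms there match exactly the left-hand side of \eqref{Lagrange-2}, and then use $p_i^\alpha=\partial\L/\partial v^i_\alpha$ on $\W_1$; the rest is immediate. The conceptual observation underlying the whole argument is that \eqref{k-contact-fields-SR} restricted to $\W_1$ is independent of the chosen Reeb vector fields, so that no compatibility between the Reeb fields of $\W$ and those of $\oplus^k\T Q\times\R^k$ has to be arranged and the construction $Z_\alpha=\jmath_\ast X_\alpha$ is unambiguous.
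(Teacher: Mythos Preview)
Your proof is correct and follows essentially the same route as the paper's: both compute the push-forward $Z_\alpha=(\mathrm{Id}\times\F\L)_\ast X_\alpha$ in coordinates, use that regularity of $L$ forces $\mathbf{X}$ to be a \textsc{sopde} satisfying \eqref{Lagrange-1}--\eqref{Lagrange-2}, and then verify that the resulting $\mathbf{Z}$ fulfils \eqref{second}, \eqref{one}, \eqref{three} on $\W_1$. Your write-up adds a couple of clarifying remarks not spelled out in the paper (that $\jmath$ is a diffeomorphism onto $\W_1$ so tangency to $\W_1$ is automatic, that the Reeb-dependent term in \eqref{k-contact-fields-SR} is unambiguous on $\W_1$, and the explicit chain-rule computation for the integral section), but these are elaborations of the same argument rather than a different approach.
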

\begin{proof}
	Consider a regular Lagrangian function $L\in\Cinfty(\oplus^k\T Q\times\R^k)$ and let $\mathbf{X}=(X_1,\dotsc,X_k)$ be a $k$-vector field in $\oplus^k\T Q\times\R^k$ solution to equations \eqref{k-contact-fields-lag}. Hence, $X_\alpha$ is written in coordinates as
	$$ X_\alpha = v_\alpha^i\parder{}{q^i} + (X_\alpha)^i_\beta\parder{}{v_\beta^i} + (X_\alpha)^\beta\parder{}{s^\beta}\,, $$
	where the functions $(X_\alpha)^\beta$ and $(X_\alpha)^i_\beta$ satisfy the conditions
	\begin{align}
		(X_\alpha)^\alpha &= L\,,\label{cond-1}\\
		-\parder{L}{q^i} + \parder{^2L}{s^\beta\partial v_\alpha^i}(X_\alpha)^\beta + \parder{^2L}{q^j\partial v_\alpha^i}v_\alpha^j + \parder{^2L}{v^j_\beta\partial v^i_\alpha}(X_\alpha)^j_\beta &= \parder{L}{s^\alpha}\parder{L}{v_\alpha^i}\label{cond-2}\,.
	\end{align}
	Now, using the coordinate expression \eqref{legendre-coords} of the Legendre map $\F L$ and taking into account that $Z_\alpha = (\mathrm{Id}_{\oplus^k\T Q\times\R^k}\times\F\L)_\ast(X_\alpha)$, we have
	\begin{equation}\label{local-expr}
		Z_\alpha = v_\alpha^i\parder{}{q^i} + (X_\alpha)^i_\beta\parder{}{v_\beta^i} + \left(v^j_\alpha\parder{^2L}{q^j\partial v_\gamma^i} + (X_\alpha)^j_\beta\parder{^2L}{v^j_\beta\partial v_\gamma^i} + (X_\alpha)^\beta\parder{^2L}{s^\beta\partial v^i_\gamma}\right)\parder{}{p_i^\gamma} + (X_\alpha)^\beta\parder{}{s^\beta}\,.
	\end{equation}
	From \eqref{cond-1}, \eqref{cond-2} and \eqref{local-expr}, it is clear that $\mathbf{Z} = (Z_\alpha)$ fulfills conditions \eqref{second}, \eqref{one}, \eqref{three} and \eqref{tangency-condition} and hence, the $k$-vector field $\mathbf{Z}$ is a solution of \eqref{k-contact-fields-SR} tangent to $\W_1$.

	It is also clear from the definition of integral section that $\psi = (\psi_L,\F\L\circ\psi_L)$ is an integral section of $\mathbf{Z}$.
\end{proof}

\begin{obs}
{\rm In the case of singular Lagrangians,
the results in Theorems \ref{T0}, \ref{T1}, and \ref{T2}
hold on the corresponding final constraint submanifolds
of the Skinner--Rusk, Lagrangian and Hamiltonian formalisms.}
$$
\xymatrix{
\ & \ & {\cal W} \ar@/_1.3pc/[ddll]_{\rho_1} \ar@/^1.3pc/[ddrr]^{\rho_2} & \ & \ \\
\ & \ & {\cal W}_1 \ar[dll]_{\rho_1^0} \ar[drr]^{\rho_2^0} \ar@{^{(}->}[u] & \ & \ \\
\oplus^k\T Q\times\R^k \ar[rrrr]^<(0.30){{\cal F}L}|(.49){\hole}
\ar[drrrr]|(.49){\hole} 
& \ & \ & \ & \oplus^k\T^\ast Q\times\R^k  \\
\ & \ &  {\cal W}_f \ar@{^{(}->}[uu] \ar[dll]\ar[drr]  & \ & {\cal P} \ar@{^{(}->}[u] \\
S_f \ar@{^{(}->}[uu] \ar[rrrr] & \ & \ & \ & P_f \ar@{^{(}->}[u] \\
}
$$
\end{obs}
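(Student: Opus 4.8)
The plan is to transport both equivalences through the double fibration $\oplus^k\T Q\times\R^k \xleftarrow{\rho_1^0} \W_1 \xrightarrow{\rho_2^0} \mathcal{P}$ and to show that the constraint algorithm on $\W$ is intertwined with the Lagrangian and Hamiltonian ones. The first step is to record the structural identities on $\W_1$. Since $\W_1$ is the graph of $\F\L$, the projection $\rho_1^0$ is a diffeomorphism with inverse $\mathrm{Id}\times\F\L$, while $\rho_2^0=\F L_o\circ\rho_1^0$ is, under the almost-regularity hypotheses, a surjective submersion with connected fibres onto $\mathcal{P}$. A pullback computation (using $\Theta^\alpha=\rho_2^\ast\theta^\alpha$, $\theta_L^\alpha=\F L^\ast\theta^\alpha$ and the coincidence of the coordinates $s^\alpha$ on all three spaces) gives $(\rho_1^0)^\ast\eta_L^\alpha=\restr{\eta^\alpha}{\W_1}=(\rho_2^0)^\ast\eta^\alpha_\mathcal{P}$, and from the primary constraints \eqref{two} together with $\H=\mathcal{C}-\L$ one obtains $\restr{\H}{\W_1}=(\rho_1^0)^\ast E_L=(\rho_2^0)^\ast H_\mathcal{P}$. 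Thus $(\W_1,\restr{\eta^\alpha}{\W_1},\restr{\H}{\W_1})$ is isomorphic via $\rho_1^0$ to the Lagrangian $k$-precontact system and $\rho_2^0$-related to the Hamiltonian one.

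Next I would exploit the coordinate identities already obtained in the proofs of Theorems \ref{T0} and \ref{T2}. On $\W_1$ the holonomy conditions \eqref{one} hold automatically, so every solution of \eqref{k-contact-fields-SR} is a {\sc sopde}; moreover equations \eqref{three} together with the tangency condition \eqref{tangency-condition} are carried by $\rho_1^0$ exactly onto the $k$-contact Lagrangian equations \eqref{k-contact-fields-lag} supplemented with the {\sc sopde} condition. Consequently $\mathbf{Z}\mapsto(\rho_1^0)_\ast\mathbf{Z}$ and $\mathbf{X}\mapsto(\mathrm{Id}\times\F\L)_\ast\mathbf{X}$ set up a bijection between {\sc sopde} solutions tangent to $\W_1$ and Lagrangian solutions, and this bijection intertwines the Lie-derivative tangency conditions generating the two constraint sequences. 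By induction on the step of the algorithm one then gets $\rho_1^0(\W_i)=S_i$ for every $i$ and, at stabilization, a diffeomorphism $\restr{\rho_1^0}{\W_f}\colon\W_f\to S_f$ under which SR solutions on $\W_f$ correspond to Lagrangian {\sc sopde} solutions on $S_f$. Combined with the section-level computation of Theorem \ref{T0}, this establishes the singular versions of Theorems \ref{T0} and \ref{T2} verbatim on $\W_f$ and $S_f$.

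Finally, for the Hamiltonian statement I would push everything forward along $\rho_2^0$. At the level of sections, $\psi_H=\rho_2^0\circ\psi=\F\L\circ\psi_L$ solves \eqref{HeqsP} by the same chain-rule argument as in Theorem \ref{T1}, now using $E_L=\F L_o^\ast H_\mathcal{P}$ in place of the (no longer invertible) Hessian; this is the singular analogue of Theorem \ref{T1}. The main obstacle is that $\rho_2^0$ is \emph{not} a diffeomorphism, so I must separately (i) pass to a projectable $k$-vector field and (ii) match the constraint submanifolds. For (i) I would use that $\restr{\H}{\W_1}=(\rho_2^0)^\ast H_\mathcal{P}$ is constant along the connected fibres of $\rho_2^0$ and that a solution is determined only up to $\ker\Omega^\sharp\cap\ker\eta^\sharp$; choosing within $\mathbf{Z}+(\ker\Omega^\sharp\cap\ker\eta^\sharp)$ a $\rho_2^0$-projectable representative, its projection $\mathbf{Y}=(\rho_2^0)_\ast\mathbf{Z}$ satisfies \eqref{HeqsP}. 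For (ii), since $\rho_2^0=\F L_o\circ\rho_1^0$ and $\rho_1^0(\W_f)=S_f$ from the previous step, one has $\rho_2^0(\W_f)=\F L_o(S_f)$; invoking the almost-regularity hypotheses (so that $\F L_o$ is a surjective submersion with connected fibres onto the closed submanifold $\mathcal{P}$) together with the $\F L_o$-relatedness of the Lagrangian and Hamiltonian constraint algorithms for $k$-contact systems \cite{GGMRR-2020}, one concludes $\F L_o(S_f)=P_f$ and that $\restr{\rho_2^0}{\W_f}\colon\W_f\to P_f$ is a surjective submersion with connected fibres, closing the diagram. The delicate point throughout is controlling the non-injectivity of $\rho_2^0$ so that both the Hamiltonian $k$-vector field and the final constraint submanifold descend unambiguously to $\mathcal{P}$.
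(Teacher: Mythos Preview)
The paper does \emph{not} supply a proof of this statement: it is given as a \textit{Remark} (environment \texttt{obs}), stated in one sentence and accompanied only by the commutative diagram, with no argument at all. There is therefore no ``paper's own proof'' to compare your proposal against.

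Your sketch is a plausible route to actually proving what the paper merely asserts. The structural identities you record on $\W_1$---that $(\rho_1^0)^\ast\eta_L^\alpha=\eta^\alpha\vert_{\W_1}=(\rho_2^0)^\ast\eta^\alpha_{\mathcal P}$ and $\H\vert_{\W_1}=(\rho_1^0)^\ast E_L=(\rho_2^0)^\ast H_{\mathcal P}$---are exactly what is needed, and the induction matching the constraint sequences via the diffeomorphism $\rho_1^0$ is the natural argument. Two minor points deserve care. First, the holonomy conditions \eqref{one} are consequences of the SR equations (coefficients in $\d p_i^\alpha$), not of membership in $\W_1$; your phrasing ``on $\W_1$ the holonomy conditions hold automatically'' blurs this slightly. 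Second, there is an index shift you gloss over: $\W_1$ corresponds to the \emph{full} Lagrangian phase space $\oplus^k\T Q\times\R^k$, so the first genuine SR constraint step after $\W_1$ (your $\W_2$) corresponds to the \emph{first} Lagrangian constraint submanifold; writing $\rho_1^0(\W_i)=S_i$ with matching indices is not quite right unless you reindex. Neither point is a real obstruction; the final-submanifold correspondence $\W_f\to S_f$ survives once the bookkeeping is straightened out.

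On the Hamiltonian side you correctly isolate the genuine difficulty: $\rho_2^0$ is not injective, so both the descent of the $k$-vector field and the identification $\rho_2^0(\W_f)=P_f$ require the almost-regularity hypotheses and a projectability argument. Your plan to choose a $\rho_2^0$-projectable representative within the affine space $\mathbf{Z}+(\ker\Omega^\sharp\cap\ker\eta^\sharp)$ is the standard manoeuvre; you would still need to check that $\ker\T\rho_2^0\subset\ker\Omega^\sharp\cap\ker\eta^\sharp$ along $\W_1$ (equivalently, that the vertical bundle of $\rho_2^0$ is contained in the gauge directions), which follows from the local form of $\F L_o$ under almost-regularity but should be stated.
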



\section{Examples}
\label{4}

\subsection{1-dimensional wave equation with dissipation}
\label{exmpl-wave}

In this example we study a vibrating string with friction. We begin by considering the Lagrangian function $L\colon\oplus^2\T\R\to\R$ defined by
$$ L(u,u_x,u_t) = \frac{1}{2}\rho u_t^2 - \frac{1}{2}\tau u_x^2\,, $$
which originates the one-dimensional wave equation
$$ 
\parder{^2u}{t^2} = c^2\parder{^2u}{x^2}\quad, \quad
\mbox{\rm (where $c^2=\tau/\rho$)} \ .
$$
Adding to this wave equation a dissipation term proportional to the speed of an element of the string we get a simple model of a vibrating string with dissipation of energy, that is, with friction:
\begin{equation}\label{damped-wave}
	\parder{^2u}{t^2} - c^2\parder{^2u}{x^2} + \gamma\parder{u}{t} = 0 \,,
\end{equation}
where $\gamma>0$ is the damping constant. This equation can be obtained from the Lagrangian
$$ \L = L - \gamma s^t $$
defined in the 2-contact manifold $\oplus^2\T\R\times\R^2$ endowed with coordinates $(u, u_x, u_t, s^x, s^t)$.
The Hamiltonian and the Lagrangian formalisms for this model were analyzed in \cite{GGMRR-2019,GGMRR-2020}, respectively.

Next we apply the Skinner--Rusk formalism to this system and we see how to recover the damped wave equation \eqref{damped-wave} and both the Lagrangian and the Hamiltonian formalisms.
Consider the extended Pontryagin bundle
$$ \W = \oplus^2\T\R\times_\R\oplus^2\T^\ast\R\times\R^2 $$
endowed with canonical coordinates $(u, u_x, u_t, p^x, p^t, s^x, s^t)$. In this bundle, the coupling function is
$$ \mathcal{C} = p^xu_x + p^tu_t\,, $$
and we have the canonical forms
\begin{align*}
	\Theta^1 = p^x\d u\quad &,\quad
		\Omega^1 = -\d\Theta^1 = \d u\wedge\d p^x\,,\\
	\Theta^2 = p^t\d u\quad &,\quad
	\Omega^2 = -\d\Theta^2 = \d u\wedge\d p^t\,,
\end{align*}
and the canonical contact forms
$$
	\eta^1 = \d s^x - p^x\d u\quad,\quad
	\eta^2 = \d s^t - p^t\d u\ .
$$
We can take the vector fields
$$ \Reeb_1 = \parder{}{s^x}\ ,\quad \Reeb_2 = \parder{}{s^t} $$
as Reeb vector fields.
Given the Lagrangian function $\L\colon\oplus^2\T\R\times\R^2\to\R$ defined by
$$ \L(u, u_x, u_t, s^x, s^t) = \frac{1}{2}\rho u_t^2 - \frac{1}{2}\tau u_x^2 - \gamma s^t\,, $$
we can construct the Hamiltonian function $\H = \mathcal{C} - \L$, which in coordinates reads
$$ \H = p^xu_x + p^tu_t - \frac{1}{2}\rho u_t^2 + \frac{1}{2}\tau u_x^2 + \gamma s^t\,. $$
To solve the Lagrangian--Hamiltonian problem for the 2-precontact Hamiltonian system $(\W,\eta^\alpha,\H)$ means to find a 2-vector field $\mathbf{Z} = (Z_1,Z_2)$ in $\W$ satisfying equations \eqref{k-contact-fields-SR}.
For our Hamiltonian function $\H$, we have
$$ \d\H - \Reeb_\alpha(\H)\eta^\alpha = u_x\d p^x + u_t\d p^t + (p^x + \tau u_x)\d u_x + (p^t - \rho u_t)\d u_t + \gamma p^t\d u\,. $$
Let $\mathbf{Z} = (Z_\alpha)$ be a 2-vector field with local expression
$$ Z_\alpha = f_\alpha\parder{}{u} + F_{\alpha 1}\parder{}{u_x} + F_{\alpha 2}\parder{}{u_t} + G_\alpha^1\parder{}{p^x} + G_\alpha^2\parder{}{p^t} + g_\alpha^1\parder{}{s^x} + g_\alpha^2\parder{}{s^t}\,. $$
Now,
$$ i(Z_\alpha)\d\eta^\alpha = f_1\d p^x + f_2\d p^t - (G_1^1 + G^2_2)\d u\,, $$
and hence, the first equation in \eqref{k-contact-fields-SR} gives the conditions
\begin{align*}
	G_1^1 + G_2^2 &= -\gamma p^t & & \mbox{(coefficients in $\d u$)}\,,\\
	p^x &= -\tau u_x & & \mbox{(coefficients in $\d u_x$)}\,,\\
	p^t &= \rho u_t & & \mbox{(coefficients in $\d u_t$)}\,,\\
	f_1 &= u_x & & \mbox{(coefficients in $\d p^x$)}\,,\\
	f_2 &= u_t & & \mbox{(coefficients in $\d p^t$)}\,.
\end{align*}
Notice that combining the first three conditions we recover the damped wave equation \eqref{damped-wave}. Furthermore, the last two equations are the holonomy conditions.
The second equation in \eqref{k-contact-fields-SR} gives the condition
$$ g_1^1 + g_2^2 = \frac{1}{2}\rho u_t^2 - \frac{1}{2}\tau u_x^2 - \gamma s^t = \L\,. $$
In addition, we have obtained the constraints
$$ \xi_1 = p^x + \tau u_x = 0\quad,\quad\xi_2 = p^t - \rho u_t = 0 \ , $$
which define the submanifold $\W_1\hookrightarrow\W$. Imposing the tangency of the 2-vector field $\mathbf{Z}$ to the submanifold $\W_1$ we get the conditions
\begin{align*}
	0 = Z_1(\xi_1) = G_1^1 + \tau F_{11}\quad &,\quad
	0 = Z_2(\xi_1) = G_2^1 + \tau F_{21}\,,\\
	0 = Z_1(\xi_2) = G_1^2 - \rho F_{12}\quad &,\quad
	0 = Z_2(\xi_2) = G_2^2 - \rho F_{22}\,,
\end{align*}
which determine partially some of the arbitrary functions and no new constraints appear, so the constraint algorithm finishes with the submanifold $\W_f = \W_1$, giving the solutions $\mathbf{Z}=(Z_1,Z_2)$ with
\begin{align*}
	Z_1 &= u_x\parder{}{u} - \frac{G_1^1}{\tau}\parder{}{u_x} + \frac{G_1^2}{\rho}\parder{}{u_t} + G_1^1\parder{}{p^x} + G_1^2\parder{}{p^t} + g_1^1\parder{}{s^x} + g_1^2\parder{}{s^t}\,,\\
	Z_2 &= u_t\parder{}{u} - \frac{G_2^1}{\tau}\parder{}{u_x} - \frac{G_1^1 + \gamma p^t}{\rho}\parder{}{u_t} + G_2^1\parder{}{p^x} - (G_1^1 + \gamma p^t)\parder{}{p^t} + g_2^1\parder{}{s^x} + (\L - g_1^1)\parder{}{s^t}\,,
\end{align*}
where $G_1^1,G_1^2,G_2^1,g_1^1,g_1^2,g_2^1$ are arbitrary functions.

It is important to point out that we can project on each factor of the product manifold $\W = \oplus^2\T\R\times_\R\oplus^2\T^\ast\R\times\R^2$ with the projections $\rho_1$ and $\rho_2$ to recover the Lagrangian and Hamiltonian formalisms. In the Lagrangian formalism we have the holonomic 2-vector field $\mathbf{X} = (X_1,X_2)$ given by
\begin{align*}
	X_1 &= u_x\parder{}{u} + F_{11}\parder{}{u_x} + F_{12}\parder{}{u_t} + g_1^1\parder{}{s^x} + g_1^2\parder{}{s^t}\,,\\
	X_2 &= u_t\parder{}{u} + F_{21}\parder{}{u_x} + \left(\frac{\tau}{\rho} F_{11} - \gamma u_t \right)\parder{}{u_t} + g_2^1\parder{}{s^x} + (\L-g_1^1)\parder{}{s^t}\,,
\end{align*}
where $F_{11},F_{12},F_{21},g_1^1,g_1^2,g_2^1$ are arbitrary functions. On the other side, in the Hamiltonian formalism we have the Hamiltonian 2-vector field $\mathbf{Y} = (Y_1,Y_2)$ given by
\begin{align*}
	Y_1 &= \frac{-p^x}{\tau}\parder{}{u} + G_1^1\parder{}{p^x} + G_1^2\parder{}{p^t} + g_1^1\parder{}{s^x} + g_1^2\parder{}{s^t}\,,\\
	Y_2 &= \frac{p^t}{\rho}\parder{}{u} + G_2^1\parder{}{p^x} - (G_1^1 + \gamma p^t)\parder{}{p^t} + g_2^1\parder{}{s^x} + \left(\frac{(p^t)^2}{2\rho} - \frac{(p^x)^2}{2\tau} - \gamma s^t - g_1^1\right)\parder{}{s^t}\,,
\end{align*}
where $G_1^1,G_1^2,G_2^1,g_1^1,g_1^2,g_2^1$ are arbitrary functions.

\subsection{From the massive scalar field to the telegrapher's equation}

The voltage and current on a uniform electrical transmission line can be described by the {\sl telegrapher's equations}
\cite[p.\,306]{HaBu}
\cite[p.\,653]{Salsa}:
\begin{equation*}
    \begin{dcases}
        \parder{V}{x} = -L\parder{I}{t} - RI\,,\\
        \parder{I}{x} = -C\parder{V}{t} - GV\,.
    \end{dcases}
\end{equation*}
From these equations, one can easily deduce the uncoupled system
\begin{equation*}
    \begin{dcases}
        \parder{^2V}{x^2} = LC\parder{^2V}{t^2} + (LG + RC)\parder{V}{t} + RGV\,,\\
        \parder{^2I}{x^2} = LC\parder{^2I}{t^2} + (LG + RC)\parder{I}{t} + RGI\,.\\
    \end{dcases}
\end{equation*}
These two identical equations are also known as telegrapher's equations.
Both of them can be written as
\begin{equation}\label{telegraph-equation}
	\square u + \gamma\parder{u}{t} + m^2 u = 0\,,
\end{equation}
where $\square$ is the d'Alembertian operator in 1+1 dimensions,
and $\gamma$ and $m^2$ are appropriate constants.
In this way, 
telegrapher's equation can be seen as a kind of modified Klein--Gordon equation.
Indeed, we will show that this equation can be obtained by adding a dissipative term to the Klein--Gordon Lagrangian, 
and treating it as a 4-contact Lagrangian. 

\subsubsection*{The Klein--Gordon equation}

One of the most important equations in field theory, 
either classical or quantum,
is the so-called Klein--Gordon equation
\cite[p.\,108]{IZ},
which can be written
\begin{equation}\label{klein-gordon}
(\square + m^2) \phi = 0
\,.
\end{equation}
Here $\phi$ is a scalar field in Minkowski space 
and $m^2$ a constant parameter.
This equation derives from the Lagrangian
\begin{equation}
L = \frac12 (\partial\phi)^2 - \frac12 m^2 \phi^2
\,.
\end{equation}
This can be slightly generalized to include a potential,
$
L = \frac12 (\partial\phi)^2 - V(\phi)
$,
but we will stick ourselves to the simplest case.

Since this Lagrangian is autonomous
and the space-time is Minkowski space $\R^4$,
it can be described as a 4-symplectic field theory.
We will use space-time coordinates
$(x^0,x^1,x^2,x^3)$
and write $q$ the field variable
and $v_i = \partial q/\partial x^i$
its velocities.
With these notations the Lagrangian 
$L \colon \oplus^4 \T\R \to \R$
is
\begin{equation}
\label{klein-gordon-lagrangian}
L(q,v_0,v_1,v_2,v_3) = 
\frac{1}{2}\left(v_0^2 - v_1^2 - v_2^2 - v_3^2\right)
-
\frac{1}{2} m^2 q^2 
\end{equation}
and the Klein--Gordon equation reads as
$$ 
\parder{^2\phi}{(x^0)^2} - 
\parder{^2\phi}{(x^1)^2} -
\parder{^2\phi}{(x^2)^2} - 
\parder{^2\phi}{(x^4)^2} +
m^2\phi =
0
\,. 
$$

\subsubsection*{From the Klein--Gordon to telegrapher's equation}

Consider now the contactified Lagrangian 
$\L \colon \oplus^4 \T\R \times \R^4 \to \R$ 
given by
\begin{equation}
\label{klein-gordon-lagrangian-dis}
\L(q, v_\alpha, s^\alpha) = 
L(q, v_\alpha) + \gamma_\mu s^\mu = 
\frac{1}{2} \left(v_0^2 - v_1^2 - v_2^2 - v_3^2\right)
- \frac{1}{2} m^2 q^2 + \gamma_\mu s^\mu
\,,
\end{equation}
defined in the 4-contact manifold 
$\oplus^4\T\R\times\R^4$, 
where $L$ is the Klein--Gordon Lagrangian \eqref{klein-gordon-lagrangian} 
and 
$\gamma = (\gamma_\mu)\in\R^4$ is a constant vector.

Consider the extended unified bundle
$\W = \oplus^4\T\R\times_\R\oplus^4\T^\ast\R\times\R^4$
equipped with canonical coordinates 
$(q, v_1,v_2,v_3,v_4,p^1,p^2,p^3,p^4,s^1,s^2,s^3,s^4)$. 
In this extended bundle we have the coupling function
$$ 
\C = p^0 v_0 + p^1 v_1 + p^2v_2 + p^3v_3
\,, 
$$
the canonical forms
\begin{align*}
\Theta^0 &= p^0\d q\ , \qquad 
\eta^0 = \d s^0 - p^0\d q\ ,\qquad 
\Omega^0 = -\d\Theta^0 = \d q\wedge\d p^0 = \d\eta^0\,,
\\
\Theta^1 &= p^1\d q\ , \qquad 
\eta^1 = \d s^1 - p^1\d q\ ,\qquad 
\Omega^1 = -\d\Theta^1 = \d q\wedge\d p^1 = \d\eta^1\,, 
\\
\Theta^2 &= p^2\d q\ , \qquad 
\eta^2 = \d s^2 - p^2\d q\ ,\qquad 
\Omega^2 = -\d\Theta^2 = \d q\wedge\d p^2 = \d\eta^2\,, 
\\
\Theta^3 &= p^3\d q\ , \qquad 
\eta^3 = \d s^3 - p^3\d q\ ,\qquad 
\Omega^3 = -\d\Theta^3 = \d q\wedge\d p^3 = \d\eta^3\,. 
\end{align*}
With the Lagrangian function \eqref{klein-gordon-lagrangian-dis}, we can construct the Hamiltonian function $\H = \C - \L$
which, in coordinates, reads
$$ 
\H = 
p^\alpha v_\alpha -
\frac{1}{2} \left(v_0^2 - v_1^2 - v_2^2 - v_3^2\right)
+ \frac{1}{2} m^2 q^2 
- \gamma_\mu s^\mu
\,.
$$
Now to solve the Lagrangian--Hamiltonian problem for the 4-precontact Hamiltonian system 
$(\W,\eta^\alpha,\H)$ 
we have to find a 4-vector field 
$\mathbf{Z} = (Z_0,Z_1,Z_2,Z_3)$ in $\W$ 
which satisfies equations \eqref{k-contact-fields-SR}. 
The vector fields
$$ 
\Reeb_\alpha = \parder{}{s^\alpha}\,, 
$$
for $\alpha = 0,1,2,3$
are Reeb vector fields of $\W$. 
For our Hamiltonian function $\H$, we have
\begin{equation}
\label{right-hand-side}
\d\H - \Reeb_\alpha(\H)\eta^\alpha = 
(p^0 - v_0)\d v_0 + 
(p^1 + v_1)\d v_1 + 
(p^2 + v_2)\d v_2 + 
(p^3 + v_3)\d v_3 + 
v_\alpha\d p^\alpha + 
(m^2q - \gamma_\mu p^\mu)\d q
\,.
\end{equation}
Consider a 4-vector field 
$\mathbf{Z} = (Z_\alpha)$ 
in $\W$ with local expression
$$ 
Z_\alpha = 
f_\alpha\parder{}{q} + 
F_{\alpha\beta}\parder{}{v_\beta} + 
G_\alpha^\beta\parder{}{p^\beta} + 
g_\alpha^\beta\parder{}{s^\beta}\,. 
$$
Computing the left-hand side of the first equation in \eqref{k-contact-fields-SR}, 
we have
$$ 
i(Z_\alpha)\d\eta^\alpha = 
f_\alpha\d p^\alpha - G_\alpha^\alpha\d q\,, 
$$
and, equating with \eqref{right-hand-side}, we obtain the conditions
\begin{align}
G_\alpha^\alpha &= -m^2q + \gamma_\mu p^\mu & & \mbox{(coefficients in $\d q$)}\,,\label{1-kg}
\\
p^0 &= v_0 & & 
\mbox{(coefficients in $\d v_0$)}\,,\label{2-kg}
\\
p^1 &= -v_1 & & 
\mbox{(coefficients in $\d v_1$)}\,,\label{3-kg}
\\
p^2 &= -v_2 & & 
\mbox{(coefficients in $\d v_2$)}\,,\label{4-kg}
\\
p^3 &= -v_3 & & 
\mbox{(coefficients in $\d v_3$)}\,,\label{5-kg}
\\
f_\alpha &= v_\alpha & & 
\mbox{(coefficients in $\d p^\alpha$)}\,.\label{6-kg}
\end{align}
Notice that the last equation is the holonomy condition, which is recovered from the unified formalism.
Furthermore, the second equation in \eqref{k-contact-fields-SR} gives the condition
$$ g_\alpha^\alpha = \L\,. $$
Moreover, we have obtained the constraints
$$
	\xi_0 = p^0 - v_0 = 0\quad ,\quad
	\xi_1 = p^1 + v_1 = 0\quad ,\quad
	\xi_2 = p^2 + v_2 = 0\quad ,\quad
	\xi_3 = p^3 + v_3 = 0\,,
$$
defining the submanifold $\W_1\hookrightarrow\W$. If we impose the tangency of the 4-vector field $\mathbf{Z}$ to this submanifold, we obtain the conditions
\begin{align*}
	0 = Z_\alpha(\xi_0) = G_\alpha^0 - F_{\alpha 0}\quad &,\quad
	0 = Z_\alpha(\xi_1) = G_\alpha^1 + F_{\alpha 1}\,,\\
	0 = Z_\alpha(\xi_2) = G_\alpha^2 + F_{\alpha 2}\quad &,\quad
	0 = Z_\alpha(\xi_3) = G_\alpha^3 + F_{\alpha 3}\,.
\end{align*}
These conditions partially determine some of the arbitrary functions and no new constraints appear. 
Hence, the constraint algorithm finishes with the submanifold $\W_f = \W_1$ 
and gives the solutions 
$\mathbf{Z} = (Z_0,Z_1,Z_2,Z_3)$, where
\begin{align*}
	Z_0 =& \ v_0\parder{}{q} + \left(-m^2 q + \gamma_\mu p^\mu - G_1^1 - G_2^2 - G_3^3\right)\parder{}{v_0} - G_0^1\parder{}{v_1} - G_0^2\parder{}{v_2} + G_0^3\parder{}{v_3}\\
	& + \left(-m^2 q + \gamma_\mu p^\mu - G_1^1 - G_2^2 - G_3^3\right)\parder{}{p^0} + G_0^1\parder{}{p^1} + G_0^2\parder{}{p^2} + G_0^3\parder{}{p^3} \\
	& + \left(\L - g_1^1 - g_2^2 - g_3^3\right)\parder{}{s^0} + g_0^1\parder{}{s^1} + g_0^2\parder{}{s^2} + g_0^3\parder{}{s^3}\,,\\
	Z_1 =& \ v_1\parder{}{q} + G_1^0\parder{}{v_0} - G_1^1\parder{}{v_1} - G_1^2\parder{}{v_2} - G_1^3\parder{}{v_3} + G_1^0\parder{}{p^0} + G_1^1\parder{}{p^1} + G_1^2\parder{}{p^2} + G_1^3\parder{}{p^3} \\
	& + g_1^0\parder{}{s^0} + g_1^1\parder{}{s^1} + g_1^2\parder{}{s^2} + g_1^3\parder{}{s^3}\,,
\end{align*}	
\begin{align*}
	Z_2 =& \ v_2\parder{}{q} + G_2^0\parder{}{v_0} - G_2^1\parder{}{v_1} - G_2^2\parder{}{v_2} - G_2^3\parder{}{v_3} + G_2^0\parder{}{p^0} + G_2^1\parder{}{p^1} + G_2^2\parder{}{p^2} + G_2^3\parder{}{p^3} \\
	& + g_2^0\parder{}{s^0} + g_2^1\parder{}{s^1} + g_2^2\parder{}{s^2} + g_2^3\parder{}{s^3}\,,\\
	Z_3 =& \ v_3\parder{}{q} + G_3^0\parder{}{v_0} - G_3^1\parder{}{v_1} - G_3^2\parder{}{v_2} - G_3^3\parder{}{v_3} + G_3^0\parder{}{p^0} + G_3^1\parder{}{p^1} + G_3^2\parder{}{p^2} + G_3^3\parder{}{p^3} \\
	& + g_3^0\parder{}{s^0} + g_3^1\parder{}{s^1} + g_3^2\parder{}{s^2} + g_3^3\parder{}{s^3}\,,
\end{align*}
where $G_\alpha^\beta,g_\alpha^\beta$, for $(\alpha,\beta)\in\left(\{0,1,2,3\}\times\{0,1,2,3\}\right)\setminus\{(0,0)\}$, are arbitrary functions.


Now we can project onto each factor of the manifold $\W$ using the projections $\rho_1,\rho_2$ to recover the Lagrangian and Hamiltonian formalisms. In the Lagrangian formalism we obtain the holonomic 4-vector field $\mathbf{X} = (X_0,X_1,X_2,X_3)$ given by
\begin{align*}
	X_0 =& \ v_0\parder{}{q} + \left( -m^2 q + \gamma_0v_0 - \gamma_1v_1 - \gamma_2v_2 - \gamma_3v_3 + F_1^1 + F_2^2 + F_3^3 \right)\parder{}{v_0}  \\
	&+ F_{01}\parder{}{v_1} + F_{02}\parder{}{v_2} + F_{03}\parder{}{v_3}
	 + \left( \L - g_1^1 - g_2^2 - g_3^3 \right)\parder{}{s^0} + g_0^1\parder{}{s^1} + g_0^2\parder{}{s^2} + g_0^3\parder{}{s^3}\,,\\
	X_1 =& \ v_1\parder{}{q} + F_{10}\parder{}{v_0} + F_{11}\parder{}{v_1} + F_{12}\parder{}{v_2} + F_{13}\parder{}{v_3} + g_1^0\parder{}{s^0} + g_1^1\parder{}{s^1} + g_1^2\parder{}{s^2} + g_1^3\parder{}{s^3}\,,\\
	X_2 =& \ v_2\parder{}{q} + F_{20}\parder{}{v_0} + F_{21}\parder{}{v_1} + F_{22}\parder{}{v_2} + F_{23}\parder{}{v_3} + g_2^0\parder{}{s^0} + g_2^1\parder{}{s^1} + g_2^2\parder{}{s^2} + g_2^3\parder{}{s^3}\,,\\
	X_3 =& \ v_3\parder{}{q} + F_{30}\parder{}{v_0} + F_{31}\parder{}{v_1} + F_{32}\parder{}{v_2} + F_{33}\parder{}{v_3} + g_3^0\parder{}{s^0} + g_3^1\parder{}{s^1} + g_3^2\parder{}{s^2} + g_3^3\parder{}{s^3}\,,
\end{align*}
where $F_\alpha^\beta,g_\alpha^\beta$ for $(\alpha,\beta)\in\left(\{0,1,2,3\}\times\{0,1,2,3\}\right)\setminus\{(0,0)\}$, are arbitrary functions. In the Hamiltonian counterpart, we get the Hamiltonian 4-vector field $\mathbf{Y} = (Y_0,Y_1,Y_2,Y_3)$ given by
\begin{align*}
	Y_0 =& \ v_0\parder{}{q} + \left(-m^2q + \gamma_\mu p^\mu - G_1^1 - G_2^2 - G_3^3\right)\parder{}{p^0} + G_0^1\parder{}{p^1} + G_0^2\parder{}{p^2} + G_0^3\parder{}{p^3} \\
	& + \left(\L - g_2^2 - g_3^3 - g_4^4\right)\parder{}{s^0} + g_0^1\parder{}{s^1} + g_0^2\parder{}{s^2} + g_0^3\parder{}{s^3}\,,\\
	Y_1 =& \ v_1\parder{}{q} + G_1^0\parder{}{p^0} + G_1^1\parder{}{p^1} + G_1^2\parder{}{p^2} + G_1^3\parder{}{p^3} + g_1^0\parder{}{s^0} + g_1^1\parder{}{s^1} + g_1^2\parder{}{s^2} + g_1^3\parder{}{s^3}\,,\\
	Y_2 =& \ v_2\parder{}{q} + G_2^0\parder{}{p^0} + G_2^1\parder{}{p^1} + G_2^2\parder{}{p^2} + G_2^3\parder{}{p^3} + g_2^0\parder{}{s^0} + g_2^1\parder{}{s^1} + g_2^2\parder{}{s^2} + g_2^3\parder{}{s^3}\,,\\
	Y_3 =& \ v_3\parder{}{q} + G_3^0\parder{}{p^0} + G_3^1\parder{}{p^1} + G_3^2\parder{}{p^2} + G_3^3\parder{}{p^3} + g_3^0\parder{}{s^0} + g_3^1\parder{}{s^1} + g_3^2\parder{}{s^2} + g_3^3\parder{}{s^3}\,,
\end{align*}
where the functions $G_\alpha^\beta,g_\alpha^\beta$ with $(\alpha,\beta)\in\left(\{0,1,2,3\}\times\{0,1,2,3\}\right)\setminus\{(0,0)\}$ are arbitrary.ma

Notice that conditions \eqref{1-kg}, \eqref{2-kg}, \eqref{3-kg}, \eqref{4-kg}, \eqref{5-kg} and \eqref{6-kg} lead to the equation
$$ \left( \square + m^2 - \gamma_0\parder{}{x^0} + \gamma_1\parder{}{x^1} + \gamma_2\parder{}{x^2} + \gamma_3\parder{}{x^3}\right)\phi = 0\,, $$
which represents a ``damped'' Klein--Gordon equation. Obviously, for $\gamma_\mu = 0$, we recover the Klein--Gordon equation \eqref{klein-gordon}. An important particular case is $\gamma_\mu = (-\gamma, 0, 0, 0)$. In this case, we obtain the telegrapher's equation
$$ \square\phi + \gamma \parder{\phi}{x^0} + m^2\phi = 0 $$
as a particular case of the ``damped'' Klein--Gordon equation.

\subsection{Dissipative Maxwell equations and damped electromagnetic waves}

The behaviour of the electromagnetic field in vacuum is described by Maxwell's equations 
\cite[p.\,2]{Jackson}:
\begin{align}
	& \nabla\cdot E = \frac{\rho}{\epsilon_0}\label{maxwell-1}\,,\\
	& \nabla\cdot B = 0\label{maxwell-2}\,,\\
	& \nabla\times E = -\parder{B}{t}\label{maxwell-3}\,,\\
	& \nabla\times B = \mu_0 J + \mu_0\epsilon_0\parder{E}{t}\label{maxwell-4}\,,
\end{align}

It is well known that we can rewrite Maxwell's equations in the Minkowski Space $\mathbb{M}$ equipped with the 
Minkowski metric $g_{\mu\nu}$,
by defining the electromagnetic tensor $F_{\mu\nu}$ given by
$$ 
F_{\mu\nu} = \parder{A_\nu}{x^\mu} - \parder{A_\mu}{x^\nu} = \partial_\mu A_\nu - \partial_\nu A_\mu = A_{\nu,\,\mu} - A_{\mu,\,\nu}\,, 
$$
where 
$A^\mu = \left(\frac{\phi}{c},A_1,A_2,A_3\right)$ 
is the electromagnetic 4-potential. 
We can also define de current 4-vector as $\mathcal{J}^\mu = (c\rho,J)$. 
With these objects, the first pair of Maxwell's equations \eqref{maxwell-1} and \eqref{maxwell-4} are written as
\begin{equation}\label{maxwell-1st-pair}
	\partial_\mu F^{\mu\nu} = \mu_0\mathcal{J}^\mu\,,
\end{equation}
while the second pair of Maxwell's equations \eqref{maxwell-2} and \eqref{maxwell-3} become
\begin{equation}\label{maxwell-2nd-pair}
	\partial_\alpha F_{\mu\nu} + \partial_\mu F_{\nu\alpha} + \partial_\nu F_{\alpha\mu} = 0\,,
\end{equation}
also known as Bianchi identity. Equations \eqref{maxwell-2nd-pair} are a direct consequence of the definition of $F_{\mu\nu}$, while the first pair of Maxwell's equations \eqref{maxwell-1st-pair} can be obtained as the Euler--Lagrange equations for the Lagrangian
\begin{equation*}
	L = -\frac{1}{4\mu_0}F_{\mu\nu}F^{\mu\nu} - A_\mu\mathcal{J}^\mu\,.
\end{equation*}
From now on, we are going to consider Maxwell's equations without charges and currents ($\mathcal{J}^\mu = 0$),
\begin{gather*}
	\partial_\mu F^{\mu\nu} = 0\,.
\end{gather*}

\subsubsection*{Unified formalism}

Now we are going to develop the unified formalism for the Lagrangian with dissipation \cite{GasMar21}
\begin{equation}\label{lag-maxwell-dis}
	\L = -\frac{1}{4\mu_0}F_{\mu\nu}F^{\mu\nu} - \gamma_\alpha s^\alpha\,,
\end{equation}
defined on the manifold $\oplus^4\T \R^4\times \R^4$ equipped with coordinates $(A_\mu,A_{\mu,\,\nu}\,; s^\alpha)$, where $\mu,\nu,\alpha = 0,1,2,3$ and $\gamma_\alpha = (\gamma_0, \bm{\gamma})$ is a constant 4-vector.

We begin by considering the unified bundle
$$ \W = \oplus^4\T\R^4\times_{\R^4}\oplus^4\T^\ast\R^4\times\R^4\,, $$
equipped with natural coordinates $(A_\mu, A_{\mu,\,\nu}, P^{\mu,\,\nu}, s^\alpha)$. We have the coupling function
$$ \C = P^{\mu,\,\nu} A_{\mu,\,\nu}\,, $$
the canonical forms
$$
	\Theta^\alpha = P^{\mu,\,\alpha}\d A_\mu\quad ,\quad
	\Omega^\alpha = -\d\Theta^\alpha = \d A_\mu\wedge\d P^{\mu,\,\alpha}\,,
$$
and the contact forms
\begin{align*}
	\eta^\alpha = \d s^\alpha - P^{\mu,\,\alpha}\d A_\mu\,.
\end{align*}
Using the Lagrangian \eqref{lag-maxwell-dis}, we define the Hamiltonian function
$$ \H = \C - \L = P^{\mu,\,\nu} A_{\mu,\,\nu} + \frac{1}{4\mu_0}F_{\mu\nu}F^{\mu\nu} + \gamma_\alpha s^\alpha\,. $$
It is easy to check that the vector fields $\Reeb_\alpha = \dparder{}{s^\alpha}$ are Reeb vector fields of $\W$. 
To solve the Lagrangian--Hamiltonian problem for the 4-precontact system $(\W,\eta^\alpha, \H)$ means to find a 4-vector field $\mathbf{Z} = (Z_0,Z_1,Z_2,Z_3)\in\X^4(\W)$ satisfying equations \eqref{k-contact-fields-SR}. We have that
$$ \d\H - \Reeb_\alpha(\H)\eta^\alpha = \left( P^{\mu,\,\nu} - \frac{1}{\mu_0}F^{\mu\nu} \right)\d A_{\mu,\,\nu} + A_{\mu,\,\nu}\d P^{\mu,\,\nu} - \gamma_\alpha P^{\mu,\,\alpha}\d A_\mu\,. $$
Then, consider a 4-vector field $\mathbf{Z} = (Z_0,Z_1,Z_2,Z_3)$ in $\W$ with local expression
$$ Z_\alpha = (Z_\alpha)_\mu\parder{}{A_\mu} + (Z_\alpha)_{\mu\beta}\parder{}{A_{\mu,\,\beta}} + (Z_\alpha)^{\mu\beta}\parder{}{P^{\mu,\,\beta}} + (Z_\alpha)^\beta\parder{}{s^\beta}\,. $$
For this vector field, we have
\begin{align*}
	i(Z_\alpha)\d\eta^\alpha &= (Z_\alpha)_\mu\d P^{\mu,\,\alpha} - (Z_\alpha)^{\mu,\,\alpha}\d A_\mu\,,\\
	i(Z_\alpha)\eta^\alpha &= (Z_\alpha)^\alpha - P^{\mu,\,\alpha}(Z_\alpha)_\mu\,,
\end{align*}
and thus the first equation in \eqref{k-contact-fields-SR} gives the conditions
\begin{align}
	(Z_\alpha)^{\mu\alpha} &= -\gamma_\alpha P^{\mu,\,\alpha} & & \mbox{(coefficients in $\d A_\mu$)}\,,\label{max-cond-1}\\
	P^{\mu,\,\nu} &= \frac{1}{\mu_0}F^{\mu\nu} & & \mbox{(coefficients in $\d A_{\mu,\,\nu}$)}\,,\label{max-cond-2}\\
	A_{\mu,\,\alpha} &= (Z_\alpha)_\mu & & \mbox{(coefficients in $\d P_{\mu,\,\alpha}$)}\,.\label{max-cond-3}
\end{align}
Furthermore, the second equation in \eqref{k-contact-fields-SR} gives
$$ (Z_\alpha)^\alpha = P^{\mu,\,\alpha}\left( (Z_\alpha)_\mu - A_{\mu,\,\alpha} \right) + \L\,, $$
and hence, using \eqref{max-cond-3},
$$ (Z_\alpha)^\alpha = \L\,. $$
We have obtained the constraint functions
$$ \xi^{\mu\nu} = P^{\mu,\,\nu} - \frac{1}{\mu_0}F^{\mu\nu}\,, $$
defining a submanifold $\W_1\hookrightarrow\W$. Now we have to impose the tangecy of the 4-vector field $\mathbf{Z}$ to this submanifold $\W_1$:
\begin{align*}
	0 &= Z_\alpha(\xi^{\mu\nu})
	= Z_\alpha\left(P^{\mu,\,\nu} - \frac{1}{\mu_0}F^{\mu\nu}\right)
	= (Z_\alpha)^{\mu\nu} - \frac{1}{\mu_0}\parder{F^{\mu\nu}}{A_{\tau\beta}}(Z_\alpha)_{\tau\beta}\\
	&= (Z_\alpha)^{\mu\nu} - \frac{1}{\mu_0}\left( g^{\mu\tau}g^{\nu\beta} - g^{\mu\beta}g^{\nu\tau} \right)(Z_\alpha)_{\tau\beta}
\end{align*}
which partially determine some of the coefficients of the 4-vector field $\mathbf{Z}$. Notice that no new constraints appear and hence the constraint algorithm ends with the submanifold $\W_f = \W_1$ and gives the solutions $\mathbf{Z} = (Z_0,Z_1,Z_2,Z_3)$, where
\begin{align*}
	Z_\alpha &= A_{\mu,\,\alpha}\parder{}{A_\mu} + (Z_\alpha)_{\mu\nu}\parder{}{A_{\mu,\,\nu}} + (Z_\alpha)^{\mu\nu}\parder{}{P^{\mu,\,\nu}} + (Z_\alpha)^\beta\parder{}{s^\beta}\,,
\end{align*}
satisfying the conditions
\begin{equation*}
	\begin{dcases}
		(Z_\alpha)^\alpha = \L\,,\\
		(Z_\alpha)^{\mu\alpha} = -\gamma_\alpha P^{\mu,\,\alpha}\,,\\
		(Z_\alpha)^{\mu\nu} = \frac{1}{\mu_0}\left( g^{\mu\tau}g^{\nu\beta} - g^{\mu\beta}g^{\nu\tau} \right)(Z_\alpha)_{\tau\beta}\,.
	\end{dcases}
\end{equation*}

\subsubsection*{4-contact Maxwell equations and damped electromagnetic waves}

Notice that, combining equations \eqref{max-cond-1} and \eqref{max-cond-2}, we obtain
$$ \partial_\alpha F^{\alpha\mu} = -\gamma_\alpha F^{\alpha\mu}\,, $$
which is the dissipative version of the first pair of Maxwell's equations. Together with the Bianchi identity \eqref{maxwell-2nd-pair}, we can write the 4-contact Maxwell's equations without charges and currents:
\begin{align}
	& \nabla\cdot E = -\bm{\gamma}\cdot E\,\label{maxwell-damped-1}\\
	& \nabla\cdot B = 0\,\label{maxwell-damped-2}\\
	& \nabla\times E = -\parder{B}{t}\,\label{maxwell-damped-3}\\
	& \nabla\times B = \mu_0\epsilon_0\parder{E}{t} - \bm{\gamma}\times B + \frac{\gamma_0}{c}E\,.\label{maxwell-damped-4}
\end{align}

Applying the curl operator to the third and fourth equations \eqref{maxwell-damped-3}, \eqref{maxwell-damped-4}, we get
\begin{align*}
	\mu_0\epsilon_0\parder{^2E}{t^2} - \nabla^2E + \frac{\gamma_0}{c}\parder{E}{t} = -\bm{\gamma}\times(\nabla\times E)\,,\\
	\mu_0\epsilon_0\parder{^2B}{t^2} - \nabla^2B + \frac{\gamma_0}{c}\parder{B}{t} = -\nabla\times(\bm{\gamma}\times B)\,.
\end{align*}
Taking $\gamma_\mu = (\gamma_0,\bm{0})$, we obtain
\begin{align*}
	\parder{^2E}{t^2} - c^2\nabla^2E + c\gamma_0\parder{E}{t} = 0\,,\\
	\parder{^2B}{t^2} - c^2\nabla^2B + c\gamma_0\parder{B}{t} = 0\,,
\end{align*}
which are the 3-dimensional analogues of the damped wave equation \eqref{damped-wave} studied in example \ref{exmpl-wave}.

\section{Conclusions and outlook}

We have developed the Skinner--Rusk or unified formalism
for classical field theories with dissipation.
For this, we have started from the geometrical 
Lagrangian and Hamiltonian $k$-contact formalisms previously
introduced \cite{GGMRR-2019,GGMRR-2020} as a generalization of the corresponding 
Lagrangian and Hamiltonian formalisms in contact mechanics
\cite{LGLMR-2021,LGMMR-2020}, and from
the unified formalism for contact mechanics \cite{LGMMR-2020}
and for the $k$-symplectic formulation of classical field theories
\cite{Rey2004}.

The Skinner--Rusk formalism takes place in the so-called Pontryagin bundle
${\cal W}=\oplus^k\T Q\times_Q\oplus^k\T^\ast Q\times\R^k$.
This formalism allows to work comfortably with the field equations,
which are stated in ${\cal W}$,
especially in the case of singular systems.
In particular, the second-order or holonomy condition is
incorporated in a natural way to the solutions
to the equations.
In any case, these equations are not consistent and
the Legendre map is obtained as a first consequence
of the constraint algorithm (from the consistency conditions).
If the Lagrangian describing the system is regular,
the tangency condition in the algorithm leads to the Euler--Lagrange equations and,
using the Legendre map, the Hamilton--de Donder--Weyl equations are obtained.
In this case, no more constraints appear and the final constraint submanifold
is the graph of the Legendre map.
In the singular case, new constraints defining new submanifolds 
can arise as a consequence of the tangency condition.
Once the final constraint submanifold is achieved (when it exists)
the Lagrangian and the Hamiltonian formalisms
(including the field equations, their solutions, and the
constraint submanifolds obtained in the corresponding constraint
algorithms) are obtained by projecting the results
of the Skinner--Rusk formalism in the Pontryagin bundle onto the bundles 
$\oplus^k\T Q\times\R^k$ and $\oplus^k\T^\ast Q\times\R^k$.

We have analyzed three examples.
In all of them we have modified the standard Lagrangians of each system (without dissipation)
by adding a linear term on the extra coordinates of $\R^k$ (the ``dissipation variables'').
The first one is a well-known case, the $1$-dimensional wave equation (vibrating string)
with damping, for which, from the contact field equations, we obtain the classical equation
of this system.
The second one is a very interesting example since, 
after modifying the Klein--Gordon Lagrangian
with an appropriate damping term,
we obtain the telegrapher's equation.
Finally, in the third example, we have modified the 
classical Maxwell Lagrangian in vacuum (without charges and currents)
and, as a final result, we have obtained the equation of electromagnetic waves
with a dissipation term which is similar
to the one in the damped vibrating string equation.

There are some other examples where our formalism could be applied.
In particular, 
it could be interesting to modify the classical Lagrangian of general relativity in the Einstein--Palatini approach
and find physical consequences of the
modified Einstein's equations so obtained.

\subsection*{Acknowledgments}

We acknowledge the financial support from the Spanish Ministerio de Ciencia, Innovaci\'on y Universidades project PGC2018-098265-B-C33 and the Secretary of University and Research of the Ministry of Business and Knowledge of the Catalan Government project 2017--SGR--932.

\bibliographystyle{abbrv}

\begin{thebibliography}{99}
\itemsep 0pt plus 1pt
\small


\bibitem{Banyaga2016}
A.~Banyaga and D.~F. Houenou,
\newblock {\em {A brief introduction to symplectic and contact manifolds}},
\newblock World Scientific, Singapore, 2016.

\bibitem{BEMMR-2007}
{\rm M. Barbero-Li\~n\'an, A. Echeverr\'\i a-Enr\'\i quez,
D. Mart\' \i n de Diego, M.C. Mu\~noz-Lecanda, and N. Rom\'an-Roy},
``Skinner--Rusk unified formalism for optimal control problems and applications'',
{\sl J. Phys. A: Math. Theor.} {\bf 40}(40) (2007) 12071--12093.
(\url{https://doi.org/10.1088/1751-8113/40/40/005}).

\bibitem{BEMMR-2008}
\newblock M. Barbero-Li\~n\'an, A. Echeverr\'\i a-Enr\'\i quez, D. Mart\'\i n de Diego, M.C. Mu\~noz-Lecanda, and N. Rom\'an-Roy,
``Unified formalism for non-autonomous mechanical systems'',
{\sl J. Math. Phys.} \textbf{49}(6) (2008) 062902.
(\url{https://doi.org/10.1063/1.2929668}).

\bibitem{Bravetti2017}
A.~Bravetti,
\newblock {``Contact Hamiltonian dynamics: the concept and its use''},
\newblock {\sl Entropy} {\bf 19}(10) (2017) 535.
(\url{https://doi.org/10.3390/e19100535}).

\bibitem{Bravetti2018}
A.~Bravetti,
\newblock {``Contact geometry and thermodynamics''},
\newblock {\sl Int. J. Geom. Methods Mod. Phys.} {\bf 16}(supp01) (2018) 1940003.
(\url{https://doi.org/10.1142/S0219887819400036}).

\bibitem{BCT-2017}
A.~Bravetti, H.~Cruz, and D.~Tapias,
\newblock {``Contact Hamiltonian mechanics''},
\newblock {\sl Ann. Phys. (N.Y.)} {\bf 376} (2017) 17--39.
(\url{https://doi.org/10.1016/j.aop.2016.11.003}).

\bibitem{art:Campos_DeLeon_Martin_Vankerschaver09}
C.M. Campos, M. de Le\'on, D. Mart\'in de Diego, and J. Vankerschaver, 
``Unambiguous formalism for higher order Lagrangian field theories'', 
{\sl J. Phys. A: Math. Theor.} {\bf 42}(47) (2009) 475207.
(\url{https://doi.org/10.1088/1751-8113/42/47/475207}).

\bibitem{CMC-2002}
{\rm  F. Cantrijn, J. Cort\'es, and S. Mart\'\i nez}, 
``Skinner--Rusk approach to time-dependent mechanics'', 
{\sl Phys. Lett. A} {\bf 300}(2--3) (2002) 250-258.
(\url{https://doi.org/10.1016/S0375-9601(02)00777-6}).

\bibitem{CG-2019}
J.~Cari{\~{n}}ena and P.~Guha,
\newblock ``Nonstandard Hamiltonian structures of the Li\'enard equation
and contact geometry'',
\newblock {\sl Int. J. Geom. Meth. Mod. Phys.} {\bf 16}(supp 01) (2019) 1940001.
(\url{https://doi.org/10.1142/S0219887819400012}).

\bibitem{CCM-2018}
F.~M. Ciaglia, H.~Cruz, and G.~Marmo,
\newblock {``Contact manifolds and dissipation, classical and quantum''},
\newblock {\sl Ann. Phys. (N.Y.)} {\bf 398} (2018) 159--179.
(\url{https://doi.org/10.1016/j.aop.2018.09.012}).

\bibitem{art:Colombo_Martin_Zuccalli10}
\newblock L. Colombo, D. Mart\'\i n de Diego, and M. Zuccalli,
``Optimal control of underactuated mechanical systems: a geometric approach'',
{\sl J. Math. Phys.} \textbf{51}(8) (2010) 083519.
(\url{https://doi.org/10.1063/1.3456158}).

\bibitem{CLMM-2002}
{\rm J. Cort\'es, M. de Le\'on, D. Mart\'\i n de Diego, and S. Mart\'\i nez}, 
``Geometric description of vakonomic and nonholonomic dynamics. Comparison of solutions'', 
{\sl SIAM J. Control Opt.} {\bf 41}(5) (2002) 1389--1412.
(\url{https://doi.org/10.1137/S036301290036817X}).

\bibitem{LGLMR-2021}
M.~de~Le{\'{o}}n, J. Gaset, M.~Lainz-Valc{\'{a}}zar, M.C. Mu\~noz-Lecanda, and N. Rom\'an-Roy,
``Higher-order contact mechanics'',
{\sl Ann. Phys.} {\bf 425} (2021) 168396.
(\url{https://doi.org/10.1016/j.aop.2021.168396}).

\bibitem{LGMMR-2020}
M.~de~Le{\'{o}}n, J. Gaset, M.~Lainz-Valc{\'{a}}zar, X. Rivas, and N. Rom\'an-Roy,
``Unified Lagrangian-Hamiltonian formalism for contact systems'',
\newblock {\em Fortsch. Phys.} {\bf 68}(8) (2020) 2000045.
(\url{https://doi.org/10.1002/prop.202000045}). 

\bibitem{DeLeon2019}
M.~de~Le{\'{o}}n and M.~Lainz-Valc{\'{a}}zar.
\newblock {``Singular Lagrangians and precontact Hamiltonian Systems''},
\newblock {\sl Int. J. Geom. Meth. Mod. Phys.},  {\bf 16}(10) (2019) 1950158.
(\url{https://doi.org/10.1142/S0219887819501585}).

\bibitem{LMM-2003}
M. de Le\'on, J.C. Marrero, and D. Mart\'\i n de Diego,
``A new geometrical setting for classical field theories'', 
{\sl Classical and Quantum Integrability}. 
Banach Center Pub. {\bf 59},
Inst. of Math., Polish Acad. Sci., Warsawa (2003) 189--209.
(\url{https://doi.org/10.4064/bc59-0-10}).

\bibitem{DeLeon2016b}
M.~de~Le{\'{o}}n and C.~Sard{\'{o}}n,
\newblock {``Cosymplectic and contact structures to resolve time-dependent and
  dissipative Hamiltonian systems''},
\newblock {\em J. Phys. A: Math. Theor.} {\bf 50}(25) (2017) 255205.
(\url{https://doi.org/10.1088/1751-8121/aa711d}).

\bibitem{DeLeon2015}
M.~de~Le{\'{o}}n, M.~Salgado, and S.~Vilari{\~{n}}o,
\newblock {\em {Methods of Differential Geometry in Classical Field Theories}},
\newblock World Scientific, 2015.

\bibitem{ELMMR-04}
A. Echeverr\'\i a-Enr\'\i quez, C. L\'opez, J. Mar\'in-Solano, M.C. Mu\~noz-Lecanda,
and N. Rom\'an-Roy,
``Lagrangian-Hamiltonian unified formalism for field theory'',
{\sl J. Math. Phys.} {\bf 45}(1) (2004) 360--385.
(\url{https://doi.org/10.1063/1.1628384}).

\bibitem{GGMRR-2019}
J.~{Gaset}, X.~{Gr\`acia}, M.~{Mu\~noz-Lecanda}, X.~{Rivas},
  N.~{Rom\'an-Roy},
``A contact geometry framework for field theories with dissipation'',
\newblock {\sl Ann. Phys.} {\bf 414} (2020) 168092.
(\url{https://doi.org/10.1016/j.aop.2020.168092}).

\bibitem{GGMRR-2019b}
J.~{Gaset}, X.~{Gr\`acia}, M.~{Mu\~noz-Lecanda}, X.~{Rivas},
 and N.~{Rom\'an-Roy},
``New contributions to the Hamiltonian and Lagrangian contact formalisms for dissipative mechanical systems and their symmetries'',
{\sl Int. J. Geom. Meth. Mod. Phys.} {\bf 17}(6) (2020) 2050090.
(\url{https://doi.org/10.1142/S0219887820500905}).

\bibitem{GGMRR-2020}
J.~{Gaset}, X.~{Gr\`acia}, M.~{Mu\~noz-Lecanda}, X.~{Rivas}, and N.~{Rom\'an-Roy}.
``A $k$-contact Lagrangian formulation for nonconservative field theories'',
{\sl Rep. Math. Phys.} {\bf 87}(3) (2021) 347--368.
(\url{https://doi.org/10.1016/S0034-4877(21)00041-0}).

\bibitem{GasMar21}
J. Gaset, A. Marín-Salvador, 
``Application of Herglotz's variational principle to electromagnetic systems with dissipation'', 
preprint \url{https://arxiv.org/abs/2108.07542}.

\bibitem{GR-2018}
 J. Gaset and N. Rom\'an-Roy, 
``Multisymplectic unified formalism for Einstein--Hilbert Gravity''.
{\sl J. Math. Phys.} {\bf 59}(3) (2018) (2018) 032502.
(\url{https://doi.org/10.1063/1.4998526}).

\bibitem{Geiges2008}
H.~Geiges,
\newblock {\em {An Introduction to Contact Topology}},
\newblock Cambridge University Press, Cambridge, 2008.

\bibitem{Goto-2016}
S. Goto,
\newblock ``Contact geometric descriptions of vector fields on dually flat spaces
  and their applications in electric circuit models and nonequilibrium
  statistical mechanics'',
\newblock {\sl J. Math. Phys.} {\bf 57}(10) (2016) 102702.
(\url{https://doi.org/10.1063/1.4964751}).

\bibitem{GM-05}
{\rm X. Gr\`acia and R. Mart\'\i n}, 
``Geometric aspects of time-dependent singular differential equations'', 
{\sl Int. J. Geom. Meth. Mod. Phys.} {\bf 2}(4) (2005) 597--618.
(\url{https://doi.org/10.1142/S0219887805000697}).

\bibitem{art:Gracia_Pons_Roman91}
\newblock X. Gr\`acia, J.M. Pons, and N. Rom\'an-Roy,
``Higher-order Lagrangian systems: Geometric structures, dynamics and constraints'',
{\sl J. Math. Phys.} \textbf{32}(10) (1991) 2744--2763.
(\url{https://doi.org/10.1063/1.529066}).


\bibitem{HaBu}
W.H. Hayt Jr. and J.A. Buck, 
{\it Engineering electromagnetics} (6th ed.),
McGraw-Hill, New York, 2018. 

\bibitem{IZ}
C. Itzykson and J.-B. Zuber,
{\it Quantum field theory},
McGraw-Hill, New York, 1980.

\bibitem{Jackson}
J.D. Jackson,
{\it Classical electrodynamics}, 3rd ed.,
Wiley, 1999.

\bibitem{Ka-82}
K. Kamimura,
``Singular Lagrangian and constrained Hamiltonian systems, generalized canonical formalism'',
{\sl Nuovo Cim. B} {\bf 68}(1) (1982) 33--54. 
(\url{https://doi.org/10.1007/bf02888859 }).

\bibitem{Kholodenko2013}
A.~L. Kholodenko,
\newblock {\em {Applications of Contact Geometry and Topology in Physics}},
\newblock World Scientific, 2013.

\bibitem{Lainz2018}
M.~Lainz-Valc{\'{a}}zar and M.~de~Le{\'{o}}n,
\newblock {``Contact Hamiltonian systems''},
\newblock {\sl J. Math. Phys.} {\bf 60}(10) (2019) 102902.
(\url{https://doi.org/10.1063/1.5096475}).

\bibitem{LIU2018}
Q.~Liu, P.J. Torres, and C.~Wang,
\newblock ``Contact Hamiltonian dynamics: variational principles, invariants,
  completeness and periodic behavior'',
\newblock {\sl Ann. Phys.} {\bf 395} (2018) 26--44.
(\url{https://doi.org/10.1016/j.aop.2018.04.035}).


\bibitem{art:Prieto_Roman11}
P.D. Prieto-Mart\'\i nez and N. Rom\'an-Roy, 
\newblock ``Lagrangian-Hamiltonian unified formalism for autonomous higher-order dynamical systems'',
\newblock {\sl J. Phys. A: Math. Theor} \textbf{44}(38) (2011) 385203. 
 (\url{https://doi.org/10.1088/1751-8113/44/38/385203}).

\bibitem{art:Prieto_Roman12_1}
P.D. Prieto-Mart\'\i nez and N. Rom\'an-Roy, 
\newblock ``Unified formalism for higher-order non-autonomous dynamical systems'',
\newblock {\sl J. Math. Phys.} \textbf{53}(3) (2012) 032901.
 (\url{https://doi.org/10.1063/1.3692326}).

\bibitem{PR-2015}
P.D. Prieto-Mart\'inez and N. Rom\'an-Roy,
``A new multisymplectic unified formalism for second order classical field theories'',
{\sl  J. Geom. Mech.} {\bf 7}(2) (2015) 203--253. 
(\url{https://doi.org/10.3934/jgm.2015.7.203}).

\bibitem{RMS-2017}
H.~Ram\'irez, B.~Maschke, and D.~Sbarbaro,
\newblock ``Partial stabilization of input-output contact systems on a {L}egendre
  submanifold'',
\newblock {\sl IEEE Trans. Automat. Control} {\bf 62}(3) (2017) 1431--1437.
(\url{https://doi.org/10.1109/TAC.2016.2572403}).

\bibitem{Rey2004}
A.M. Rey, N. Rom\'{a}n-Roy, and M. Salgado,
``G\"{u}nther's formalism ($k$-symplectic formalism) in classical
field theory: Skinner--Rusk approach and the evolution operator'',
{\sl J. Math. Phys.} {\bf 46}(5) (2005) 052901.
(\url{https://doi.org/10.1063/1.1876872}).

\bibitem{RRSV-2011}
A.M. Rey, N. Rom\'an-Roy, M. Salgado, and S. Vilari\~no,
\newblock ``k-cosymplectic classical field theories: Tulckzyjew and Skinner--Rusk formulations'',
\newblock {\sl Math. Phys. Anal. Geom.} \textbf{15}(2): 85--119, 2012.
(\url{https://doi.org/10.1007/s11040-012-9104-z}).
 
\bibitem{Salsa} 
S. Salsa, 
{\it Partial differential equations in action} 
(3rd ed.), Springer, Switzerland, 2016.
(\url{https://doi.org/10.1007/978-3-319-15093-2}).

\bibitem{SR-83}
R. Skinner, R. Rusk,
``Generalized Hamiltonian dynamics I: Formulation on $T^*Q\otimes TQ$'', 
{\sl J. Math. Phys.} {\bf 24}(11) (1983) 2589-2594.
(\url{https://doi.org/10.1063/1.525654}).

\bibitem{art:Vitagliano10}
\newblock L. Vitagliano, 
``The Lagrangian-Hamiltonian formalism for higher order field theories'',
{\sl J. Geom. Phys.} \textbf{60}(6--8) (2010) 857--873.
\newblock (\url{https://doi.org/10.1016/j.geomphys.2010.02.003}).



\end{thebibliography}
\addcontentsline{toc}{section}{References}

\end{document}